\documentclass[journal, letter, 12pt, onecolumn, draftclsnofoot]{IEEEtran}

\usepackage[utf8]{inputenc} %
\usepackage[T1]{fontenc}    %
\usepackage{hyperref}       %
\usepackage{url}            %
\usepackage{booktabs}       %
\usepackage{nicefrac}       %
\usepackage{microtype}      %
\usepackage{amssymb,amsfonts,amsmath,amsthm,amscd,dsfont,mathrsfs,bbm}
\usepackage{graphicx,float,psfrag,pstcol,epsfig,color,subcaption}%
\usepackage{psfrag}
\usepackage{pstcol}
\usepackage{import}
\usepackage{cite}
\usepackage{setspace}
\usepackage{url}
\usepackage{booktabs}
\usepackage{algorithm,algpseudocode}
\usepackage{bm} %

\usepackage{tikz}%
\usetikzlibrary{shapes, positioning} %
\usetikzlibrary{pgfplots.groupplots}
\usetikzlibrary{arrows}
\usetikzlibrary{positioning}
\usetikzlibrary{fit,backgrounds}
\usepackage{pgfplots}%

\allowdisplaybreaks

\newcommand{\X}{{\mathcal{X}}}
\newcommand{\Y}{{\mathcal{Y}}}

\newcommand{\cX}{{\cal X}}
\newcommand{\cY}{{\cal Y}}
\newcommand{\cU}{{\cal U}}
\newcommand{\cI}{{\cal I}}
\newcommand{\cJ}{{\cal J}}

\newcommand{\ux}{\underline{x}}

\newcommand{\uw}{\underline{w}}

\newcommand{\cS}{{\cal S}}

\newcommand{\E}{\mathbb{E}}

\newcommand{\dtm}{\mathbf{B}}
\newcommand{\dtmh}{\hat{\dtm}}

\newcommand{\dtmt}{\tilde{\dtm}}

\newcommand{\T}{\mathrm{T}}
\newcommand{\psib}{\bm{\psi}}
\newcommand{\eps}{\epsilon}
\newcommand{\uf}{\underline{f}}

\DeclareMathOperator{\tr}{tr}

\newcommand{\bxi}{\boldsymbol{\xi}}
\newcommand{\bXi}{\boldsymbol{\Xi}}
\newcommand{\bQ}{\mathbf{Q}}
\newcommand{\bphi}{\boldsymbol{\phi}}
\newcommand{\V}{{\mathcal{V}}}
\newcommand{\C}{{\mathcal{C}}}
\newcommand{\cN}{\mathcal{N}}
\DeclareMathOperator*{\relint}{relint}
\newcommand{\cP}{\mathcal{P}}
\newcommand{\frob}[1]{\|#1\|_\mathrm{F}}
\newcommand{\bfrob}[1]{\bigl\|#1\bigr\|_\mathrm{F}}
\newcommand{\bbfrob}[1]{\left\|#1\right\|_\mathrm{F}}

\newcommand{\spectral}[1]{\|#1\|_\mathrm{s}}

\newcommand{\bbspectral}[1]{\left\|#1\right\|_\mathrm{s}}
\newcommand{\trop}[1]{\tr\left\{#1\right\}}
\newcommand{\defeq}{\triangleq}

\newcommand{\simpX}{\cP^\X}

\newcommand{\nbhd}{\cN}

\newcommand{\Ph}{\hat{P}}
\DeclareMathOperator{\1}{\mathds{1}} %
\newcommand{\Gb}{\mathbf{G}}
\newcommand{\Wb}{\mathbf{W}}
\newcommand{\Ib}{\mathbf{I}}
\newcommand{\Kb}{\mathbf{K}}
\newcommand{\Lb}{\mathbf{L}}
\newcommand{\Lbh}{\hat{\Lb}}
\newcommand{\Cb}{\mathbf{C}}

\newcommand{\Ab}{\mathbf{A}}
\newcommand{\Bh}{\hat{B}}
\newcommand{\Mb}{\mathbf{M}}
\newcommand{\Ub}{\mathbf{U}}
\newcommand{\eb}{\bm{e}}
\newcommand{\ub}{\bm{u}}
\newcommand{\vb}{\bm{v}}
\newcommand{\Pb}{\mathbf{P}}
\newcommand{\Qb}{\mathbf{Q}}
\newcommand{\cH}{\mathcal{H}}
\newcommand{\fh}{\hat{f}}
\newcommand{\zetab}{\bm{\zeta}}
\newcommand{\phib}{\bm{\phi}}
\newcommand{\Phib}{\mathbf{\Phi}}
\newcommand{\Psib}{\mathbf{\Psi}}
\newcommand{\Psibh}{\hat{\Psib}}
\newcommand{\psibh}{\hat{\psib}}
\newcommand{\psibt}{\tilde{\psib}}
\newcommand{\psih}{\hat{\psi}}
\newcommand{\cL}{\mathcal{L}}
\newcommand{\Xib}{\bm{\Xi}}
\newcommand{\dtmdmat}{\Xib}

\newcommand{\Jb}{\mathbf{J}}
\newcommand{\xib}{\bm{\xi}}

\newcommand{\zerob}{\bm{0}}
\DeclareMathOperator{\vecop}{vec}
\DeclareMathOperator{\cov}{cov}
\DeclareMathOperator*{\argmax}{arg\,max}
\DeclareMathOperator{\symdif}{\triangle}

\theoremstyle{definition}
\newtheorem{definition}{Definition}%
\theoremstyle{plain}
\newtheorem{theorem}{Theorem}%
\theoremstyle{plain}
\newtheorem{prop}{Proposition}%
\theoremstyle{plain}
\newtheorem{lemma}{Lemma}%
\theoremstyle{plain}
\theoremstyle{plain}
\theoremstyle{plain}
\theoremstyle{remark}
\newtheorem{example}{Example}
\theoremstyle{property}
\newtheorem{property}{Property}

\setlength{\belowdisplayskip}{3pt} 
\setlength{\belowdisplayshortskip}{3pt}
\setlength{\abovedisplayskip}{3pt} 
\setlength{\abovedisplayshortskip}{3pt}

\title{An Information-theoretic Approach to Unsupervised Feature Selection for High-Dimensional Data}

\author{
Shao-Lun Huang,~\IEEEmembership{Member,~IEEE,}
Xiangxiang Xu,~\IEEEmembership{Student Member,~IEEE,}
and~Lizhong Zheng,~\IEEEmembership{Fellow,~IEEE}%
\thanks{S.-L. Huang is with the Data Science and Information Technology Research Center, Tsinghua-Berkeley Shenzhen Institute, Shenzhen 518055, China (e-mail:
shaolun.huang@sz.tsinghua.edu.cn).}%
\thanks{X. Xu is with the Department of Electronic Engineering, Tsinghua University, Beijing 100084, China (e-mail: xuxx14@mails.tsinghua.edu.cn).}%
\thanks{L. Zheng is with the Department of Electrical Engineering and Computer Science, Massachusetts Institute of Technology, Cambridge, MA 02139, USA (e-mail: lizhong@mit.edu).}
}

\def\secsc{0}

\begin{document}

\maketitle

\begin{abstract} 
  In this paper, we propose an information-theoretic approach to design the functional representations to extract the hidden common structure shared by a set of random variables. The main idea is to measure the common information between the random variables by Watanabe's total correlation, and then find the hidden attributes of these random variables such that the common information is reduced the most given these attributes. We show that these attributes can be characterized by an exponential family specified by the eigen-decomposition of some pairwise joint distribution matrix. Then, we adopt the log-likelihood functions for estimating these attributes as the desired functional representations of the random variables, and show that such representations are informative to describe the common structure. Moreover, we design both the multivariate alternating conditional expectation (MACE) algorithm to compute the proposed functional representations for discrete data, and a novel neural network training approach for continuous or high-dimensional data. 
\if\secsc1{In addition, we analyze the sample complexity of the MACE algorithm, which indicates the number of samples needed to train the functional representations well. }\fi
Furthermore, we show that our approach has deep connections to existing techniques, such as Hirschfeld-Gebelein-R\'{e}nyi (HGR) maximal correlation, linear principal component analysis (PCA), and consistent functional map, which establishes insightful connections between information theory and machine learning. Finally, the performances of our algorithms are validated by numerical simulations.

\end{abstract}

\IEEEpeerreviewmaketitle

\section{Introduction}

Given a set of $d$ discrete random variables $X^d = (X_1, \ldots, X_d)$ with the (unknown) joint distribution $P_{X^d}$, and a sequence of observed sample vectors $\ux^{(\ell)} = (x_1^{(\ell)} , \ldots , x_d^{(\ell)})$ i.i.d. generated from this joint distribution, for $\ell = 1, \ldots , n$, our goal in this paper is to efficiently and effectively extract the hidden common information structure (or simply called common structure) shared by these random variables from the observed sample vectors. This is a typical unsupervised learning problem, and such common structures can be useful in many machine learning scenarios. As a motivating example, in the MNIST digits recognition problem~\cite{MNIST}, we often divide the images into overlapping sub-images, such as in Fig.~\ref{fig:MNIST}, and then train feature functions on the sub-images for learning the digits. %
In this problem, we can view each sub-image as a random variable $X_i$, and the training images as the observed data vectors. Since these sub-images are constructed from the written digits, the digit is the key common information shared by these sub-images. Therefore, effectively mining the information of the shared structure among these random variables can be helpful for recognizing the digits.

\begin{figure}
\centering
\includegraphics[width=.5\textwidth]{./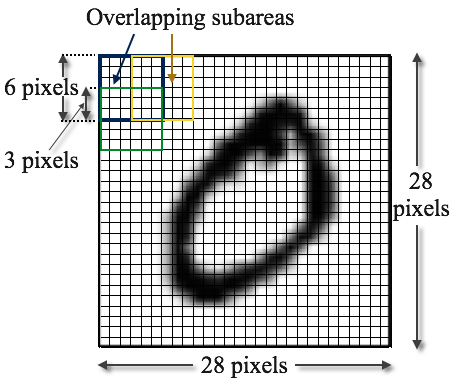}
\caption{The division of images into $8 \times 8 = 64$ overlapping subareas. Each subarea has $6 \times 6$ pixels, and nearby subareas overlap with $3$ pixels.}
\label{fig:MNIST}
\end{figure}

In addition, the concept of extracting common structure shared by multiple random variables or objects has also appeared or implicitly posted in several disciplines. For instance, linear principal component analysis (PCA)~\cite{Jolliffe}, the most widely adopted unsupervised learning technique, can be viewed as resolving a principal direction that conveys the most common randomness among different dimensions of data vectors. In addition, the consistent functional map network~\cite{Huang:2014, Wang:2013, Wang:2014}, a recently proposed effective approach in computer vision, takes each $X_i$ as a shape, and aims to find the shared components among different shapes. The main issue behind these problems is: how to design good low-dimensional functions of the random variables $X^d$, such that these functional representations are effective to reveal the common structure among these random variables. This can also be viewed as the unsupervised dimension reduction problem with the particular focus on extracting the common information of random variables.
In this paper, our goal is to apply the ideas from information theory to design good algorithms for finding such useful functional representations.

Our approach can be delineated in the following steps. Firstly, we want to identify the targeted random variable $U$ embedded in the random variables $X^d$ with some joint distribution $P_{U X^d}$, such that $U$ contains much information about the common structure shared by $X^d$. For this purpose, we apply the Watanabe's total correlation (or simply called the total correlation~\cite{watanabe}) to measure the amount of information shared by multiple random variables, and then find the optimal embedded $U$ such that the reduction of the total correlation given the knowledge of $U$ is maximized. To extract the effective low-dimensional features, we restrict the information volume of $U$ about $X_1, \ldots, X_d$ to be small, so that we can concentrate on the most ``learnable'' part of information about the common structures from the data. We show that in this small information rate regime of $U$, the optimal embedded $U$ can be characterized by an exponential family induced by the largest eigenvector of a pairwise joint distribution matrix. Then, we apply the log-likelihood function of estimating $U$ from $X_1, \ldots, X_d$ in this exponential family as the functional representation for extracting the common structure. Since $U$ is informative about the common structure and the log-likelihood function is the sufficient statistic of the observed data vectors about the target $U$, such a functional representation is effective to extract the common structure shared by these random variables. In addition, we extend this approach to searching for a sequence of mutually independent random variables $U^k = (U_1, \dots, U_k)$, such that the reduction of the total correlation is maximized. It turns out that the log-likelihood functions for estimating $U^k$ precisely correspond to the top $k$ eigenvectors of the pairwise joint distribution matrix, which establishes a decomposition of the common information between multiple random variables to principal modes of the pairwise joint distribution matrix.

Moreover, we demonstrate that these functional representations can be directly computed from the observed data vectors by a multivariate alternating conditional expectation (MACE) algorithm, which generalizes the traditional ACE algorithm~\cite{Breiman} to more than two random variables. This offers an efficient and reliable way to compute useful functional representations from discrete data variables. Furthermore, for high-dimensional or continuous data variables such that the conditional expectations are hardly accurately estimated from the limited data samples, we show that the functional representations can be computed through neural networks by optimizing a pairwise correlation loss. This offers a novel neural network training architecture for jointly analyzing multi-modal data. 
\if\secsc1{We also investigate the sample complexity of the MACE algorithm for i.i.d. sampled training data, by characterizing the error exponent of the failure probability of the MACE algorithm with respect to a given learning error. Our results indicate the number of training samples required to learn these functional representations in the sense of the generalization error.}
\fi

Finally, we show in Section~\ref{sec:5} that our approach shares deep connections and can be viewed as generalizations to several existing techniques, including the Hirschfeld-Gebelein-R\'{e}nyi (HGR) maximal correlation~\cite{renyi}, linear PCA, and consistent functional map network. This combines the knowledge from different domains, and offers a unified understanding for disciplines in information theory, statistics, and machine learning. We would also like to mention that the idea of studying the tradeoff between the total correlation and the common information rate was also employed in~\cite{op2016caching}\cite{op2016total} for Gaussian vectors in caching problems, while our works investigate this tradeoff for general discrete random variables. Moreover, the correlation explanation (CorEx) introduced by~\cite{ttc} also applied the total correlation as the information criterion to unsupervised learning. In particular, the authors in~\cite{ttc} solved an optimization problem by restricting the cardinality of $U$, and a rather complicated iterative algorithm was derived. On the other hand, in this paper we restrict the information volume contained in $U$, which is a more natural constraint in information theory, and we obtain clean analytical solutions that can be computed by simple and efficient algorithms.

In the rest of this paper, we introduce the details of our information theoretic approach for extracting the common structure via functional representations, and present the resulted algorithm design, 
\if\secsc1{the sample complexity of the algorithm, }\fi
 as well as their applications to practical problems.

\section{The Information Theoretic Approach} \label{sec:2}

\begin{figure}
\centering 
\includegraphics[width=0.5\textwidth]{./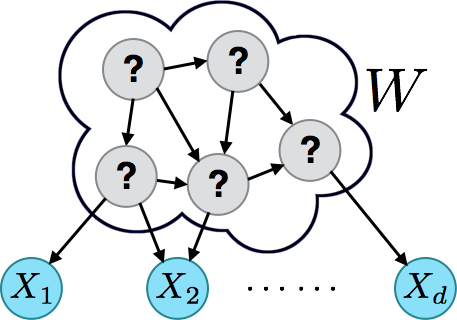} 
\caption{The random variables $X_1, \ldots , X_d$ are conditional independently generated from some hidden structure $W$.}
\label{fig:figure_1}
\end{figure}

Given discrete random variables $X^d = (X_1, \ldots, X_d)$ with the ranges $\cX^d \defeq \cX_1\times \dots \times \cX_d$ and joint distribution $P_{X^d}$, we model the common structure shared by these random variables as a high-dimensional latent variable $W$, in which the random variables $X_1 , \ldots , X_d$ are conditionally independent given $W$, i.e., $P_{X^d | W} = \Pi_{i = 1}^d P_{X_i | W}$, as depicted in Fig.~\ref{fig:figure_1}. Our goal is to learn this common structure from i.i.d. sample vectors generated from $P_{X^d}$. Since the correlation between $W$ and $X_i$'s is generally complicated, it is difficult to directly identify and learn the structures of $W$ without the labels and assumptions on the generating models of $X_i$'s as in the unsupervised learning scenarios. Therefore, instead of identifying the high-dimensional latent variable $W$, we focus on learning the low-dimensional random variable $U$ that contains much common information shared between $X_i$'s, which can be viewed as the informative attribute for the common structure. 

To identify such variable, we apply the total correlation\footnote{Specifically, for random variables $X_1, \ldots , X_d$, the total correlation is defined as the Kullback-Leibler (K-L) divergence $D(P_{X_1 \cdots X_d} \| P_{X_1} \cdots P_{X_d})$ between the joint distribution and the product of the marginal distributions.} \cite{watanabe} to measure the amount of common information shared between multiple random variables. Then, the amount of information that an attribute $U$ contains about the common structure shared by the random variables $X_i$'s is measured by the reduction of the total correlation given the knowledge of $U$, defined as
\begin{align} %
&{\cal L} ( X^d  | U) %
  \triangleq D(P_{X^d} \| P_{X_1} \cdots P_{X_d}) - D(P_{X^d} \| P_{X_1} \cdots P_{X_d} | U).\label{eq:total_correlation}
\end{align}
Our goal is to identify the targeted random variable $U$ with the information rate constraint\footnote{Note that $I(U; X^d)$ measures the amount of information of $U$ about the whole $X^d$, while ${\cal L} ( X^d  | U)$ measure the amount of information only about the common structure. The constraint $I(U; X^d) \leq \delta$ allows us to focus on low-dimensional attribute of $W$, in which we typically choose $\delta$ to be small.} $I(U; X^d) \leq \delta$, for some given $\delta$, such that the reduction of the total correlation is maximized. This can be formulated as the optimization problem:
\begin{subequations}
  \begin{alignat}{2} \label{eq:total_correlation2a}
    &\quad\max_{P_{U X^d }}&~ & \cL( X^d | U),  \\  
    &\text{subject to:} & &I(U; X^d) \leq \delta.
  \end{alignat}
  \label{eq:total_correlation2}
\end{subequations}
In particular, we would like to focus on the low-rate regime of $U$, %
which assumes $\delta$ to be small. This allows us to concentrate on the most representative low-dimensional attribute for describing the common structure. In addition, we make an extra constraint that 
\begin{align} \label{eq:no_tiltied}
  \min_u P_{U}(u) > \gamma,
\end{align}
for some finite $\gamma > 0$ irrelevant to $\delta$, which is natural for many machine learning problems. While the optimization problem~\eqref{eq:total_correlation2} in general has no analytical solution, in the regime of small $\delta$, it can be solved by a local information geometric approach, in which the optimal solutions can be specified by the eigen-decomposition of some pairwise joint distribution matrix.

\subsection{The Local Information Geometry}

To delineate our approach and results, we define the matrix $\dtm$ from the pairwise joint distributions as
\begin{align} \label{eq:Bmatrix}
\dtm = 
\left[
\begin{array}{cccc}
\mathbf{I}_{(1)} & \dtm_{12} & \cdots & \dtm_{1d} \\
\dtm_{21} & \mathbf{I}_{(2)} & \cdots & \dtm_{2d} \\
\vdots & \vdots & \ddots & \vdots \\
\dtm_{d1} & \dtm_{d2}  & \cdots & \mathbf{I}_{(d)}
\end{array} 
\right]
\end{align}
where $\mathbf{I}_{(i)}$ are $| \cX_i | \times | \cX_i |$ identity matrices, for all $i$, and $\dtm_{ij}$ are $( | \cX_i | \times | \cX_j | )$-dimensional matrices with the entry in the $x_i$-th row and $x_j$-th column defined as, for all $i \neq j$,
\begin{align*}%
B_{ij} (x_i; x_j) = \frac{P_{X_i X_j}(x_i , x_j)}{\sqrt{P_{X_i}(x_i)}\sqrt{P_{X_j}(x_j)}}. %
\end{align*}
The eigen-decomposition of the matrix $\dtm$ has the following properties.
\begin{lemma} \label{lem:1}
Let the eigenvalues and eigenvectors of the matrix $\dtm$ be $\lambda^{(0)} \geq \lambda^{(1)} \geq \cdots \geq \lambda^{(m-1)}$ and $\psib^{(0)}, \psib^{(1)}, \ldots , \psib^{(m-1)}$, respectively, where $m \triangleq \sum_{i = 1}^d |\cX_i|$ is the dimensionality of $\dtm$. In addition, let $\vb_i$ be the $|\cX_i|$-dimensional vector such that $v_i(x_i) = \sqrt{P_{X_i}(x_i)}$, then
\begin{enumerate}
\item $\dtm$ is a positive semidefinite matrix, i.e., $\lambda^{(m-1)} \geq 0$.
\item The largest eigenvalue $\lambda^{(0)} = d$ with the corresponding eigenvector $\psib^{(0)} = \frac{1}{\sqrt{d}} \left[\vb_1^{\T} , \ldots , \vb_d^{\T}\right]^{\T}$.  
\item The second largest eigenvalue $\lambda^{(1)} \geq 1$.
\item The last $d-1$ eigenvalues $\lambda^{(m-d+1)} = \cdots =\lambda^{(m-1)} = 0$, and the subspace of the corresponding $d-1$ eigenvectors is spanned by the vectors $\psib = \left[\alpha_1 \vb_1^{\T} , \ldots , \alpha_d \vb_d^{\T}\right]^{\T}$, such that the scalars $\alpha_i$'s satisfy $\sum_{i=1}^d \alpha_i = 0$.
\item For each $1 \leq \ell \leq m-d$, %
  if we partition the corresponding eigenvector $\psib^{(\ell )}$ into $|\cX_i|$-dimensional subvectors $\psib_i^{(\ell )}$, such that 
\begin{align} \label{eq:Bmatrixa}
\psib^{(\ell )} = 
\left[
\begin{array}{c}
\psib^{(\ell )}_1  \\
\vdots  \\
\psib^{(\ell )}_d 
\end{array} 
\right],
\end{align}
then $\psib_i^{(\ell)}$ is orthogonal to $\vb_i$, for all $i$.
\end{enumerate}
\end{lemma}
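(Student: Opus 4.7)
My plan is to reduce every part to a single quadratic-form identity. For any block vector $\psib = [\psib_1^{\T},\ldots,\psib_d^{\T}]^{\T}$ with $\psib_i\in\mathbb{R}^{|\cX_i|}$, I would introduce the rescaled functions $g_i(x_i)\defeq\psi_i(x_i)/\sqrt{P_{X_i}(x_i)}$. Expanding $\psib^{\T}\dtm\psib$ block-by-block using the definitions of $\dtm_{ij}$ and $\mathbf{I}_{(i)}$ collapses cleanly to
\begin{align*}
\psib^{\T}\dtm\psib \;=\; \sum_{i,j}\E[g_i(X_i)g_j(X_j)] \;=\; \E\!\left[\Bigl(\textstyle\sum_{i=1}^d g_i(X_i)\Bigr)^{\!2}\right].
\end{align*}
Together with the elementary identity $\dtm_{ij}\vb_j = \vb_i$ (just the marginalization $\sum_{x_j}P_{X_iX_j}(x_i,x_j)=P_{X_i}(x_i)$ after rescaling), this identity will do essentially all of the work.

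\textbf{Parts 1, 2, 4.} Part 1 is immediate: the quadratic form is the expectation of a nonnegative random variable. For part 2, Cauchy--Schwarz on $\sum_i g_i(X_i)$ gives $\psib^{\T}\dtm\psib \leq d\,\|\psib\|^2$, so $\lambda^{(0)}\leq d$; equality forces the $g_i(X_i)$'s to be a.s.\ equal, and the constant choice $g_i\equiv 1/\sqrt{d}$ corresponds precisely to $\psib^{(0)}$. For part 4, I would substitute $\psib=[\alpha_1\vb_1^{\T},\ldots,\alpha_d\vb_d^{\T}]^{\T}$ and use $\dtm_{ij}\vb_j=\vb_i$ to find that the $i$-th block of $\dtm\psib$ equals $\bigl(\sum_{j}\alpha_j\bigr)\,\vb_i$, which vanishes exactly when $\sum_j\alpha_j=0$. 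This exhibits a $(d-1)$-dimensional null subspace; by positive semidefiniteness it lies at the bottom of the spectrum, giving the stated form of the last $d-1$ eigenvectors.

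\textbf{Parts 5 and 3.} Let $\cS\defeq\{[\alpha_1\vb_1^{\T},\ldots,\alpha_d\vb_d^{\T}]^{\T}:\alpha_i\in\mathbb{R}\}$; by parts 2 and 4 a basis of the $d$-dimensional subspace $\cS$ is supplied by $\psib^{(0)}$ together with the $d-1$ null eigenvectors. For any $1\leq\ell\leq m-d$, orthogonality of eigenvectors from distinct eigenspaces (with a suitable basis choice within any degenerate eigenspace) places $\psib^{(\ell)}\in\cS^{\perp}$, meaning $\sum_i\alpha_i\vb_i^{\T}\psib_i^{(\ell)}=0$ for every choice of $\alpha_i$; this forces $\vb_i^{\T}\psib_i^{(\ell)}=0$ for each $i$, proving part 5. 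Part 3 follows by a trace count: $\tr(\dtm)=\sum_i|\cX_i|=m$, and subtracting the contributions of $\lambda^{(0)}=d$ and of the $d-1$ zero eigenvalues leaves $\sum_{\ell=1}^{m-d}\lambda^{(\ell)}=m-d$; since these $m-d$ eigenvalues are ordered nonincreasingly with average $1$, the leading one satisfies $\lambda^{(1)}\geq 1$.

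\textbf{Anticipated obstacle.} I do not foresee any deep difficulty; the only care point is bookkeeping around potential eigenvalue multiplicities (additional null directions beyond the guaranteed $d-1$, or $\lambda^{(0)}=d$ being nonsimple), where the statements should be read as assertions about a suitably chosen orthonormal eigenbasis obtained from the spectral theorem. Everything else is linear algebra guided by the single quadratic-form identity above.
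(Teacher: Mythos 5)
Your proof is correct. It rests on the same two structural facts as the paper's: the quadratic-form identity $\psib^{\T}\dtm\psib = \E\bigl[\bigl(\sum_{i=1}^d g_i(X_i)\bigr)^2\bigr]$, which is just the paper's factorization of $\dtm$ into a block column matrix times its transpose rewritten in functional form, and the marginalization identity $\dtm_{ij}\vb_j = \vb_i$. Parts 1, 4, and 5 then proceed essentially as in the paper (the paper proves part 5 by pairing $\psib^{(\ell)}$ against the single vector whose $i$-th block is $\langle\psib_i^{(\ell)},\vb_i\rangle\,\vb_i$, which is your $\cS^{\perp}$ argument specialized to one well-chosen element of $\cS$). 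The two genuine departures are in parts 2 and 3. For part 2 the paper bounds $\psib^{\T}\dtm\psib$ via $\bspectral{\dtm_{ij}} = 1$ (citing an earlier result) followed by the AM--QM inequality, whereas you apply Cauchy--Schwarz directly to $\sum_i g_i(X_i)$; your route is more self-contained. For part 3 the paper exhibits an explicit test vector supported on the first block and orthogonal to $\vb_1$, whose Rayleigh quotient equals $1$ because the diagonal blocks are identities, and invokes the variational characterization of $\lambda^{(1)}$; you instead combine $\tr(\dtm) = m$ with parts 2 and 4 to conclude that the middle $m-d$ eigenvalues sum to $m-d$, hence average exactly $1$, so the largest is at least $1$. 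The trace count is slicker, yields the slightly stronger "average equals one" statement, and is in fact the same device the paper uses in its Appendix E example; the paper's test vector makes the source of the bound (the identity diagonal blocks) more visible and does not depend on part 4. Both arguments share the same mild bookkeeping around degenerate eigenspaces, which you correctly flag and which the paper handles with its "without loss of generality" choice of eigenbasis.
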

\begin{proof}
  See Appendix \ref{sec:a:lem:1}.
\end{proof}

It will also be convenient to define the matrix $\dtmt$:
\begin{align}
  \dtmt \triangleq \dtm - d \cdot \psib^{(0)} (\psib^{(0)})^{\T}.
  \label{eq:dtmt}
\end{align}
Then from Lemma~\ref{lem:1}, the eigenvalues of $\dtmt$ are $\lambda^{(1)} \geq \cdots \geq \lambda^{(m-d)} \geq 0 = \lambda^{(m-d+1)} = \cdots = \lambda^{(m)}$, with the corresponding eigenvectors $\psib^{(1)}, \ldots , \psib^{(m-1)}, \psib^{(0)}$. 
Moreover, we define a collection of functions $f_i^{(\ell )} : \cX_i \mapsto \mathbb{R}$ as 
\begin{align} \label{eq:fpsi}
  f_i^{(\ell )}(x_i) = \frac{\psi_i^{(\ell )}(x_i)}{\sqrt{P_{X_i}(x_i)}}, \qquad \text{for all $i,\ell $},
\end{align}
where $\psib_i^{(\ell )}$ is the $i$-th subvector of $\psib^{(\ell )}$ as defined in~\eqref{eq:Bmatrixa}. Then, it follows from Lemma~\ref{lem:1} and~\eqref{eq:fpsi} that $f_i^{(\ell )}(X_i)$'s are zero-mean functions and $\sum_{i = 1}^d\E[(f^{(\ell )}_i(X_i))^2] = 1$. In addition, these functions induce an exponential family of joint distributions for $U, X^d$.

\begin{definition} \label{def:1}
  Let ${\cal H}$ be the set of functions $h : \cU \mapsto \mathbb{R}$ with zero mean and unit variance. Then, an exponential family ${\cal P}^{(\delta)}_{\exp}$ on $U, X^d$ %
  is defined as
\begin{align*}
{\cal P}^{(\delta)}_{\exp} = \Bigg \{ &\frac{1}{Z}P_U(u) 
P_{X^d} (x^d) %
  \cdot \exp \left( \sqrt{2\delta} \frac{h(u)}{\sqrt{\lambda^{(1)}}} \sum_{i = 1}^d f_i^{(1)} (x_i)
 \right) \colon h \in {\cal H} \Bigg\},
\end{align*}
where $Z$ is the normalizing factor.
\end{definition}
Note that this also defines a family of random variables $U$ embedded in $X^d$ corresponding to the collection of distributions in ${\cal P}^{(\delta)}_{\exp}$. 
It turns out that this exponential family characterizes the optimal solution of~\eqref{eq:total_correlation2} in the regime of small $\delta$, which is demonstrated as follows.

\begin{theorem} \label{thm:1}
The optimal value of~\eqref{eq:total_correlation2a} is
\begin{align} \label{eq:optimal_value1}
\max_{P_{U X^d}} \cL( X^d | U) = \delta\left( \lambda^{(1)} - 1\right) + o(\delta),
\end{align}
which is attainable by the distributions in ${\cal P}^{(\delta)}_{\exp}$. Moreover, for any distribution $P_{U X^d}$ achieving~\eqref{eq:optimal_value1}, there exists a distribution $\hat{P}_{U X^d} \in {\cal P}^{(\delta)}_{\exp}$, such that for all $(u, x^d) \in \cU \times \cX^d$,
\begin{align*}
\left| P_{U X^d} (u, x^d) - \hat{P}_{U X^d} (u, x^d) \right| = o\left(\sqrt{\delta}\right).
\end{align*}
\end{theorem}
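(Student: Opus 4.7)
The plan is first to rewrite the objective in a form amenable to local analysis, and then to reduce the optimization to a matrix eigenvalue problem involving $\dtm$. A direct identity gives
\begin{align*}
  \cL(X^d\mid U) = \sum_{i=1}^d I(X_i;U) - I(X^d;U),
\end{align*}
so that under the constraint $I(U;X^d)\le\delta$ we are effectively maximizing $\sum_i I(X_i;U)$ while paying a penalty $I(X^d;U)$. In the small-$\delta$ regime, together with $\min_u P_U(u)>\gamma$, any candidate joint distribution can be parametrized via a bounded perturbation $\tilde\phi(u,x^d)$ as
\begin{align*}
  P_{UX^d}(u,x^d) = P_U(u)\,P_{X^d}(x^d)\bigl(1+\sqrt{2\delta}\,\tilde\phi(u,x^d)\bigr),
\end{align*}
with $\E[\tilde\phi(u,X^d)]=0$ for every $u$. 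A second-order Taylor expansion of the KL-divergences then yields $I(U;X^d)=\delta\,\E[\tilde\phi^2]+o(\delta)$ and $I(X_i;U)=\delta\,\E[\tilde\phi_i^2]+o(\delta)$, where $\tilde\phi_i(u,x_i)\defeq\E[\tilde\phi(u,X^d)\mid X_i=x_i]$.

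With this parametrization the problem becomes, to leading order, the maximization of $\E_U\bigl[\sum_i \|T_i\tilde\phi_U\|^2-\|\tilde\phi_U\|^2\bigr]$ subject to $\E_U[\|\tilde\phi_U\|^2]\le 1$, where $T_i$ denotes the $L^2(P_{X^d})$-projection onto functions of $X_i$. A short orthogonality argument shows that any component of $\tilde\phi_u$ orthogonal to $\mathcal{S}\defeq\{\sum_i g_i(x_i):\E[g_i(X_i)]=0\}$ is annihilated by every $T_i$ and therefore only \emph{decreases} the objective; hence we may restrict to $\tilde\phi(u,x^d)=c(u)\sum_i g_i^{(u)}(x_i)$. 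Introducing the vector $\psi$ with sub-blocks $\psi_i(x_i)=g_i(x_i)\sqrt{P_{X_i}(x_i)}$, a direct computation gives $\|\tilde\phi_u\|^2 = c(u)^2 \langle\psi,\dtm\psi\rangle$ and $\sum_i\|T_i\tilde\phi_u\|^2 = c(u)^2 \|\dtm\psi\|^2$. The problem thus reduces to maximizing the Rayleigh-type quotient $\|\dtm\psi\|^2/\langle\psi,\dtm\psi\rangle - 1$ over $\psi$ orthogonal to $\psib^{(0)}$ and not in $\mathrm{null}(\dtm)$. Expanding in the eigenbasis of Lemma~\ref{lem:1}, this quotient equals $\sum_{k\ge 1}c_k^2(\lambda^{(k)})^2/\sum_{k\ge 1}c_k^2\lambda^{(k)}$, which is maximized at $\lambda^{(1)}$, attained when $\psi\propto\psib^{(1)}$. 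This gives the optimal value $\delta(\lambda^{(1)}-1)+o(\delta)$ and yields $\tilde\phi(u,x^d)=\tfrac{h(u)}{\sqrt{\lambda^{(1)}}}\sum_i f_i^{(1)}(x_i)$ with $h\in\cH$, which matches the first-order expansion of any element of $\cP^{(\delta)}_{\exp}$ (the exponential and normalizer $Z$ contribute only $O(\delta)$).

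For the converse, I would fix any sequence of distributions $P_{U X^d}^{(\delta)}$ achieving the optimum $\delta(\lambda^{(1)}-1)+o(\delta)$ and translate this into a near-optimality statement for the corresponding $\psi^{(\delta)}$ in the Rayleigh quotient above. Spectral perturbation (the gap between $\lambda^{(1)}$ and the next value attained by the quotient) forces $\psi^{(\delta)}$ to lie within $o(1)$ of the eigenspace spanned by $\psib^{(1)}$, which in turn means $\tilde\phi^{(\delta)}(u,x^d)$ differs from the canonical form $\frac{h(u)}{\sqrt{\lambda^{(1)}}}\sum_i f_i^{(1)}(x_i)$ by an $L^2$-correction $o(1)$. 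Multiplying by the $\sqrt{2\delta}$ prefactor and exponentiating into the density gives the pointwise bound $|P_{UX^d}-\hat P_{UX^d}|=o(\sqrt\delta)$ claimed in the theorem, using the boundedness afforded by $\min_u P_U(u)>\gamma$ and the finiteness of $\cX^d$ to pass from $L^2$ to uniform control.

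The main obstacle is the converse in Step 3: handling a possibly repeated eigenvalue $\lambda^{(1)}$ (in which case the ``direction'' of $\psib^{(1)}$ is not unique, only the eigenspace is), and carefully controlling the $o(\sqrt\delta)$ versus $o(\delta)$ remainders so that near-optimality of the divergence gap translates into closeness at the level of densities. The forward direction is essentially an eigenvalue computation, but turning Rayleigh-quotient closeness into the stated pointwise closeness requires a uniform bound on $\tilde\phi$, which is where the assumption $\min_u P_U(u)>\gamma$ is essential.
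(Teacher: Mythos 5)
Your proposal is correct and follows essentially the same route as the paper's proof: the identity $\cL(X^d\mid U)=\sum_i I(U;X_i)-I(U;X^d)$, a local perturbation expansion of $P_{X^d\mid U=u}$ with second-order Taylor approximation of the divergences, and reduction to an eigenvalue problem for $\dtm$ whose solution is $\lambda^{(1)}$ attained along $\psib^{(1)}$. The only cosmetic difference is that you phrase the quadratic reduction via conditional-expectation operators and a Rayleigh quotient of $\dtm$ restricted to additive functions, whereas the paper introduces the factor matrix $\dtm_0$ with $\dtm_0\dtm_0^{\T}=\dtm$ and reads off its second singular value; these are the same computation, and your explicit attention to the degenerate-eigenspace and remainder-tracking issues in the converse is if anything more careful than the paper's.
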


\begin{proof}
  See Appendix \ref{sec:a:thm:1}.
\end{proof}
%
%
%
%
%
%

%

From Theorem~\ref{thm:1}, the family of random variables $U$ embedded in $X^d$ defined by ${\cal P}_{\exp}^{(\delta)}$ are the set of attributes that contain the most amount of information about the common structure shared by $X^d$. To extract such information from data, we consider the log-likelihood function to estimate $U$ from $X^d$:%
\begin{align} \label{eq:LKH}
\log \frac{P_{X^d | U = u}(x^d)}{P_{X^d}(x^d)}
= \frac{\sqrt{2\delta}h(u)}{\sqrt{\lambda^{(1)}}}  \sum_{i = 1}^d f_i^{(1)} (x_i) + o\left(\sqrt{\delta} \right).
\end{align}
Although the log-likelihood functions for different $U = u$ in the exponential family ${\cal P}^{(\delta)}_{\exp}$ may have different magnitudes due to $h(u)$, all of them are proportional to the functional representation $\sum_{i = 1}^d f_i^{(1)} (x_i)$ of the data vectors. This can be interpreted as the 1-dimensional subspace of the functional space of $X^d$ that is the most informative about the shared structure. This is similar to what linear PCA~\cite{Jolliffe} aims to achieve in the space of data, while we are searching for the optimal subspace of the general functional space. Later on we will show that our result is indeed a nonlinear generalization of linear PCA.

In addition, note that $\psib^{(1)}$ is the second largest eigenvector of $\dtm$, which maximizes $\psib^\T \dtm \psib$ over all unit vectors $\psib$ orthogonal to $\psib^{(0)}$. This implies that the functions $f_i^{(1)} (X_i)$ defined from~\eqref{eq:fpsi} form the optimal solution of the joint correlation optimization problem:
\begin{align*}
\max_{f_i \colon \cX_i \mapsto \mathbb{R}, \ i = 1, \ldots , d} \ &\E \left[ \sum_{i \neq j} f_i (X_i) f_j (X_j) \right] \\ \notag
  \text{subject to:} \quad &\E \left[ f_i (X_i) \right] =  0, \quad i = 1, \dots, d,\\
                           &\E \left[ \sum_{i = 1}^d f_i^2(X_i) \right] = 1, \quad i = 1, \dots, d.%
\end{align*}
Therefore, the functional representation $f_i^{(1)} (X_i)$ essentially searches for a 1-dimensional subspace for each functional space of $X_i$, such that the joint correlation between these subspaces is maximized. As a consequence, these subspaces and the corresponding functional representations convey much information about the common structure shared among these random variables.

\subsection{The Informative $k$-dimensional Attributes}
\label{sec:inform-k-dimens}

In addition to the largest eigenvector $\psib^{(1)}$, the rest eigenvectors of $\dtmt$ essentially lead to functional representations, which correspond to informative $k$-dimensional attributes for the common structure. To show that, we consider the optimization problem for $k$-dimensional attribute $U^k = (U_1 , \ldots , U_k)$:%
\begin{align} \label{eq:total_correlation2ak}
\max_{P_{U^k X^d }} &{\cal L} \left( X^d  | U^k \right), %
\end{align}
where ${\cal L} \left( X^d | U^k \right)$ is as defined in~\eqref{eq:total_correlation}, and the maximization is over all joint distributions $P_{U^k X^d}$ such that the constituent variables $U_i$ with ranges $\cU_i$, for $i = 1, \dots , k$, satisfy: 1) $\delta \geq I(U_1; X^d) \geq \dots \geq I(U_k; X^d)$; 2) $\min_{u_i\in\cU_i} P_{U_i} (u_i) > \gamma$ for all $i = 1, \dots, d$ and some constant $\gamma > 0$ independent of $\delta$; 3) $U_1 , \ldots , U_k$ are mutually independent variables; 4) $U_1 , \ldots , U_k$ are conditionally independent variables given $X^d$. To solve the optimization problem~\eqref{eq:total_correlation2ak}, we define the following exponential family for $k$-dimensional attributes. 
\begin{definition} \label{def:2}
  Let ${\cH}_i$ be the set of functions $h_i : \cU_i \mapsto \mathbb{R}$ with zero mean and unit variance, for $i = 1, \dots, k$. Then, an exponential family $\cP^{(\delta)}_{\exp,k}$ on $U^k, X^d$ is defined as

\begin{align*}
  {\cP}^{(\delta)}_{\exp,k}
  &= \left \{ \frac{1}{Z_k} \left[\prod_{j = 1}^k P_{U_i}(u_i)\right] P_{X^d} (x^d) %
    \cdot \exp \left( \sqrt{2\delta} \sum_{\ell = 1}^{k_0} h_\ell(u_{\ell})\sum_{j = 1}^{k_0} \frac{q_{j\ell}}{\sqrt{\lambda^{(j)}}} \sum_{i = 1}^d f_i^{(j)}(x_i) \right) \right.\\
  &\qquad\qquad \left.\colon h_\ell \in {\cH}_\ell, \Qb = [q_{ij}]_{k_0 \times k_0}, \Qb^{\T}\Qb = \Ib_{k_0}\vphantom{\sum_{j = 1}^{k}}\right\},
\end{align*}
where $f_i^{(j)}(x_i)$ is as defined in~\eqref{eq:fpsi}, $Z_k$ is the normalizing factor, $k_0 = \min\{k, k^*\}$ and
\begin{align*}
  k^* \defeq \max\left\{i\colon \lambda^{(i)} > 1\right\}.
\end{align*}
\end{definition}
Then, the exponential family ${\cal P}^{(\delta)}_{\exp,k}$ characterizes the optimal solutions of~\eqref{eq:total_correlation2ak}.
\begin{theorem} \label{thm:2}
The optimal value of~\eqref{eq:total_correlation2ak} is
\begin{align} \label{eq:optimal_value2}
\max_{P_{U^k X^d}} &{\cal L} \left( X^d | U^k \right) = \delta\left( \sum_{\ell =1}^{k_0} \lambda^{(\ell )} - k_0\right) + o(\delta),
\end{align}
which is attainable by the distributions in ${\cal P}^{(\delta)}_{\exp,k}$. Moreover, for any distribution $P_{U^k X^d}$ achieving~\eqref{eq:optimal_value2}, there exists a distribution $\hat{P}_{U^k X^d} \in {\cal P}^{(\delta)}_{\exp,k}$, such that for all $(u^k, x^d) \in \cU_1 \times \dots \times \cU_k \times \cX^d$, 
\begin{align*}
\left| P_{U^k X^d} (u^k, x^d) - \hat{P}_{U^k X^d} (u^k, x^d) \right| = o\left(\sqrt{\delta}\right).
\end{align*}
\end{theorem}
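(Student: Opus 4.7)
The plan is to extend the local-information-geometry argument used in Theorem~\ref{thm:1} from a scalar attribute $U$ to a $k$-tuple $U^k=(U_1,\dots,U_k)$, exploiting the mutual independence and the conditional-independence-given-$X^d$ constraints to decouple the problem into $k$ rank-one perturbations that must be mutually orthogonal. Concretely, for each feasible $P_{U^k X^d}$ I would write, in the small-$\delta$ regime,
\begin{align*}
\frac{P_{X^d\mid U_i=u_i}(x^d)}{P_{X^d}(x^d)} = 1 + \sqrt{2\delta_i}\,h_i(u_i)\,g_i(x^d) + o(\sqrt{\delta_i}),
\end{align*}
where $\delta_i \defeq I(U_i;X^d)\le\delta$, $h_i\in\cH_i$, and $g_i$ has zero mean under $P_{X^d}$ with $\E[g_i(X^d)^2]=1$. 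Conditional independence of the $U_i$'s given $X^d$ lets me write the joint log-likelihood ratio as the sum of these single-mode perturbations, and mutual independence of the $U_i$'s, expanded to second order, forces $\E[g_i(X^d)g_j(X^d)]=0$ for $i\ne j$. The constraint $\min_{u_i} P_{U_i}(u_i) > \gamma$ ensures the Taylor remainders are uniform so higher-order terms collect cleanly into $o(\delta)$.

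Next I would express $\cL(X^d\mid U^k)$ and the rate constraints in these local coordinates. As in Theorem~\ref{thm:1}, $I(U_i;X^d) = \delta_i + o(\delta_i)$, while the contribution of the $i$-th component to the total-correlation reduction reduces, after projecting $g_i$ onto the sum-of-marginals structure appearing in~\eqref{eq:total_correlation}, to a quadratic form in the vector built from $g_i$ evaluated against the same matrix $\dtmt$ introduced in~\eqref{eq:dtmt}. Concretely, writing each $g_i$ in the $\psib^{(\ell)}$-basis and collecting marginal contributions shows
\begin{align*}
\cL(X^d\mid U^k) = \sum_{i=1}^{k}\delta_i\bigl(\ub_i^{\T}\dtmt\ub_i - 1\bigr) + o(\delta),
\end{align*}
for unit vectors $\ub_i$ corresponding to $g_i$'s expansion and satisfying $\ub_i^{\T}\ub_j=0$ for $i\ne j$. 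Here the $-1$ comes from the same cancellation as in Theorem~\ref{thm:1}, where the $d$-fold sum of marginals produces the identity block on the diagonal of $\dtm$.

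The outer maximization then becomes: over $\delta_i\in[0,\delta]$ with $\delta_1\ge\cdots\ge\delta_k$ and orthonormal $\ub_i$, maximize $\sum_i \delta_i(\ub_i^{\T}\dtmt\ub_i - 1)$. By Ky Fan's maximum principle, the inner maximization over orthonormal $\ub_i$ picks out the top eigenvalues of $\dtmt$, i.e.\ $\lambda^{(1)}\ge\cdots\ge\lambda^{(k)}$, and since eigenvalues with $\lambda^{(j)}\le 1$ contribute non-positively, the optimum truncates at $j\le k^\ast$; combined with $k$ available slots this gives the cutoff $k_0 = \min\{k,k^\ast\}$. The rate allocation is then trivially to set $\delta_i=\delta$ for $i\le k_0$, yielding~\eqref{eq:optimal_value2}. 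Achievability by $\cP^{(\delta)}_{\exp,k}$ follows by direct substitution: the orthogonal matrix $\Qb$ in Definition~\ref{def:2} parameterizes exactly the freedom of choosing any orthonormal basis of the top-$k_0$ eigenspace, and the functions $\sum_i f_i^{(j)}(x_i)/\sqrt{\lambda^{(j)}}$ constitute the unit-norm eigen-directions of $\dtmt$ after normalization by Lemma~\ref{lem:1}.

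The main obstacle will be rigorously justifying the decoupling: the conditional-independence constraint couples the perturbations through the joint law of $U^k$, and I must show that the mutual-independence constraint on $U^k$ reduces at first order to the clean orthogonality $\E[g_ig_j]=0$ without absorbing cross terms into the leading order of $\cL$. This requires carefully tracking second-order terms in the Taylor expansion of $P_{U_iU_j}=P_{U_i}P_{U_j}$ and verifying that they match the corresponding quadratic terms appearing in the divergence, so that only the orthogonal component survives. The converse $o(\sqrt{\delta})$-closeness statement is then obtained, as in Theorem~\ref{thm:1}, by arguing that any deviation of $(h_i,g_i)$ from the exponential-family form incurs a strictly larger rate at given $\cL$, and hence cannot be part of the optimizer to leading order.
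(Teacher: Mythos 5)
Your route is the paper's: local perturbation expansion of the conditionals, orthogonality of the per-component perturbation vectors forced by the independence constraints, reduction to a Ky Fan--type trace maximization that selects the top eigenvectors of $\dtmt$ (equivalently the top singular vectors of the matrix $\dtm_0$ with $\dtm_0\dtm_0^\T = \dtm$), and truncation at $k_0 = \min\{k, k^*\}$ because modes with $\lambda^{(j)} \le 1$ contribute non-positively. There is, however, one genuine gap. You posit at the outset that for \emph{every} feasible $P_{U^k X^d}$ the perturbation factors as $h_i(u_i)\,g_i(x^d)$, i.e., that the family $\{\phib_{u_i}\}_{u_i \in \cU_i}$ of perturbation vectors spans a one-dimensional subspace. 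A general feasible distribution has no such rank-one structure: the direction of $\phib_{u_i}$ may vary with $u_i$. The factorization is a property of \emph{optimizers} and must be derived; the paper does so with an exchange argument, re-aligning all $\phib_{u_i}$ along $\argmax_{\phib} \|\dtm_0\phib\|^2/\|\phib\|^2$ while holding $\E_{U_i}\bigl[\|\phib_{U_i}\|^2\bigr]$ fixed, which strictly increases the objective unless the vectors were already aligned. For achievability this omission is harmless (you are maximizing over a subfamily containing ${\cal P}^{(\delta)}_{\exp,k}$), but the converse and in particular the $o(\sqrt{\delta})$-closeness claim for an \emph{arbitrary} optimizer are not established, since your closing argument quantifies only over pairs $(h_i, g_i)$, i.e., over the factored subfamily.

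On the difficulty you flag as the ``main obstacle'': it dissolves, because the orthogonality is exact rather than a second-order approximation. Combining conditional independence given $X^d$ with unconditional independence of $U_i, U_j$ gives, for every pair $(u_i, u_j)$,
\begin{align*}
1 + \eps^2\bigl\langle\phib_{u_i}, \phib_{u_j}\bigr\rangle
= \sum_{x^d}\frac{P_{X^d|U_i}(x^d|u_i)\,P_{X^d|U_j}(x^d|u_j)}{P_{X^d}(x^d)}
= \frac{P_{U_iU_j}(u_i,u_j)}{P_{U_i}(u_i)P_{U_j}(u_j)} = 1,
\end{align*}
so $\langle\phib_{u_i}, \phib_{u_j}\rangle = 0$ pointwise with no remainder to track; separately, the cross terms in the quadratic objective vanish because $\E_{U_j}[\phib_{U_j}] = \zerob$. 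Note also that the natural quadratic form on $g_i$ is $\phib^\T(\dtm_0^\T\dtm_0 - \Ib)\phib$ on the large space of functions of $x^d$, not $\dtmt$ acting on $m$-dimensional vectors; the two have the same nonzero spectrum, but your converse bookkeeping should be carried out on the former.
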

\begin{proof}  
  See Appendix \ref{sec:a:thm:2}.  
\end{proof}

Note that from Definition~\ref{def:2} and Theorem~\ref{thm:2}, when $k > k^*$, 
the optimal solution is to design $P_{U^{k^*}X^d}$ to follow the distributions in ${\cal P}^{(\delta)}_{\exp,k^*}$, and let the last $k - k^*$ attributes $U_{k^* + 1}, \dots, U_k$ be independent of $X^d$. This implies that only the top $k^*$ attributes can effectively reduce the total conditional correlation, which leads to an intrinsic criterion for designing the dimensionality $k$ of the attributes. 

Moreover from Definition~\ref{def:2}, the log-likelihood functions for the optimal attributes correspond to the functional representations $\sum_{i = 1}^d f_i^{(\ell)} (x_i)$, for $\ell = 1, \ldots , k$. This generalizes~\eqref{eq:LKH} for providing the informative $k$-dimensional representations about the common structure shared by $X_1, \ldots , X_d$. Furthermore, it is shown in Appendix \ref{sec:a:mace} that the functions $f_i^{(\ell)}$ as defined in~\eqref{eq:fpsi} form the optimal solution of the following optimization problem:
\begin{subequations}
  \begin{alignat}{2} 
    &\max_{\uf_i\colon \cX_i \mapsto \mathbb{R}^k, \ i = 1, \ldots , d}  & ~ &\E \left[ \sum_{i \neq j} \uf_i^{\T} (X_i) \uf_j (X_j) \right] \label{eq:MHGRk:1}\\ %
    &\quad\quad\text{subject to:} \ & &\E \left[ \uf_i (X_i) \right] =  \underline{0}, \  \text{for all $i$}, \\
    & & &\E \left[ \sum_{i = 1}^d \uf_i(X_i) \uf_i^{\T}(X_i) \right] = \mathbf{I}_k,
  \end{alignat}
  \label{eq:MHGRk}
\end{subequations}
where $\mathbf{I}_k$ is the $k$-dimensional identity matrix. Therefore for $i = 1, \ldots , d$, the functional representations $f_i^{(\ell)} (X_i),  \ell = 1, \ldots , k$, form the $k$-dimensional functional subspace of $X_i$, such that the joint correlation between these subspaces for different $X_i$'s is maximized.%

\begin{example}[Common Bits Patterns Extraction]
  Suppose that $b_1, \dots, b_r \in \{1, -1\}$ are mutually independent $\mathsf{Bern}(\frac{1}{2})$ bits, and each random variable $X_i = b_{\cI_i} \defeq (b_j)_{j \in \cI_i}$ %
  is a subset of these random bits, where $\cI_i \subseteq \{1, \dots, r\}$ denotes the index set. Then, our information theoretic approach essentially extracts the bit patterns that appear the most among the random variables $X^d$. To show that, we define $w(\cI)$ as the number of sets $\cI_i~(i = 1, \dots, d)$ that include $\cI$, i.e., 
\begin{align}
  w(\cI)
  \defeq  \sum_{i = 1}^d \1_{\{\cI \subset \cI_i\}},
  \label{eq:def:w}
\end{align}
where $\1_{\cdot}$ is the indicator function. In addition, we denote $\varnothing = \cJ_0, \dots, \cJ_{2^r - 1}$ as the $2^r$ subsets of $\{1, \dots, r\}$ with the decreasing order
 $ d = w(\cJ_0) \geq w(\cJ_1) \geq \dots \geq w(\cJ_{2^{r} - 1})$. 
Then, it is shown in Appendix \ref{sec:a:prop:bits} that the eigenvalues for the corresponding matrix $\dtm$ are 
  \begin{align}
    \lambda^{(\ell)} = w(\cJ_{\ell}), \quad\ell = 0, \dots, m - 1.
    \label{eq:bits:lambda}
  \end{align}
where $m = \sum_{i = 1}^d 2^{|\cI_i |}$ is the dimensionality of $\dtm$. Therefore, the eigenvalue $\lambda^{(\ell)}$ of the matrix $\dtm$ essentially counts the number of times the corresponding bits pattern $b_{\cJ_{\ell}}$ appears among the random variables $X^d$, and the largest eigenvalue indicates the most appeared bits pattern. Moreover for $\lambda^{(\ell)} > 0$, the corresponding functions $f^{(\ell)}_i(X_i)$ ($i = 1, \dots, d$) as defined in \eqref{eq:fpsi} are
  \begin{align}
    f^{(\ell)}_i(X_i) =
    \begin{cases}
      \displaystyle\frac{1}{\sqrt{w(\cJ_{\ell})}}\prod_{j \in \cJ_{\ell}} b_j,&\text{if}~ \cJ_{\ell} \subset \cI_i,\\
      0,&\text{otherwise}.
    \end{cases}
          \label{eq:f:bits}
  \end{align}
Thus, the $\ell$-th optimal functional representation of $X^d$ (cf. Section \ref{sec:inform-k-dimens}) is
\begin{align*}
  \sum_{i = 1}^d f_i^{(\ell)}(X_i) = \sqrt{w(\cJ_{\ell})} \prod_{j \in \cJ_{\ell}} b_j,
\end{align*}
which depends only on the bits indexed by $\cJ_{\ell}$. For instance, if $r=d=3$, and $X_1 = \{b_1, b_2\}, X_2 = \{b_2, b_3\}$, and $X_3 = \{b_1, b_3\}$, then for all subsets of $\{1, 2, 3\}$, the values for the function $w(\cdot)$ as defined in \eqref{eq:def:w} are
  \begin{gather*}
    w(\varnothing) = 3, \quad%
    w(\{1\}) = w(\{2\}) = w(\{3\}) = 2,\\
    w(\{1, 2\}) = w(\{2, 3\}) = w(\{3, 1\}) = 1,\quad%
    w(\{1, 2, 3\}) = 0.
  \end{gather*}
  Therefore, the corresponding eigenvalues of $\dtm$ are
  \begin{gather*}
    \lambda^{(0)} = 3,\quad%
    \lambda^{(1)} = \lambda^{(2)} = \lambda^{(3)} = 2,\quad%
    \lambda^{(4)} = \lambda^{(5)} = \lambda^{(6)} = 1,\quad%
    \lambda^{(7)} = 0.
  \end{gather*}
  Moreover, the corresponding $f^{(\ell)}_i(X_i)$'s satisfy
  \begin{align*}
    \sum_{i = 1}^3 f_i^{(\ell)}(X_i) =  \sqrt{2} b_\ell, \quad \ell = 1, 2, 3,
  \end{align*}
  and
  \begin{align*}
    \sum_{i = 1}^3 f_i^{(\ell)}(X_i) =
    \begin{cases}
      b_1b_2,&\ell = 4,\\
      b_2b_3,&\ell = 5,\\
      b_3b_1,&\ell = 6.
    \end{cases}
  \end{align*}

\end{example}

\section{The Algorithm to Compute the Functional Representation from Data} \label{sec:3}

While our information theoretic approach provides a guidance for searching informative functional representations, it remains to derive the algorithm to compute these functions from observed data vectors. Intuitively, one can first estimate the empirical distribution between $X_1 , \ldots , X_d$ from the data samples, and then construct the matrix $\dtm$ to solve the eigen-decomposition. However, this is often not feasible in practice due to: (1) there may not be enough number of samples to estimate the joint distribution accurately, (2) the dimensionality of $\dtm$ may be extremely high especially for big data applications, so that the singular value decomposition (SVD) can not be computed directly. 

\subsection{The Multivariate Alternating Conditional Expectation (MACE) Algorithm}

Alternatively, it is well-known that eigenvectors of a matrix can be efficiently computed by the power method~\cite{golub2013matrix}. The power method iteratively multiplies the matrix to an initial vector, and if all the eigenvalues are nonnegative, it converges to the eigenvector with respect to the largest eigenvalue with an exponential convergence rate. To apply the power method for computing the second largest eigenvector of $\dtm$, we choose an initial vector $\psib = 
\begin{bmatrix}
\psib_1^{\T}& \cdots& \psib_d^{\T}
\end{bmatrix}
^\T$, such that $\psib_i$ is orthogonal to $\vb_i$, for all $i$. This forces $\psib$ to be orthogonal to $\psib^{(0)}$, and since $\dtm$ is positive semidefinite, the power iteration will converge to the second largest eigenvector if $\psib$ is not orthogonal to $\psib^{(1)}$. Then, the algorithm iteratively computes the matrix multiplication $\psib \leftarrow \dtm \psib$, or equivalently 
\begin{align} \label{eq:ace_i}
\psib_i \leftarrow \psib_i + \sum_{j \neq i} \dtm_{ij}\psib_j, 
\end{align}
for all $i$. Note that if we write $f_i(x_i) = \psi_i (x_i)/\sqrt{P_{X_i}(x_i)}$, then as shown in~\cite{SL17}, the step~\eqref{eq:ace_i} is equivalent to a conditional expectation operation on functions:
\begin{align} \label{eq:ace_ffff}
f_i (X_i) \leftarrow f_i (X_i) +  \E \left[ \left. \sum_{j \neq i} f_j (X_j) \right| X_i \right], 
\end{align}
Therefore, the power method can be transferred to an algorithm based on the alternating conditional expectation (ACE)~\cite{Breiman} algorithm as shown in Algorithm~\ref{alg:1}, which computes the optimal functional representation derived in Section~\ref{sec:2}. Note that the choice of $\psib_i$ to be orthogonal to $\vb_i$ is transferred to the zero-mean choice of functions in the initialization step of the algorithm. 
\begin{algorithm} [t]
\caption{The Multivariate ACE (MACE) Algorithm}
$\mathbf{Require:}$ The data samples $\ux^{(\ell)} = (x_1^{(\ell)} , \ldots , x_d^{(\ell)})$, $\ell = 1, \dots, n$ of variables $X_1, \ldots , X_d$
\begin{itemize}
\item [1.] Initialization: randomly pick zero-mean functions $\vec{f} = (f_1, \ldots , f_d)$.
\end{itemize}
\text{}\text{} $\mathbf{repeat:}$
\begin{itemize}
\item [2.] The alternating conditional operation: $f_i (X_i) \leftarrow  f_i (X_i) + \E \left[ \left. \sum_{j \neq i} f_j (X_j) \right| X_i \right].$
\item [3.] The normalization step: $f_i (X_i) \leftarrow f_i (X_i) / \sqrt{ \E \left[ \sum_{i = 1}^d f_i^2(X_i)  \right]}.$ %
\end{itemize}
\text{}\text{} $\mathbf{until:}$ \ The $\E \left[ \sum_{i \neq j} f_i(X_i) f_j(X_j) \right]$ stops increasing.
\label{alg:1}
\end{algorithm}

\subsection{Finding $k$ Functional Representations from Eigen-decomposition}

The Algorithm~\ref{alg:1} can be further extended to compute the top $k$ eigenvectors $\psib^{(1)}, \ldots , \psib^{(k)}$, and the corresponding functional representations. To design the algorithm for computing these functions, we denote the $\ell$-th functional representation as $\vec{f}^{(\ell)} = ( f_1^{(\ell)}, \ldots , f_d^{(\ell)} )$, where $f_i^{(\ell)}$ is as defined in~\eqref{eq:fpsi}. Then, since $\psib^{(k)}$ is orthogonal to $\psib^{(\ell)}$, for $ \ell \leq k-1$, the $k$-th functional representation $\vec{f}^{(k)}$ can be computed by the power method similar to the first functional representation $\vec{f}^{(1)}$, but with extra orthogonality constraints
\begin{align*}
\left\langle \vec{f}^{(\ell)}, \vec{f}^{(k)} \right\rangle \triangleq  \sum_{i = 1}^d \E \left[ f_i^{(\ell)}(X_i) f_i^{(k)}(X_i) \right] = 0, \ \text{for $\ell \leq k-1$}
\end{align*}
to maintain the orthogonality to the first $k-1$ functional representations. Therefore, $\vec{f}^{(k)}$ can be computed by the power method as in Algorithm~\ref{alg:1} with the extra step of Gram-Schmidt procedure to guarantee the orthogonality, which is illustrated in Algorithm~\ref{alg:2}. Note that the computation complexities of Algorithm~\ref{alg:1} and Algorithm~\ref{alg:2} are both linear to the size of the dataset, which is often much more efficient than the singular value decomposition of the matrix $\dtm$.

\begin{algorithm}[t]
  \caption{The Computation of $\vec{f}^{(k)}$}
  $\mathbf{Require:}$ The data samples $\ux^{(i)} = (x_1^{(i)} , \ldots , x_d^{(i)})$, $i = 1, \dots, n$ of variables $X_1, \ldots , X_d$, and the previously computed functions $\vec{f}^{(1)}, \ldots , \vec{f}^{(k-1)}$.
  \begin{itemize}
  \item [1.] Initialization: randomly pick zero-mean functions $\vec{f}^{(k)} = ( f_1^{(k)}, \ldots , f_d^{(k)} )$.
  \end{itemize}
  \text{}\text{} $\mathbf{repeat:}$
  \begin{itemize}
  \item [2.] Run step 2 and 3 of Algorithm~\ref{alg:1} for $\vec{f}^{(k)}$.
  \item [3.] The Gram-Schmidt procedure: $\vec{f}^{(k)} \leftarrow \vec{f}^{(k)} - \sum_{\ell = 1}^{k-1} \langle \vec{f}^{(\ell)}, \vec{f}^{(k)} \rangle \cdot \vec{f}^{(\ell)}$
  \end{itemize}
  \text{}\text{} $\mathbf{until:}$ \ The $\E \left[ \sum_{i \neq j} f^{(k)}_i(X_i) f^{(k)}_j(X_j) \right]$ stops increasing.
  \label{alg:2}
\end{algorithm}

\subsection{Generating Informative Functional Representations for High-Dimensional Data} \label{sec:low-rank}

While the Algorithm~\ref{alg:2} generally requires less training samples than estimating the joint distribution and the matrix $\dtm$, in order to obtain an acceptable estimation for the conditional expectation step~\eqref{eq:ace_ffff}, it is still necessary to acquire training samples in the size comparable to the cardinality of the random variable $X_i$. This is often difficult for high-dimensional or continuous random variables in practice. In such cases, we propose a neural network based approach to generate the informative functional representations by deep neural networks. The key idea is to note that by Eckart-Young-Mirsky theorem \cite{eckart1936approximation}, the top $k \leq m$ eigenvectors of $\dtmt$ can be computed from the low-rank approximation problem:
\begin{align} \label{eq:EYM}
  \Psib^* = \min_{\Psib \in \mathbb{R}^{m \times k}} \bfrob{\dtmt - \Psib\Psib^{\T}}^2
\end{align}
where the columns of $\Psib^*$ are the top $k$ eigenvectors of $\dtmt$. The unconstrained optimization problem~\eqref{eq:EYM} leads to a training loss for generating informative functions by neural networks.
\begin{prop} \label{prop:MH-score}
  Let $\Psib_i$ be $|\cX_i| \times k$ matrices, for $i = 1, \ldots , d$, such that $\Psib = \left[ \Psib_1^{\T} \cdots \Psib_d^{\T}  \right]^\T$, and define $k$-dimensional functions $\uf_i\colon \cX_i \mapsto \mathbb{R}^k$, $i = 1, \dots, d$, as $\uf_i (x_i) = \Psib_i^{\T} (x_i) / \sqrt{P_{X_i}(x_i)}$, where $\Psib_i (x_i)$ denotes the $x_i$-th row of the matrix $\Psib_i$. %
  Then, it follows that
\begin{align} \label{eq:MH-loss}
  \bfrob{\dtmt - \Psib\Psib^{\T}}^2 = \bfrob{\dtmt}^2 - 2H\left(\uf_1(X_1), \dots, \uf_d(X_d)\right),%
\end{align}
where
\begin{align*}
H\left(\uf_1(X_1), \dots, \uf_d(X_d)\right) \defeq \sum_{i = 1}^d \sum_{j = 1}^d H\left(\uf_i(X_i), \uf_j(X_j)\right),
\end{align*}
and $H\left(\uf_i(X_i), \uf_j(X_j) \right)$ is defined as, for all $i,j$,
\begin{align*}
  H\left(\uf_i(X_i), \uf_j(X_j)\right)%
&\defeq \E\left[\uf^{\T}_i(X_i)\uf_j(X_j)\right] - \left(\E\left[\uf_i(X_i)\right]\right)^{\T}\E\left[\uf_j(X_j)\right]\\
&\qquad  - \frac12 \trop{\E\left[\uf_i(X_i)\uf_i^\T(X_i)\right]\E\left[\uf_j(X_j)\uf_j^\T(X_j)\right]}, 
\end{align*}
where $\trop{\cdot}$ denotes the trace of its matrix argument.
\end{prop}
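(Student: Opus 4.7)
The plan is to directly expand the Frobenius norm and then identify each of the three resulting pieces with a specific expectation, using the relation $\Psib_i(x_i)^\T = \sqrt{P_{X_i}(x_i)}\,\uf_i(x_i)$ that links the row slices of $\Psib$ to the vector-valued feature $\uf_i$. The identity $\bfrob{\dtmt - \Psib\Psib^\T}^2 = \bfrob{\dtmt}^2 - 2\tr\{\dtmt\,\Psib\Psib^\T\} + \bfrob{\Psib\Psib^\T}^2$ follows from symmetry of both matrices, so the real content is to show that
\begin{align*}
  2\tr\{\dtmt\,\Psib\Psib^\T\} - \bfrob{\Psib\Psib^\T}^2 \;=\; 2H\bigl(\uf_1(X_1),\dots,\uf_d(X_d)\bigr).
\end{align*}

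First I would handle the cross term. Using $\dtmt = \dtm - d\,\psib^{(0)}(\psib^{(0)})^\T$, I split it into two pieces. For $\tr\{\dtm\,\Psib\Psib^\T\} = \sum_{i,j}\tr\{\Psib_i^\T \dtm_{ij}\Psib_j\}$, the $i\neq j$ blocks give, after substituting the definition of $\dtm_{ij}$ and of $\uf_i$, exactly $\sum_{x_i,x_j} P_{X_iX_j}(x_i,x_j)\,\uf_i(x_i)^\T\uf_j(x_j) = \E[\uf_i(X_i)^\T\uf_j(X_j)]$; the diagonal blocks with $\dtm_{ii}=\mathbf I$ give $\bfrob{\Psib_i}^2 = \E[\uf_i(X_i)^\T\uf_i(X_i)]$. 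Hence $\tr\{\dtm\,\Psib\Psib^\T\} = \sum_{i,j}\E[\uf_i(X_i)^\T\uf_j(X_j)]$. For the rank-one correction, I compute $\Psib^\T\psib^{(0)} = \tfrac{1}{\sqrt d}\sum_i \Psib_i^\T \vb_i = \tfrac{1}{\sqrt d}\sum_i\E[\uf_i(X_i)]$ by using $v_i(x_i)=\sqrt{P_{X_i}(x_i)}$, so that $d\,\|\Psib^\T\psib^{(0)}\|^2 = \bigl\|\sum_i\E[\uf_i(X_i)]\bigr\|^2 = \sum_{i,j}\E[\uf_i(X_i)]^\T\E[\uf_j(X_j)]$. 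Subtracting gives
\begin{align*}
  \tr\{\dtmt\,\Psib\Psib^\T\} = \sum_{i,j}\Bigl(\E[\uf_i(X_i)^\T\uf_j(X_j)] - \E[\uf_i(X_i)]^\T\E[\uf_j(X_j)]\Bigr).
\end{align*}

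Next I would treat the squared term via the cyclic identity $\bfrob{\Psib\Psib^\T}^2 = \bfrob{\Psib^\T\Psib}^2$. Since $\Psib^\T\Psib = \sum_i \Psib_i^\T\Psib_i$ and $\Psib_i^\T\Psib_i = \sum_{x_i} P_{X_i}(x_i)\,\uf_i(x_i)\uf_i(x_i)^\T = \E[\uf_i(X_i)\uf_i^\T(X_i)]$, expanding the Frobenius norm of this sum produces $\sum_{i,j}\trop{\E[\uf_i(X_i)\uf_i^\T(X_i)]\,\E[\uf_j(X_j)\uf_j^\T(X_j)]}$.

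Finally I would combine the two computations: doubling the first expression and subtracting the second yields, term by term in $(i,j)$, exactly $2H(\uf_i(X_i),\uf_j(X_j))$ as defined in the proposition, and summing over $i,j$ gives $2H(\uf_1(X_1),\dots,\uf_d(X_d))$, completing the identity. There is no real obstacle here beyond careful indexing — the only place one can slip is in reconciling row-vector versus column-vector conventions when converting $\Psib_i(x_i)$ to $\uf_i(x_i)$, so I would state the convention explicitly at the start and carry it consistently through the three block computations.
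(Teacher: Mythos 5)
Your proof is correct and is essentially the paper's argument: both expand $\bfrob{\dtmt - \Psib\Psib^{\T}}^2$ into the three standard terms and identify the cross term with $\E[\uf_i^{\T}\uf_j] - \E[\uf_i]^{\T}\E[\uf_j]$ and the quartic term with $\trop{\E[\uf_i\uf_i^{\T}]\E[\uf_j\uf_j^{\T}]}$. The only cosmetic difference is that the paper organizes the expansion blockwise (working directly with the centered blocks $\dtmt_{ij}$), whereas you expand globally and account for the centering via the explicit rank-one term $d\,\psib^{(0)}(\psib^{(0)})^{\T}$ — the underlying computation is the same.
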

\begin{proof}
See Appendix \ref{sec:a:MH}.
\end{proof} 

Note that $H\left(\uf_i(X_i), \uf_j(X_j)\right)$ coincides with the \emph{H-score}~\cite{huang2019information} when the means of the functions are zero, hence we term $H\left(\uf_1(X_1), \dots, \uf_d(X_d)\right)$ the \emph{multivariate H-score} (MH-score). Then from~\eqref{eq:MH-loss}, the optimization problem~\eqref{eq:EYM} is equivalent to the functional optimization problem
 \begin{align} \label{eq:MHLOSS}
\max_{f_i\colon \cX_i \mapsto \mathbb{R}^k, \ i = 1, \ldots, d} H\left(\uf_1(X_1), \dots, \uf_d(X_d)\right),
\end{align}
for solving the informative functional representations for the common structure. The optimization problem~\eqref{eq:MHLOSS} leads to a neural network training strategy. Specifically, given the training samples of $X_1, \dots, X_d$, we design $d$ neural networks, where the $i$-th neural network $\mathtt{NN}_i$ takes $X_i$ as the input and generates the representations $\uf_i(X_i)$. Then, the weights of these neural networks are trained to minimize the negative MH-score as the loss function. Finally, the informative functional representations are generated by the trained $d$ neural networks that attempts to optimize~\eqref{eq:MHLOSS}, as illustrated in Fig. \ref{fig:mh}.

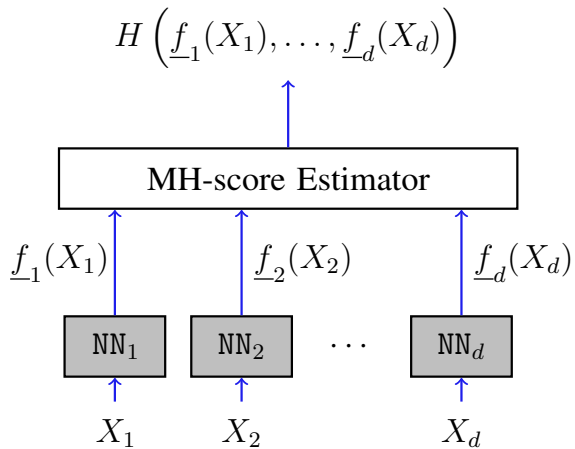
\begin{figure}[!t]
    \centering
    \resizebox{.5\textwidth} {!}{\tikzset{%
  nn/.style    = {draw = black, fill = gray!50!white, thick, rectangle, minimum height = 1.8em, minimum width = 3em, text width = 1.8em, align = center},
  nnfake/.style    = {nn, draw = none, fill = none},
  hscore/.style    = {draw = black, thick, rectangle, minimum height = 1.8em,  minimum width = 3em, text width = 13em, align=center},
  input/.style    = {coordinate}, 
  output/.style   = {coordinate}, 
  *|/.style={
        to path={
            (perpendicular cs: horizontal line through={(\tikztostart)},
                                 vertical line through={(\tikztotarget)})
            -- (\tikztotarget) \tikztonodes
        }
    }
}
\def\nndist{.3cm}

\begin{tikzpicture}[auto, thick, node distance=1.8cm, ->,draw=white!60!black!20!blue]
  \draw
  node [nn] (nn1) {$\mathtt{NN}_1$}
  node [nn, right = \nndist of nn1] (nn2) {$\mathtt{NN}_2$}
  node [draw = none, right = \nndist of nn2] (cdots) {$\cdots$}
  node [nn, right = \nndist of cdots] (nn3) {$\mathtt{NN}_{d}$};
  \draw
  node [nnfake, below = \nndist of nn1] (i1) {$X_1$}
  node [nnfake, below = \nndist of nn2] (i2) {$X_2$}
  node [nnfake, below = \nndist of nn3] (i3) {$X_{d}$};

  \draw[->](i1) -- node[midway, input](mid1){} node[left] {}(nn1);
  \draw[->](i2) -- node[midway, input](mid2){} node[right] {}(nn2);
  \draw[->](i3) -- node[midway, input](mid3){} node[right] {}(nn3);

  \node (mid) at ($(mid1)!0.5!(mid3)$) {};
  \draw
  node [hscore, above = 5em of mid] (h-score) {MH-score Estimator}
  node [nnfake, text width = 12em, above = 2em of h-score, name = out] {$H\left(\uf_1(X_1), \dots, \uf_{d}(X_{d})\right)$};

  \draw[<-, *|](h-score.south) to node[left, xshift = .2em]{$\uf_1(X_1)$} (nn1.north);
  \draw[<-, *|](h-score.south) to node[right, xshift = 0em]{$\uf_2(X_2)$} (nn2.north);
  \draw[<-, *|](h-score.south) to node[right, xshift = 0em]{$\uf_{d}(X_{d})$} (nn3.north);
  
  \draw[->](h-score) -- (out);
\end{tikzpicture}

    \caption{The network architecture to estimate optimal functional representations, where each $\mathtt{NN}_i$ is a neural network to extract feature $\uf_i(X_i)$ from the $i$-th input $X_i$.}%
    \label{fig:mh}
\end{figure}

\if\secsc1

\section{The Asymptotic Sample Complexity of MACE Algorithm}
In the MACE algorithm, the conditional expectations (step 2 and 3 of Algorithm \ref{alg:1}) are estimated from the data samples, which are assumed to be i.i.d. generated from some unknown joint distribution. Due to the i.i.d. sampling process, there exists a deviation between the empirical and true distributions, which results in the deviations between the empirical and true conditional expectations in the MACE algorithm. This leads to the sample complexity problem for computing the functional representations in the MACE algorithm, which is studied in this section.

Specifically, for the functional representations $\fh_1^{(\ell)}, \ldots , \fh_d^{(\ell)}$, $\ell = 1, \dots, k,$ computed by the MACE algorithm from $n$ i.i.d. data samples of $X^d$ with the empirical distribution $\Ph_{X^d}$, we define
\begin{align*}
  \Psibh =
  \begin{bmatrix}
    \psibh^{(1)} & \cdots & \psibh^{(k)}
  \end{bmatrix}
\end{align*}
where $\psibh^{(\ell)}$ is defined as [cf. \eqref{eq:Bmatrixa}]
\begin{align} \label{eq:psibh}
\psibh^{(\ell )} = 
\left[
\begin{array}{c}
\psibh^{(\ell )}_1  \\
\vdots  \\
\psibh^{(\ell )}_d 
\end{array} 
\right]
\end{align}
with $\psih^{(\ell )}_i(x_i) = \sqrt{\Ph_{X_i}(x_i)} \fh_i^{(\ell)}$ for $x_i \in \cX_i$ and $i = 1, \dots, d$. To measure the deviation between $\Psibh$ and the top $k$ eigenvectors of $\dtmt$, denoted by
\begin{align*}
  \Psib = \begin{bmatrix}
    \psib^{(1)} & \cdots & \psib^{(k)}
  \end{bmatrix},
\end{align*}
we apply 
\begin{align}
  \trop{\Psib^{\T} \dtmt \Psib} - \trop{\Psibh^{\T} \dtmt \Psibh} %
  \label{eq:deviation}
\end{align}
as the performance metric to characterize the sample complexity of the MACE algorithm, which corresponds to the loss of the joint correlation (cf. Appendix \ref{sec:a:mace}).

Note that the eigenvectors $\Psib$ satisfies
\begin{align*}
    \trop{\Psib^{\T} \dtmt \Psib}%
    = \sum_{i = 1}^k \lambda^{(i)},
\end{align*}
and $  \trop{\Psib^{\T} \dtmt \Psib} - \trop{\Psibh^{\T} \dtmt \Psibh} \geq 0$, for all $m \times k$ matrices satisfying $\Psibh^{\T}\Psibh = \Ib_{k}$, where the equality holds when $\Psibh = \Psib$. Our goal in this section is to characterize the (asymptotic) error exponent
\begin{align}
  -\lim_{\eps \rightarrow 0} \frac{1}{\eps^2} \lim_{n \rightarrow \infty} \frac{1}{n} \log \mathbb{P}_n \left\{   \trop{\Psib^{\T} \dtmt \Psib} - \trop{\Psibh^{\T} \dtmt \Psibh}%
  > \eps^2  \right\},
  \label{eq:exponent1}
\end{align}
where the probability is measured over the i.i.d. sampling process from $P_{X^d}$. In particular, the first limit indicates the asymptotic regime of $n$ we are interested in, and the second limit of $\eps$ is naturally from that in the asymptotic regime of $n$, the difference between the estimation
 $\trop{\Psibh^{\T} \dtmt \Psibh}$  %
and the optimal value $\trop{\Psib^{\T} \dtmt \Psib}$ %
is small. Moreover, we assume that $\lambda^{(k)} > \lambda^{(k + 1)}$, which is often a reasonable assumption in practical problems, and will facilitate our derivation.

To investigate the error exponent \eqref{eq:exponent1}, note that the MACE algorithm output $\Psibh$, computed from the i.i.d. samples, corresponds to the top $k$ eigenvectors of a matrix $\dtmh$, defined from the empirical distribution $\Ph_{X^d}$ as
\begin{align} \label{eq:dtmh}
  \dtmh = 
  \left[
  \begin{array}{cccc}
    \dtmh_{11} & \dtmh_{12} & \cdots & \dtmh_{1d} \\
    \dtmh_{21} & \dtmh_{22} & \cdots & \dtm_{2d} \\
    \vdots & \vdots & \ddots & \vdots \\
    \dtmh_{d1} & \dtmh_{d2}  & \cdots & \dtmh_{dd}
  \end{array} 
                                        \right],
\end{align}
where $\dtmh_{ij}$ is an $|\cX_i| \times |\cX_j|$ matrix with entries
\begin{align}
  \Bh_{ij}(x_i; x_j) = \frac{\Ph_{X_i X_j}(x_i, x_j) - \Ph_{X_i}(x_i)\Ph_{X_j}(x_j)}{\sqrt{\Ph_{X_i}(x_i)}\sqrt{\Ph_{X_j}(x_j)}}
\end{align}
for $i \neq j$, and where $\dtmh_{ii}$ is an $|\cX_i| \times |\cX_i|$ matrix with entries
\begin{align}
  \Bh_{ii}(x_i; x'_i) = \delta_{x_ix_i'} - \sqrt{\Ph_{X_i}(x_i)\Ph_{X_i}(x_i')},\label{eq:Bh:ii}
\end{align}
and where $\delta_{x_ix_i'}$ denotes the Kronecker delta. 

In what follows, we use an $(|\cX_1|\cdot|\cX_2|\cdots|\cX_d|)$-dimensional vector $\xib_{X^d}$ to denote the difference between the empirical and true distributions of $X^d$, defined as
\begin{align}
  \xi_{X^d}(x^d) \defeq
    \begin{cases}
      \displaystyle
      \frac{\hat{P}_{X^d}(x^d) - P_{X^d}(x^d)}{\eps\sqrt{P_{X^d}(x^d)}},&\text{if}~P_{X^d}(x^d) > 0,\\
      0,&\text{if}~P_{X^d}(x^d) = 0,
    \end{cases}
    \label{eq:xi:x^d}
\end{align}
where $\eps$ is some small quantity, since we consider the regime of large $n$. %
Note that since $\Ph_{X^d}$ and $P_{X^d}$ are probability distributions, the vector $\xib_{X^d}$ satisfies
\begin{align} \label{eq:cond_ortho}
\sum_{x^d }\sqrt{P_{X^d}(x^d)} \xi_{X^d} (x^d) = 0.
\end{align}

Similarly, for all $i \neq j$, the difference between the empirical true distributions of $X_i, X_j$ can be characterized by
\begin{align}
  &\xi_{X_iX_j}(x_i, x_j)\notag\\
  &\defeq 
    \begin{cases}
      \displaystyle
      \frac{\Ph_{X_iX_j}(x_i, x_j) - P_{X_iX_j}(x_i, x_j)}{\eps\sqrt{P_{X_iX_j}(x_i, x_j)}},&\text{if}~P_{X_iX_j}(x_i, x_j) > 0,\\
      0,&\text{if}~P_{X_iX_j}(x_i, x_j) = 0,
    \end{cases}
  \label{eq:xi:xixj}
\end{align}
for all $x_i \in \cX_i$ and $x_j \in \cX_j$.

In addition, we can represent the difference between $\Ph_{X_i}$ and $P_{X_i}$ as
\begin{align}  
  \hat{P}_{X_i}(x_i) - P_{X_i}(x_i) = \eps\sqrt{P_{X_i}(x_i)}\xi_{X_i}(x_i),
  \label{eq:xi}
\end{align}
where
\begin{align}
  \xi_{X_i}(x_i) = \sum_{x_j \in \cX_j} \sqrt{P_{X_j|X_i}(x_j|x_i)} \xi_{X_iX_j}(x_i, x_j), \quad i\neq j.
  \label{eq:xi_ij:xi_i}
\end{align}

Then, it follows from \eqref{eq:dtmh}--\eqref{eq:Bh:ii} and \eqref{eq:xi:xixj}--\eqref{eq:xi_ij:xi_i} that
\begin{align}
  \dtmh = \dtmt + \eps \Xib + o(\eps),
\end{align}
where
\begin{align}
  \Xib = 
  \left[
  \begin{array}{cccc}
    \Xib_{11} & \Xib_{12} & \cdots & \Xib_{1d} \\
    \Xib_{21} & \Xib_{22} & \cdots & \Xib_{2d} \\
    \vdots & \vdots & \ddots & \vdots \\
    \Xib_{d1} & \Xib_{d2}  & \cdots & \Xib_{dd}
  \end{array}\right],
                                      \label{eq:Xi}
\end{align}
and where $\Xib_{ij}$ is an $|\cX_i| \times |\cX_j|$ matrix with entries
\begin{align}
  &\Xi_{ii}(x_i; x_i')\notag\\
  & = - \frac12 \sqrt{P_{X_i}(x_i)P_{X_i}(x_i')}\left[\frac{\xi_{X_i}(x_i)}{\sqrt{P_{X_i}(x_i)}} + \frac{\xi_{X_i}(x_i')}{\sqrt{P_{X_i}(x_i')}}\right]
  \label{eq:Xi:ii}
\end{align}
and
\begin{align}
  &\Xi_{ij}(x_i; x_j)\notag\\
  &\defeq \frac{\sqrt{P_{X_iX_j}(x_i,x_j)}}{\sqrt{P_{X_i}(x_i)P_{X_j}(x_j)}} \xi_{X_iX_j}(x_i, x_j)\notag\\
  &\quad-  \frac{P_{X_iX_j}(x_i, x_j) + P_{X_i}(x_i)P_{X_j}(x_j)}{2\sqrt{P_{X_i}(x_i)P_{X_j}(x_j)}}\notag\\
  &\qquad\cdot\Biggl[\frac{1}{P_{X_i}(x_i)}  \sum_{x_j' \in \cX_j} \sqrt{P_{X_iX_j}(x_i, x_j')} \xi_{X_iX_j}(x_i, x_j') \notag\\
  &\qquad\quad  +\frac{1}{P_{X_j}(x_j)} \sum_{x_i' \in \cX_i}\sqrt{P_{X_iX_j}(x_i', x_j)}\xi_{X_iX_j}(x_i', x_j) \Biggr] \label{eq:Xi:ij}
\end{align}
for $i \neq j$.

Therefore, the performance metric \eqref{eq:deviation} can be characterized by $\Xib$ in the small $\eps$ regime, where the following lemma from~\cite{huang2019sample} will be useful.

\begin{lemma}[{\cite[Lemma 1]{huang2019sample}}]
  \label{lem:eig:k}
  Suppose that $\Ab \in \mathbb{R}^{m \times m}$ is a symmetric matrix with eigenvalues $\lambda_1 \geq \dots \geq \lambda_{k} > \lambda_{k + 1} \geq \dots \geq \lambda_m$ and the corresponding eigenvectors $\ub_1, \dots, \ub_m$, and let
\begin{align} \label{eq:vk}
  \Ub_k \triangleq
  \begin{bmatrix}
    \ub_1& \cdots & \ub_k
  \end{bmatrix}
  \in \mathbb{R}^{m \times k}
\end{align}
denote the matrix formed by the top $k$ eigenvectors of $\Ab$. Moreover, suppose that $\Ab(\eps)$ is an analytic function of $\eps$ with $\Ab(0) = \Ab$, and the Taylor series expansion
\begin{align*}
  \Ab(\eps) = \Ab + \eps \Ab' + o(\eps),
\end{align*}
where $\Ab' = \Ab'(0)$  is the first-order derivative of $\Ab(\eps)$ with respect to $\eps$ at $\eps = 0$. Let $\Ub_k(\eps) \in \mathbb{R}^{m \times k}$ be the matrix formed by the top $k$ eigenvectors of $\Ab(\eps)$ defined similarly to \eqref{eq:vk}, then we have
  \begin{align*}
    &\trop{ \Ub_k^{\T}(\eps) \Ab \Ub_k(\eps)} \\
    & = \trop{ \Ub_k^{ \T} \Ab \Ub_k} - \eps^2\sum_{i=1}^k\sum_{j = k + 1}^m\frac{\bigl(\ub_i^{\T}\Ab'\ub_j\bigr)^2}{\lambda_i - \lambda_j} + o(\eps^2).
  \end{align*}
\end{lemma}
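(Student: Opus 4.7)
The plan is to apply second-order Rayleigh--Schrödinger perturbation theory column-by-column to $\Ub_k(\eps)$ and then exploit cancellations in the resulting trace sum. Writing
\begin{align*}
  g(\eps) \defeq \trop{\Ub_k^{\T}(\eps)\Ab\Ub_k(\eps)} = \sum_{i=1}^k \ub_i(\eps)^{\T}\Ab\,\ub_i(\eps),
\end{align*}
where $\ub_i(\eps)$ is the $i$-th column of $\Ub_k(\eps)$, I would assume momentarily that all eigenvalues of $\Ab$ are simple so that each $\ub_i(\eps)$ admits an analytic expansion $\ub_i(\eps) = \ub_i + \eps\ub_i^{(1)} + \eps^2\ub_i^{(2)} + o(\eps^2)$. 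Standard perturbation theory then gives
\begin{align*}
  \ub_i^{(1)} = \sum_{j\neq i} c_{ij}\,\ub_j,\qquad c_{ij} \defeq \frac{\ub_j^{\T}\Ab'\ub_i}{\lambda_i - \lambda_j},
\end{align*}
and expanding $\ub_i(\eps)^{\T}\ub_i(\eps) = 1$ to second order pins down $\ub_i^{\T}\ub_i^{(1)} = 0$ and $\ub_i^{\T}\ub_i^{(2)} = -\tfrac12\|\ub_i^{(1)}\|^2$.

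Next, substituting into each $\ub_i(\eps)^{\T}\Ab\,\ub_i(\eps)$ and using $\Ab\ub_i = \lambda_i\ub_i$, the first-order contribution $2\lambda_i\ub_i^{\T}\ub_i^{(1)}$ vanishes while the second-order contribution collapses to $(\ub_i^{(1)})^{\T}\Ab\ub_i^{(1)} + 2\lambda_i\ub_i^{\T}\ub_i^{(2)} = \sum_{j\neq i} c_{ij}^2\lambda_j - \lambda_i\sum_{j\neq i}c_{ij}^2 = \sum_{j\neq i}c_{ij}^2(\lambda_j-\lambda_i)$. Summing over $i = 1, \ldots, k$ and splitting the inner index as $j \leq k$ (with $j \neq i$) versus $j > k$, the ``top--top'' double sum $\sum_{1\leq i\neq j\leq k} c_{ij}^2(\lambda_j - \lambda_i)$ is antisymmetric under the relabeling $i\leftrightarrow j$ (using $c_{ji} = -c_{ij}$, which follows from the symmetry of $\Ab'$), hence vanishes. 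What remains is precisely $-\eps^2\sum_{i=1}^k\sum_{j=k+1}^m (\ub_i^{\T}\Ab'\ub_j)^2/(\lambda_i-\lambda_j) + o(\eps^2)$, matching the stated formula after the identity $c_{ij}^2(\lambda_j-\lambda_i) = -(\ub_i^{\T}\Ab'\ub_j)^2/(\lambda_i-\lambda_j)$.

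The main obstacle is that the hypothesis $\lambda_k > \lambda_{k+1}$ only controls the gap at the boundary, permitting degeneracies \emph{within} the top-$k$ block or \emph{within} the bottom $m-k$ block; in those cases the individual columns $\ub_i(\eps)$ need not be analytic and the formula for $c_{ij}$ becomes $0/0$ inside each degenerate group. I would resolve this by observing that $g(\eps) = \trop{\Ab\,P_k(\eps)}$ depends only on the top-$k$ spectral projector $P_k(\eps)$, which is analytic by Kato's theorem under the given spectral gap, and then selecting within each degenerate eigenspace an orthonormal basis that diagonalizes the restriction of $\Ab'$; the would-be singular coefficients $c_{ij}$ are then paired with a vanishing matrix element of $\Ab'$ and contribute nothing to the final sum. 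A density argument (perturb $\Ab$ to $\Ab + \eta\Mb$ for a generic symmetric $\Mb$ to remove all degeneracies, apply the simple case, and let $\eta \to 0$ with continuity of both sides in the gap regime) offers an equally clean alternative.
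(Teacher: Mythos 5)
Your argument is correct, but note that the paper itself offers no proof of this statement to compare against: the lemma is imported verbatim as Lemma~1 of \cite{huang2019sample} (and, moreover, sits inside a section that is compiled out of the present version). Judged on its own merits, your derivation is sound. The nondegenerate computation is the standard second-order Rayleigh--Schr\"odinger expansion: the first-order term dies because $\Ab\ub_i=\lambda_i\ub_i$ and $\ub_i^{\T}\ub_i^{(1)}=0$; the second-order term collapses to $\sum_{j\neq i}c_{ij}^2(\lambda_j-\lambda_i)$; the intra-top-block contributions cancel by the antisymmetry $c_{ji}=-c_{ij}$ (which indeed requires the symmetry of $\Ab'$, guaranteed since each $\Ab(\eps)$ is symmetric); and the surviving top--bottom cross terms give exactly the stated expression with the correct negative sign, consistent with $\trop{\Ub_k^{\T}\Ab\Ub_k}$ being the maximum of $\trop{\Vb^{\T}\Ab\Vb}$ over $m\times k$ matrices $\Vb$ with orthonormal columns. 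Your treatment of degeneracies is also the right one: the quantity $\trop{\Ub_k^{\T}(\eps)\Ab\Ub_k(\eps)}$ depends only on the top-$k$ spectral projector, which is analytic under the gap $\lambda_k>\lambda_{k+1}$, and the final double sum is invariant under orthogonal re-basing inside any degenerate eigenspace (for a top eigenvalue $\mu$ it equals $\sum_{j>k}\|\Pi_\mu\Ab'\ub_j\|^2/(\mu-\lambda_j)$ with $\Pi_\mu$ the eigenprojection), so either the diagonalize-$\Ab'$-within-each-eigenspace device or the density argument legitimately reduces to the simple-spectrum case. One point worth stating explicitly in a final write-up: a priori $\ub_i^{(2)}$ depends on the second-order part of $\Ab(\eps)$, which is not controlled by the hypothesis $\Ab(\eps)=\Ab+\eps\Ab'+o(\eps)$; but only the component of $\ub_i^{(2)}$ along $\ub_i$ enters the Rayleigh quotient at order $\eps^2$, and that component is pinned down by normalization alone --- this is precisely why the answer depends only on $\Ab'$, as the lemma asserts.
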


Now, let us define $\cS_1(\eps)$ as the set of $\Ph_{X^d}$ such that $\trop{\Psib^{\T} \dtmt \Psib} - \trop{\Psibh^{\T} \dtmt \Psibh} > \eps^2$, 
i.e.,
\begin{align} \label{eq:s1_sup}
\cS_1(\eps) \triangleq \left\{ \Ph_{X^d}\colon \trop{\Psib^{\T} \dtmt \Psib} - \trop{\Psibh^{\T} \dtmt \Psibh} > \eps^2  \right\}.
\end{align}
Then, it follows from Sanov's theorem \cite{cover2012elements} that
\begin{align} 
&\mathbb{P}_n \left\{ \trop{\Psib^{\T} \dtmt \Psib} - \trop{\Psibh^{\T} \dtmt \Psibh} > \eps^2  \right\} \notag\\
 & \doteq \exp \left\{ -n \min_{\Ph_{X^d} \in \cS_1(\eps)} D\big(\Ph_{X^d}\big\| P_{X^d}\bigr) \right\}, \label{eq:scnn}
\end{align} 
where the ``$\doteq$'' is the conventional dot-equal notation. 

Moreover, we define $\cS_2(\eps)$ as the set of $\Ph_{X^d}$ such that the corresponding $\xib_{X^d}$ from~\eqref{eq:xi:x^d} satisfies
\begin{align}\label{eq:condition1}
  \sum_{i = 1}^k\sum_{j = k + 1}^{m} \frac{\left[\left(\psib^{(i)}\right)^{\T}
  \dtmdmat \psib^{(j)}\right]^2}{\lambda^{(i)} - \lambda^{(j)}} \geq 1
\end{align}
where $m = \sum_{i = 1}^d |\cX_i|$. Then, it follows from Lemma~\ref{lem:eig:k} that 
\begin{align}
&\lim_{\eps \rightarrow 0} \frac{1}{\eps^2} \min_{\Ph_{X^d} \in \cS_1(\eps)} D(\Ph_{X^d}\| P_{X^d}) \notag\\
&= \lim_{\eps \rightarrow 0} \frac{1}{\eps^2} \min_{\Ph_{X^d} \in \cS_2(\eps)} D(\Ph_{X^d}\| P_{X^d}).\label{eq:s1:s2}
\end{align}
In addition, from the second order Taylor's expansion of the K-L divergence we have
\begin{align}
  D\bigl(\Ph_{X^d} \big\| P_{X^d}\bigr) = \frac{\eps^2}{2} \|\xib_{X^d}\|^2 + o(\eps^2).
  \label{eq:local:kl}
\end{align}

Therefore, from \eqref{eq:scnn} and \eqref{eq:s1:s2}--\eqref{eq:local:kl}, the characterization of the sample complexity \eqref{eq:exponent} can be reduced to the minimization of $\|\xib_{X^d}\|^2$ subjected to \eqref{eq:cond_ortho} and \eqref{eq:condition1}. Since the left-hand side of ~\eqref{eq:condition1} is a quadratic function of $\xib_{X^d}$, this is equivalent to the optimization problem:
\begin{align}
&\quad\max\quad ~ \sum_{i = 1}^k\sum_{j = k + 1}^{m} \frac{\left[\left(\psib^{(i)}\right)^{\T}
 \dtmdmat \psib^{(j)}\right]^2}{\lambda^{(i)} - \lambda^{(j)}}\notag\\
&\text{subject to:} ~ \|\xib_{X^d}\|^2 \leq 1, \quad \sum_{x^d} \sqrt{P_{X^d}(x^d)}\xi_{X^d}(x^d) = 0.
  \label{eq:opt}
\end{align}
Note that for the optimal solution $\xib_{X^d}^*$ of~\eqref{eq:opt}, and for $\hat{x}^d \in \cX^d$ such that $P_{X^d}(\hat{x}^d) = 0$, we must have $\xi_{X^d}^* (\hat{x}^d) = 0$, otherwise we can set $\xi_{X^d}^*(\hat{x}^d) = 0$ and rescale $\xib_{X^d}^*$ to $\|\xib_{X^d}^*\| = 1$, which increases the objective function of \eqref{eq:opt}. Therefore, the optimal solution of \eqref{eq:opt} also satisfies \eqref{eq:xi:x^d}. Then, the analytical expression sample complexity~\eqref{eq:exponent} can be established by solving the optimization problem \eqref{eq:opt}.

To delineate the results, for each pair $(i, j)$, let $\Lb_{ij}$ be the $(|\cX_i| \cdot |\cX_j|) \times (|\cX_i| \cdot |\cX_j|)$ matrix whose entry at the $[(x'_j - 1)|\cX_i| + x_i]$-th row and $[(\hat{x}_j' - 1)|\cX_i| + \hat{x}_i]$-th column\footnote{In order to identify the entries of the matrix $\Lb_{ij}$, we may simply take the alphabets $\cX_i = \{1 , 2 , \ldots , |\cX_i| \}$ for all $i = 1, \dots, d$, which corresponds to some given alphabet orders of random variables $X_1, \dots, X_d$.} is defined as%
\begin{align} \notag
  &\sqrt{\frac{P_{X_iX_j}(\hat{x}_i, \hat{x}_j')}{P_{X_i}(x_i)P_{X_j}(x'_j)}} \biggl[\delta_{x_i\hat{x}_i}\delta_{x_j\hat{x}_j'} - \frac{1}{2}\left(\frac{\delta_{x_i\hat{x}_i}}{P_{X_i}(x_i)} + \frac{\delta_{x'_j\hat{x}_j'}}{P_{X_j}(x_j')}\right) \\  
   &\hspace{9em}
    \cdot\left[P_{X_iX_j}(x_i,x'_j)+P_{X_i}(x_i)P_{X_j}(x'_j)\right]\biggr].\label{eq:Lb:ij}
\end{align}
In addition, we define $\dtm_{ij;st}$ as an $(|\cX_i|\cdot|\cX_j|) \times (|\cX_s|\cdot|\cX_t|)$ matrix with entries $B_{ij;st}((x_i, x'_j); (\hat{x}_{s}, \hat{x}'_{t})) = 0$ if $P_{X_iX_j}(x_i, x'_j)P_{X_sX_t}(\hat{x}_s, \hat{x}'_t) = 0$ and
\begin{align}
  B_{ij;st}\left((x_i, x'_j); (\hat{x}_{s}, \hat{x}'_{t})\right) = \frac{P_{X_iX_jX_sX_t}(x_i, x'_j, \hat{x}_s, \hat{x}'_t)}{\sqrt{P_{X_iX_j}(x_i, x'_j)}\sqrt{P_{X_sX_t}(\hat{x}_s, \hat{x}'_t)}}
  \label{eq:dtm:4:entry}
\end{align}
if $P_{X_iX_j}(x_i, x'_j)P_{X_sX_t}(\hat{x}_s, \hat{x}'_t) > 0$. Finally, we define the $m^2 \times m^2$ matrix 
\begin{align}
  \Jb \defeq
  \begin{bmatrix}
    \Jb_{11;11}& \Jb_{11;21}& \cdots &\Jb_{11;dd}\\
    \Jb_{21;11}& \Jb_{21;21}& \cdots &\Jb_{21;dd}\\
    \vdots& \vdots &  \ddots & \vdots\\
    \Jb_{dd;11}  & \Jb_{dd;21}& \cdots& \Jb_{dd;dd}
  \end{bmatrix},
  \label{eq:Jb:def}
\end{align}
where the submatrix $\Jb_{ij;st}$ is defined as $\Jb_{ij;st} \defeq \Lb_{ij}\dtm_{ij;st}\Lb_{st}^{\T}$, for all $i, j, s, t \in \{1, \dots, d\}$.

Now, the sample complexity can be summarized as the following theorem.
\begin{theorem} \label{thm:exponent}
If $\lambda^{(k)} > \lambda^{(k + 1)}$, then the error exponent of the sample complexity is
  \begin{align} 
   -\lim_{\eps \to 0} \frac{1}{\eps^2} \lim_{n \to \infty} \frac{1}{n} \log \mathbb{P}_n \left\{ \bbfrob{ \dtmt \Psib }^2 - \bbfrob{ \dtmt \Psibh }^2 > \eps^2 \right\} = \frac{1}{2\alpha_k},
    \label{eq:exponent}
  \end{align} 
  where $\alpha_k$ is the largest singular value of the matrix $\Gb^{\frac{1}{2}}_k \Jb \Gb_k^{\frac{1}{2}}$, 
where $\Gb_k$ is defined as
\begin{align}  \label{eq:gb:k}
  \Gb_k \defeq \sum_{i = 1}^k\sum_{j = k + 1}^{m} \frac{
\left(\psib^{(j)} \circ \psib^{(i)}\right)\left(\psib^{(j)} \circ \psib^{(i)}\right)^{\T}
  }{\lambda^{(i)} - \lambda^{(j)}}
\end{align}
with
\begin{align}
  \psib^{(j)} \circ \psib^{(i)} = 
  \begin{bmatrix}
    \psib_1^{(j)}\otimes\psib_1^{(i)}\\
    \psib_1^{(j)}\otimes\psib_2^{(i)}\\
    \vdots\\
    \psib_d^{(j)}\otimes\psib_d^{(i)}
  \end{bmatrix}
  \label{eq:tracy}
\end{align}
denoting the Tracy-Singh product \cite{tracy1972new} of $\psib^{(j)}$ and $\psib^{(i)}$ as partitioned in \eqref{eq:Bmatrixa}, 
and where $\Gb_k^{\frac{1}{2}}$ is the positive semidefinite matrix $\Mb$ such that
$\Mb^2 = \Gb_k$.%

\end{theorem}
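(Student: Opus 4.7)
The discussion preceding the theorem already reduces the sample-complexity calculation, via Sanov's theorem, Lemma~\ref{lem:eig:k}, and the local quadratic expansion $D(\Ph_{X^d}\|P_{X^d}) = \tfrac{\eps^2}{2}\|\xib_{X^d}\|^2 + o(\eps^2)$, to solving the optimization \eqref{eq:opt}. My plan is to recast this as a generalized eigenvalue problem whose optimum coincides with $\alpha_k$.

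First, I would express the objective of \eqref{eq:opt} via vectorization. Applying $(\psib_s^{(i)})^{\T}\Xib_{st}\psib_t^{(j)} = (\psib_t^{(j)}\otimes\psib_s^{(i)})^{\T}\vecop(\Xib_{st})$ and summing over $(s,t)$ gives, in the Tracy--Singh notation of \eqref{eq:tracy}, $(\psib^{(i)})^{\T}\Xib\psib^{(j)} = (\psib^{(j)}\circ\psib^{(i)})^{\T}\zetab$, where $\zetab = \bigl[\vecop(\Xib_{11})^{\T}, \dots, \vecop(\Xib_{dd})^{\T}\bigr]^{\T}\in\mathbb{R}^{m^2}$; hence the left-hand side of \eqref{eq:condition1} equals $\zetab^{\T}\Gb_k\zetab$. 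Next I would write $\zetab = \mathbf{N}\xib_{X^d}$ as a composition of (i) the marginalization $\xib_{X_sX_t} = \mathbf{M}_{st}\xib_{X^d}$ read off from \eqref{eq:xi:xixj}, whose $((x_s,x_t),x^d)$-entry is $\sqrt{P_{X^d}(x^d)/P_{X_sX_t}(x_s,x_t)}$ when $x^d$ is consistent with $(x_s,x_t)$ on coordinates $(s,t)$ and zero otherwise, and (ii) the map $\vecop(\Xib_{st}) = \Lb_{st}\xib_{X_sX_t}$ from \eqref{eq:Lb:ij}. The key algebraic identity is $\mathbf{M}_{ij}\mathbf{M}_{st}^{\T} = \dtm_{ij;st}$: its $((x_i,x_j),(\hat{x}_s,\hat{x}_t'))$-entry equals $\bigl[P_{X_iX_j}(x_i,x_j)\,P_{X_sX_t}(\hat{x}_s,\hat{x}_t')\bigr]^{-1/2}$ times the sum of $P_{X^d}(x^d)$ over all $x^d$ consistent with both $(x_i,x_j)$ and $(\hat{x}_s,\hat{x}_t')$, which is exactly \eqref{eq:dtm:4:entry}. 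Stacking then yields $(\mathbf{N}\mathbf{N}^{\T})_{(ij),(st)} = \Lb_{ij}\dtm_{ij;st}\Lb_{st}^{\T} = \Jb_{ij;st}$, i.e., $\mathbf{N}\mathbf{N}^{\T} = \Jb$.

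Consequently, \eqref{eq:opt} becomes $\min\{\|\xib_{X^d}\|^2 \colon \xib_{X^d}^{\T}(\mathbf{N}^{\T}\Gb_k\mathbf{N})\xib_{X^d}\geq 1\}$, whose Rayleigh-quotient optimum is $1/\lambda_{\max}(\mathbf{N}^{\T}\Gb_k\mathbf{N})$. Since $\mathbf{N}^{\T}\Gb_k\mathbf{N}$ and $\Gb_k\mathbf{N}\mathbf{N}^{\T}=\Gb_k\Jb$ share the same nonzero spectrum, and $\Gb_k\Jb$ is similar (via $\Gb_k^{1/2}$) to the symmetric positive semidefinite matrix $\Gb_k^{1/2}\Jb\Gb_k^{1/2}$, this largest eigenvalue equals $\alpha_k$. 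The orthogonality constraint \eqref{eq:cond_ortho} is inert: along $\xib_{X^d}\propto\sqrt{P_{X^d}}$ a direct computation from \eqref{eq:Xi:ii}--\eqref{eq:Xi:ij} gives $\Xib_{st}=-\vb_s\vb_t^{\T}$ for all $s,t$, so $(\psib^{(i)})^{\T}\Xib\psib^{(j)}=0$ for $i,j\geq 1$ by Lemma~\ref{lem:1}(5), placing that direction in the kernel of the quadratic form. Combined with the $\tfrac{1}{2}$ prefactor from the KL expansion, the error exponent is $1/(2\alpha_k)$. The main obstacle is the block-level verification $\mathbf{N}\mathbf{N}^{\T}=\Jb$: although it is conceptually the joint second-moment structure of the marginalized perturbation, the computation requires careful indexing across all $(i,j)$ pairs, particularly for the degenerate $i=j$ case where $\xi_{X_iX_i}(x_i,x_i')=\delta_{x_ix_i'}\xi_{X_i}(x_i)$ and \eqref{eq:Xi:ii} must be recovered as the consistent limit of \eqref{eq:Xi:ij}.
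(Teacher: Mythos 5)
Your proposal follows essentially the same route as the paper's proof: the vectorization identity $(\psib^{(i)})^{\T}\Xib\psib^{(j)} = (\psib^{(j)}\circ\psib^{(i)})^{\T}\zetab$, the factorization $\zetab = \Jb_0\xib_{X^d}$ with $\Jb_0\Jb_0^{\T} = \Jb$ via $\Cb_{ij}\Cb_{st}^{\T} = \dtm_{ij;st}$ (your $\mathbf{N}$ and $\mathbf{M}_{st}$ are the paper's $\Jb_0$ and $\Cb_{st}$), and the final identification of the optimum with $\bspectral{\Gb_k^{1/2}\Jb\Gb_k^{1/2}}$ are all exactly the paper's steps. The only cosmetic difference is your treatment of the constraint \eqref{eq:cond_ortho}: you show the direction $\sqrt{P_{X^d}}$ lies in the kernel of the positive semidefinite quadratic form (via $\Xib = -d\,\psib^{(0)}(\psib^{(0)})^{\T}$ along that direction), whereas the paper reaches the same conclusion by a rescaling/contradiction argument; both are valid and equally short.
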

\begin{proof}
  See Appendix \ref{sec:a:thm:exponent}.
\end{proof}

From Theorem \ref{thm:exponent}, the sample complexity depends only on joint distribution of all quadruples among $X^d$.

\fi

\section{Connections to Existing Techniques} \label{sec:5}

In this section, we demonstrate the relationship between our functional representations and the Hirschfeld-Gebelein-R\'{e}nyi (HGR) maximal correlation~\cite{renyi}, linear PCA~\cite{Jolliffe}, and the consistent functional map~\cite{Huang:2014}. This  demonstrates the deep connections between our approach and existing techniques, while offering novel information theoretic interpretations to machine learning algorithms.

\subsection{The HGR maximal correlation} 

The HGR maximal correlation is a variational generalization of the well-known Pearson correlation coefficient, and was originally introduced as a normalized measure of the dependence between two random variables~\cite{renyi}.
\begin{definition}[Maximal Correlation] 
\label{Def:Maximal Correlation}
For jointly distributed random variables $X$ and $Y$, with discete ranges $\cX$ and $\cY$ respectively, the maximal correlation between $X$ and $Y$ is defined as:
\begin{align*}
\rho(X;Y) \triangleq \max_{\substack{{f\colon \cX \mapsto \mathbb{R}, \enspace g\colon \cY \mapsto \mathbb{R} %
  } }} \E\left[f(X)g(Y)\right]
\end{align*}
where the maximum is taken over zero-mean and unit-variance functions $f(X)$ and $g(Y)$.
\end{definition}
The HGR maximal correlation has been shown useful not only as a statistical measurement, but also in designing machine learning algorithms for regression problems~\cite{SL15}\cite{SL17}\cite{FT15}. To draw the connection, note that in the bivariate case $d = 2$, the functions derived in Section~\ref{sec:2} are precisely the maximal correlation functions for two random variables. In addition, our functional representation for general cases essentially defines a generalized version of the maximal correlation [cf.~\eqref{eq:MHGRk}].

\begin{definition} \label{def:gmc}
The generalized maximal correlation for jointly distributed random variables $X_1 , \ldots , X_d$ with discrete ranges $\cX_i$, for $i = 1, \ldots, d$, is defined as
\begin{align} \label{eq:MMC}
\rho^* (X_1 , \cdots , X_d) \triangleq \max \frac{1}{d-1}  \E \left[ \sum_{i \neq j} f_i(X_i) f_j(X_j) \right] 
\end{align}
for the functions $f_i : \cX_i \mapsto  \mathbb{R}$, with the constraints $\E \left[ f_i(X_i) \right] =  0, \ \E \left[ \sum_{i = 1}^d f_i^2(X_i) \right] = 1$, for all $i$.
\end{definition}
It is easy to verify that $0 \leq \rho^* (X_1 , \cdots , X_d) \leq 1$, and $\rho^* (X_1 , \cdots , X_d) = 0$ if and only if $X_1 , \ldots , X_d$ are pairwise independent.

Note that there are some other generalizations to maximal correlations to multiple random variables. For example, the network maximal correlation (NMC) proposed in~\cite{Ali} defined a correlation measurement in the same way as~\eqref{eq:MMC} but with a slightly different constraint: $\E \left[ f_i(X_i) \right] =  0, \ \E \left[ f_i^2(X_i) \right] = 1, \ \text{for all $i$.}$
In addition, \cite{SD} proposed a maximally correlated principal component analysis, which considered the SVD of a matrix similar to $\dtm$. However, our approach and results essentially offer the information theoretic justification of generalizing the maximal correlation as extracting common structures shared among random variables, and also provide the guidance to algorithm designs.

\subsection{Linear PCA}

It turns out that the functional representation derived in Section~\ref{sec:2} is a nonlinear generalization to the linear PCA~\cite{Jolliffe}. To see that, consider a sequence of data vectors $\ux^{(\ell)} = ( x^{(\ell)}_1 , \ldots , x^{(\ell)}_d ) \in \mathbb{R}^d$, for $\ell = 1, \ldots, n$, where the sample mean and variance for each dimension are zero and one, respectively, i.e., $\sum_{\ell = 1}^n x_i^{(\ell)}= 0$, and $\frac{1}{n} \sum_{\ell = 1}^n  (x_i^{(\ell)})^2 =1$, for all $i$. Then, the linear PCA aims to find the principle vector $\uw = ( w_1 , \ldots , w_d )$ with unit norm such that $\sum_{\ell = 1}^n \langle \uw , \ux^{(\ell)} \rangle^2$ is maximized; or equivalently, to maximize
\begin{align} \label{eq:PCA}
\frac{1}{n}\sum_{\ell = 1}^n \sum_{i \neq j} \left( w_i x^{(\ell)}_i  \right) \left( w_j x^{(\ell)}_j \right) = \E \left[ \sum_{i \neq j} \left( w_i X_i  \right) \cdot \left( w_j X_j \right) \right]
\end{align}
subject to the constraint 
\begin{align} \label{eq:PCA2}
1 = \sum_{i = 1}^d w_i^2 = \sum_{i = 1}^d \E \left[ \left( w_i X_i \right)^2 \right],
\end{align}
where the expectations in~\eqref{eq:PCA} and~\eqref{eq:PCA2} are taking over the empirical distributions $P_{X_i X_j}$ and $P_{X_i}$ from the data vectors. Comparing to the Definition~\ref{def:gmc}, we can see that our functional representation generalizes the linear PCA to nonlinear functional spaces of data. We would like to emphasize that~\cite{SL17} also provides a nonlinear generalization to PCA for the Gaussian distributed data vectors by the local geometric approach. Our approach presented in this paper essentially offers another generalization for general discrete data vectors.

\subsection{Consistent Functional Map}

In computer vision, a typical question is to find the shared components among a collection of shapes, for example, the legs of chairs, when some noisy maps between these shapes are given. An effective approach to extract such shared structure between shape collections, called consistent functional map network, is recently proposed in~\cite{Huang:2014, Wang:2013, Wang:2014}. The main idea of the consistent functional map network is to formulate the shared components as low-dimensional subspaces of the functional spaces of these shapes, and the given noisy maps between these shapes are formulated as the transition maps between these functional spaces. Then, the goal of the consistent functional map network is to find a low-dimensional subspace of the functional space of each shape, such that under a cycle of transition maps between shapes, this low-dimensional subspace remains the same. 

Note that this idea is similar to the functional representation we derived in this paper, except that the consistent functional map network considers the transition maps between shapes that are deterministic maps, while we consider stochastic maps between random variables. In fact, it is shown in~\cite{Huang:2014} that if we write the noisy maps between shapes $i$ and $j$ as $\Mb_{ij}$, then such subspaces can be solved by the eigen-decomposition of a matrix by replacing the stochastic transition map $\dtm_{ij}$ of $\dtm$ in~\eqref{eq:Bmatrix} into the noisy (deterministic) maps $\Mb_{ij}$ between shapes. Therefore, the functional representation presented in this paper can be viewed as an extension of the consistent functional map network to general stochastic object, which can essentially be applied to a wider range of problems.

\section{The Numerical Simulations}

The functional representations of the data can be viewed as low-dimensional feature functions selected from the hidden common structure. In this section, we show that such selected low-dimensional feature functions can practically be useful by verifying the performance in the MNIST Handwritten Digit Database~\cite{MNIST} for digits recognition. In the MNIST database, there are $n = 60\,000$ images contained in the training sets, and each image has a label that represents the digits ``0'' to ``9''. The images in this database are consisted of $28 \times 28$ pixels, where each image pixel takes the value ranging from 0 to 255. While this is a supervised learning problem, we will show that both Algorithm~\ref{alg:2} and the low-rank approximation method described in Section \ref{sec:low-rank} can be applied to select features from images directly without the knowledge of labels, and these features, although selected in an unsupervised way, have good performance in handwritten digit recognition. 

To begin, we need to identify the random variables $X_i$ in the MNIST problem. For this purpose, we divide each image into $8 \times 8 = 64$ overlapping subareas, where each sub-image has $6 \times 6$ pixels, and two nearby subareas are overlapped with $3$ pixels. Fig.~\ref{fig:MNIST} illustrates this division of images.

The purpose of dividing entire image into subareas is to reduce the complexity of training joint feature functions among image pixels, while capturing the correlations between nearby pixels. Then, each sub-image $i$ of the $64$ subareas can be viewed as a random variable $X_{i}$, for $i = 1, \ldots , 64$. Therefore, if we denote $x^{(\ell)}_{i}$ as the value of the sub-image $i$ of the $\ell$-th image of the MNIST database, then each random variable $X_i$ has $n$ training samples $x^{(1)}_i , \ldots, x^{(n)}_{i}$.

\subsection{Apply the MACE Algorithm to MNIST}
To apply the MACE Algorithm~\ref{alg:2}, we further quantize each image pixel into binary signals ``0'' and ``1'' with the quantization threshold 40.
Note that each $x^{(\ell)}_i$ is essentially a $36$-dimensional binary vector, thus the cardinality of the alphabet $| \X_i | = 2^{36}$. To reduce the cardinality, for each subarea $i$, we go through $n$ training images to find all possible binary vectors in $\{0,1\}^{36}$, and then map these binary vectors into a smaller alphabet set, such that two binary vectors with Hamming distance no greater than three are mapped into the same alphabet. This quantization procedure is illustrated in Algorithm~\ref{alg4}.

\begin{algorithm} [t]
\caption{Quantizing Alphabets to Reduce the Cardinality}
\label{alg4}
\begin{algorithmic}[]
\Require{training samples $\left\{x^{(\ell)}_i\colon \ell=1, \dots, n\right\}$}
\\ Initialize: set the alphabet $\X_{i} \gets \varnothing$.\\
\textbf{For} $\ell = 1:n$ \\
$\quad$ \textbf{If} $\exists\, x \in \X_{i}$, such that $d_H\left(x, x^{(\ell)}_i\right) \leq 3$. \\
$\quad$ $\quad$ \textbf{Then} set $x^{(\ell)}_i \gets x$. \\
$\quad$ \textbf{Else} $\X_i \gets \X_{i} \cup \left\{ x^{(\ell)}_i \right\}$ \\
\textbf{End}
\end{algorithmic}
\end{algorithm}

After this pre-processing, 64 random variables $X_{i}$ are specified, and each image $\ell$ can be viewed as a 64-dimensional data vector $(x^{(\ell)}_1 , \ldots , x^{(\ell)}_{64})$, for $\ell = 1, \ldots , n$. Then, we apply Algorithm~\ref{alg:2} to compute $k$ feature functions $\vec{f}_{i} = (f_{i}^{(1)}, \ldots , f_{i}^{(k)})$  for each random variable $X_{i}$. These feature functions map the pre-processed training image $\ell$ into a $(64k)$-dimensional score vector 
\begin{align*}
\vec{s}_{\ell} = \left( \vec{f}_1(x^{(\ell)}_1), \ldots , \vec{f}_{64}(x^{(\ell)}_{64}) \right),
\end{align*}
which extracts non-linear features of the image. Note that in this step, we select the feature functions only from the image pixels but without the knowledge of the labels.  

With the score vectors computed, at the second step we apply the linear support vector machine (SVM)~\cite{SVM} to classify the vectors $\vec{s}_\ell$, for $\ell = 1, \ldots, n$ into ten groups with respect to the labels $z_\ell$. This results in a linear classifier that associates a label $\hat{z}_\ell \in \{ 0, \ldots , 9 \}$ to each score vector $\vec{s}_\ell$, and the label represents the recognized digit of the image corresponding to the score vector.

To test the performance of this linear classifier in the set of test images, we first conduct the same pre-processing to the test images, and map the pre-processed test images into $(64k)$-dimensional score vectors by the feature functions $\vec{f}_{i}$. Then, the linear classifier is applied to recognize the digits in the test images, and the error probabilities of recognizing the digits via the score vectors with different values of $k$ are demonstrated in the following table.

\begin{center} \label{t1}
  \begin{tabular}{| c | c | c | c | c | c | c |}
    \hline
    $k$ & 4  & 8  & 12 & 16 & 20 & 24 \\ \hline
    Error rate ($\%$)  & 4.74 & 2.44 & 2.36 & 2.21 & 2.15 & 2.08   \\ \hline
  \end{tabular}
\end{center}

Note that our approach can be viewed as mapping the image pixels to the feature space by \emph{one layer} of informative score functions and then apply the linear classification. It turns out that the error rate of our approach is comparable to the neural networks with \emph{two layers} of feature mapping by the sigmoid functions (the error rate is 2.95\% for a 3-layer fully-connected neural network with 500 and 150 units in two hidden layers~\cite{MNIST, NN}). Moreover, the neural networks select the features with the aid of labels, while the feature functions in our approach are selected without the knowledge of label but from the shared structure between subareas. This essentially shows how the information from shared structures can be applied to practical problems by our algorithms.

\subsection{Finding Functional Representations by Neural Networks} \label{sec:simu_NN}

As illustrated in Section \ref{sec:low-rank}, we first use 64 neural networks $\mathtt{NN}_1, \dots, \mathtt{NN}_{64}$ to generate representations $\uf_1(X_1), \dots, \uf_{64}(X_{64})$ from images, where each neural network $\mathtt{NN}_i$ consists of two convolutional layers as shown in Fig. \ref{fig:cnn}. Using the negative MH-score $-H\left(\uf_1(X_1), \dots, \uf_{64}(X_{64})\right)$ as the loss function, we then train these 64 neural networks to obtain the optimal functional representations.

\begin{figure}[!t]
    \centering
    \resizebox{.37\textwidth} {!}{\tikzset{%
  block/.style    = {draw = black, thick, rectangle, minimum height = 1.5em, minimum width = 5em, text width = 8.5em, align = center},
  blockc/.style    = {block, fill = gray!50!white},
  input/.style    = {coordinate}, 
  output/.style   = {coordinate} 
}
\newcommand{\suma}{\Large$+$}

\def\blkdist{.3cm}
\begin{tikzpicture}[auto, thick, node distance=1.8cm, ->,draw=white!60!black!20!blue]
  
  \draw
  node [block, text width = 12em, draw = none] (blk1) {Input Image $X_i$: 6 $\times$ 6}
  node [blockc, above = \blkdist of blk1] (blk2) {Conv: 2 $\times$ 2 $\times$ 16}
  node [blockc, above = \blkdist of blk2] (blk3) {Conv: 2 $\times$ 2 $\times$ 32}
  node [block, above = \blkdist of blk3] (blk4) {Max Pooling: 2 $\times$ 2}
  node [block, above = \blkdist of blk4] (blk5) {Dropout (0.25)}  
  node [block, above = \blkdist of blk5] (blk6) {Flatten}  
  node [blockc, above = \blkdist of blk6] (blk7) {Fully Connected 
  }
  node [block, draw = none, above = \blkdist of blk7] (out) {$k$-dimensional Output: $\uf_i(X_i)$};
  \draw[->](blk1) -- (blk2);
  \draw[->](blk2) -- (blk3);
  \draw[->](blk3) -- (blk4);
  \draw[->](blk4) -- (blk5);
  \draw[->](blk5) -- (blk6);
  \draw[->](blk6) -- (blk7);  
  \draw[->](blk7) -- (out);

  \begin{pgfonlayer}{background}
    \node[draw, dashed,fit=(blk2) (blk3) (blk4) (blk5) (blk6) (blk7), label={[xshift=6em, yshift=-15em]$\mathtt{NN}_i$}] (A) {};
  \end{pgfonlayer}
\end{tikzpicture}

    \caption{The architecture of the $i$-th neural network $\mathtt{NN}_i$ %
      that extracts feature $\uf_i(X_i)$ from the input $X_i$.}
  \label{fig:cnn}
\end{figure}
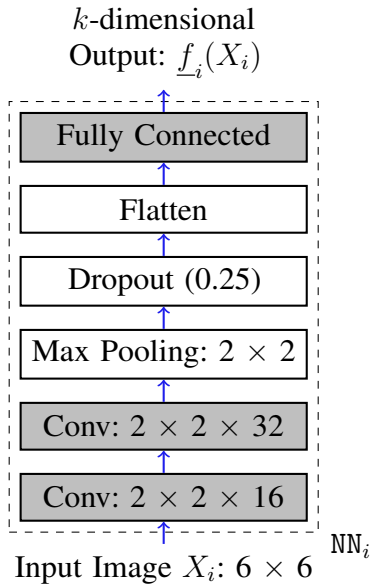

With the functional representations trained from the training set, we again adopt the linear SVM for the classification task and use this linear classifier to recognize the test images. The following table shows the classification error with different values of $k$.
\begin{center} \label{t2}
  \begin{tabular}{| c | c | c | c | c | c | c |}
    \hline
    $k$ & 4  & 8  & 12 & 16 & 20 & 24 \\ \hline
    Error rate ($\%$)  & 3.46 & 1.73 & 1.43 & 1.17 & 1.15 & 1.11   \\ \hline
  \end{tabular}
\end{center}

Compared with the results from the MACE Algorithm \ref{alg:2}, the neural network approximation method has better performance. This performance gain is mainly from directly processing the subareas of images by the CNNs without the information loss in the quantization step. In addition, the features extracted in this unsupervised approach can achieve the performance comparable to the CNN-based supervised learning algorithms, such as LeNet-4, which has an error rate of 1.1\% \cite{NN}.

\appendices
\section{Proof of Lemma \ref{lem:1}}
\label{sec:a:lem:1}
To show the first property, we define the $|\cX_i | \times (|\cX_1| \cdot |\cX_2| \cdots |\cX_d|)$ matrix $\dtm_i$, for $i = 1, \ldots , d$, as
\begin{align*}
B_i (x'_i; x^d) = 
\begin{cases}
\frac{\sqrt{P_{X^d}(x^d)} }{\sqrt{P_{X_i}(x'_i)}}, & \text{if} \ x'_i = x_i, \\
0, & \text{otherwise.}
\end{cases}
\end{align*}
Then, one can verify that
\begin{align*}
  \dtm =
  \begin{bmatrix}
    \dtm_1\\
    \vdots\\
    \dtm_d
  \end{bmatrix}
  \begin{bmatrix}
    \dtm_1^{\T}&
    \cdots&
    \dtm_d^{\T}
  \end{bmatrix},
\end{align*}
which implies that $\dtm$ is positive semidefinite.

To establish the second property, note that $\psib^{(0)}$ is an eigenvector of $\dtm$ with eigenvalue $d$, and thus
  $\left(\psib^{(0)}\right)^\T \dtm \psib^{(0)} = d$.
Moreover, it is shown in~\cite{SL12} that the largest singular value of $\dtm_{ij}$ is 1, i.e., $\bbspectral{\dtm_{ij}} = 1$, where $\bbspectral{\cdot}$ denotes the spectral norm of its matrix argument. 

Therefore, for $\psib = \left[\psib_1^\T, \dots, \psib_d^\T\right]^\T$
with each $\psib_i$ being an $|\cX_i|$-dimensional vector, we have
\begin{align*}
  \psib^\T\dtm\psib
    = \sum_{i = 1}^d\sum_{j = 1}^d\psib_i^\T\dtm_{ij}\psib_j%
  &\leq \sum_{i = 1}^d\sum_{j = 1}^d \|\psib_i\|\cdot \bbspectral{\dtm_{ij}}\cdot \|\psib_j\|%
    = \left(\sum_{i = 1}^d \|\psib_i\|\right)^2\\  
  &\leq d \sum_{i = 1}^d \|\psib_i\|^2%
    = d \|\psib\|^2,
\end{align*}
where the second inequality follows from the fact that the arithmetic mean is no greater than the quadratic mean.
Hence, we have
\begin{align*}
  \max_{\psib\colon \|\psib\| = 1} \psib^\T\dtm\psib = d,
\end{align*}
i.e., the largest eigenvalue of $\dtm$ is $d$.

To verify the third property, we construct $\psib'$ as
\begin{align*}
  \psib' =
  \begin{bmatrix}
    \psib'_1\\
    \zerob
  \end{bmatrix},
\end{align*}
where $\psib'_1 \in \mathbb{R}^{|\cX_1|}$ is chosen such that $\langle\psib'_1, \vb_1\rangle = 0$ and $\|\psib'_1\|^2 = 1$, and where $\zerob$ is the $(m-|\cX_1|)$-dimensional zero vector. Therefore, we have
  $\left\langle\psib', \psib^{(0)}\right\rangle = 0$ and
  $\left\|\psib'\right\|^2 = 1$.
Note that the second eigenvalue $\lambda^{(1)}$ of $\dtm$ can be written as
\begin{align*}
  \lambda^{(1)} =  \max_{\psib\colon \|\psib\| = 1, \left\langle\psib, \psib^{(0)}\right\rangle = 0} \psib^\T\dtm\psib,
\end{align*}
which implies that
  $\lambda^{(1)} \geq \left(\psib'\right)^\T\dtm\psib' = \|\psib_1'\|^2 = 1$.

To verify the fourth property, we define the $(d-1)$-dimensional subspace
$\cS_{\mathrm{eig}}$ as
\begin{align}
  \cS_{\mathrm{eig}} \defeq
  \left\{
  \psib = \left[\alpha_1 \vb_1^{\T} , \ldots , \alpha_d \vb_d^{\T}\right]^{\T}\colon \sum_{i = 1}^d \alpha_i = 0
  \right\}.
\end{align}
Then, for all $\psib \in \cS_{\mathrm{eig}}$, from $\dtm_{ij}\vb_j = \vb_i$, it is straightforward to verify that $\dtm\psib = \zerob_m$, where $\zerob_m$ is the zero vector in $\mathbb{R}^m$. Therefore, $\cS_{\mathrm{eig}}$ is an eigenspace of $\dtm$ associated with $d - 1$ zero eigenvalues. Since $\dtm$ is positive semidefinite, without loss of generality we can assume that $\cS_{\mathrm{eig}}$ is spanned by $\psib^{(m - d + 1)}, \dots, \psib^{(m - 1)}$, which correspond to eigenvalues $\lambda^{(m - d + 1)} = \cdots = \lambda^{(m - 1)} = 0$.

Finally, to establish the last property, for each $\ell = 1, \dots, m - d$, from $\left\langle\psib^{(\ell)}, \psib^{(0)}\right\rangle = 0$ we have
\begin{align*}
  \sum_{i = 1}^d \left\langle\psib_i^{(\ell)}, \vb_i\right\rangle = 0.
\end{align*}
Therefore, from the third property, we have
\begin{align*}
  \psib' =
  \begin{bmatrix}
    \left\langle\psib_1^{(\ell)}, \vb_1\right\rangle \vb_1\\
    \vdots\\
    \left\langle\psib_d^{(\ell)}, \vb_d\right\rangle \vb_d\\    
  \end{bmatrix}
  \in \cS_{\mathrm{eig}}.
\end{align*}
Hence, we obtain $\langle\psib', \psib^{(\ell)}\rangle = 0$, i.e.,
\begin{align*}
  \sum_{i = 1}^d \left\langle\psib_i^{(\ell)}, \vb_i\right\rangle^2 = 0,
\end{align*}
which implies that
$ \left\langle\psib_i^{(\ell)}, \vb_i\right\rangle = 0$ for %
$i = 1, \dots, d $.
%

\section{Proof of Theorem \ref{thm:1}}
\label{sec:a:thm:1}
First, we replace $\delta$ by $\frac{1}{2}\epsilon^2$ for the convenience of presentation when applying the local geometric approach. With this notation, the constraint~\eqref{eq:total_correlation2} becomes 
\begin{align} \label{eq:eps^2/2}
I(U; X^d) \leq \frac{1}{2}\epsilon^2, 
\end{align}
with $\epsilon$ assumed to be small. Then, it follows from~\eqref{eq:no_tiltied} and~\eqref{eq:eps^2/2} that for all $u$, the conditional distribution $P_{X^d | U = u}$ can be written as a perturbation to the marginal distribution:
\begin{align} \label{eq:pert_vec}
P_{X^d | U} (x^d | u) =  %
P_{X^d} (x^d)  + \epsilon \sqrt{P_{X^d} (x^d)} \phi_u (x^d)
\end{align}
where $\phi_u$ can be viewed as an $(|\cX_1| \cdot |\cX_2| \cdots |\cX_d|)$-dimensional vector. Moreover, it follows from the second order Taylor's expansion for the K-L divergence that
\begin{align*} %
I(U; X^d) 
&= \E_U \left[ D(P_{X^d | U} \| P_{X^d} ) \right] %
  =  \frac{1}{2}\epsilon^2 \E_U \left[ \left\| \phib_U \right\|^2 \right] + o(\epsilon^2),
\end{align*}
where $\| \cdot \|$ denotes the $l_2$-norm. Thus, by ignoring the higher order term of $\epsilon$ as we assume $\epsilon$ to be small, the constraint $I(U; X^d) \leq \frac{1}{2}\epsilon^2$ can be reduced to
\begin{align} \label{eq:aaa}
 \E_U [\| \phib_U \|^2] \leq 1.
\end{align}

In addition, the objective function $\ell( X^d | U)$ can also be expressed in terms of mutual informations:
\begin{align} %
&D(P_{X^d} \| P_{X_1} \cdots P_{X_d}) - D(P_{X^d} \| P_{X_1} \cdots P_{X_d} | U) %
  = \sum_{i=1}^d I(U;X_i) - I(U; X^d) \label{eq:CI}
\end{align}
and for each $i$, the mutual information $I(U;X_i)$ can be again approximated as the $l_2$-norm square
\begin{align*} %
I(U;X_i) = \frac{1}{2}\epsilon^2  \E_U [\| \psib_{i,U} \|^2] + o(\epsilon^2),
\end{align*}
where for $U=u$, the vector $ \psib_{i,u}$ is the $|\cX_i |$-dimensional perturbation vector defined as
\begin{align} \label{eq:xp}
\psi_{i,u} (x_i) = \frac{P_{X_i|U}(x_i | u) - P_{X_i} (x_i)}{\epsilon \sqrt{P_{X_i}(x_i)}}
\end{align}

Then, by ignoring the higher order terms of $\epsilon$, the optimization problem we want to solve can be transferred to a linear algebraic problem
\begin{align} \label{eq:problem2}
\max_{\E_U [\| \phib_U \|^2] \leq 1} \  \sum_{i=1}^d \E_U [\| \psib_{i,U} \|^2] - \E_U [\| \phib_U \|^2].
\end{align}

To solve~\eqref{eq:problem2}, observe that $P_{X_i}$ and $P_{X_i | U}$ are marginal distributions of $P_{X^d}$ and $P_{X^d | U}$, thus there is a correlation between $\phib_U$ and $\psib_{i,U}$:
\begin{align*} %
\psi_{i,u} (x_i) = %
\sum_{x_1, \ldots , x_{i-1}, x_{i+1}, \ldots , x_d} \frac{\sqrt{P_{X^d} (x^d)}}{\sqrt{P_{X_i}(x_i)}} \phi_u (x^d)
\end{align*}
which can be represented in matrix form as $\psib_{i,u} = \dtm_i \cdot \phib_u$, where $\dtm_i$ is an $|\cX_i | \times (|\cX_1| \cdot |\cX_2| \cdots |\cX_d|)$ matrix with entries
\begin{align}
  B_i (x'_i; (x_1 , \ldots , x_d)) = 
  \left\{
  \begin{array}{cc}
    \frac{\sqrt{P_{X^d}(x^d)} }{\sqrt{P_{X_i}(x'_i)}} & \text{if} \ x_i' = x_i, \\
    0  & \text{otherwise.}
  \end{array}
  \right.
  \label{eq:def:Bi}
\end{align}
Therefore, if we define an $(|\cX_1| + \cdots + |\cX_m|)\times (|\cX_1|  \cdots  |\cX_m|)$-dimensional matrix
\begin{align}
  \dtm_0 \triangleq
  \begin{bmatrix}
    \dtm_1\\
    \vdots\\
    \dtm_d
  \end{bmatrix},
  \label{eq:B0}
\end{align}
then since
\begin{align*}
    \sum_{i=1}^d \E_U [\| \psib_{i,U} \|^2] = \sum_{i=1}^d \E_U [\| \dtm_i \cdot \phib_U \|^2] = \E_U [\| \dtm_0 \cdot \phib_U \|^2],
\end{align*}
we can rewrite \eqref{eq:problem2} as
\begin{align} \label{eq:problem_linear}
  \max_{\E_U [\| \phib_U \|^2] \leq 1}  \E_U [\| \dtm_0 \cdot \phib_U \|^2] - \E_U [\| \phib_U \|^2].
\end{align}
Moreover, since $\phib_U$ is a perturbation vector of probability distributions, by summing over all $x^d$ %
for both sides of~\eqref{eq:pert_vec}, it has to satisfy an extra constraint
\begin{align} %
  \sum_{x^d} \sqrt{P_{X^d} (x^d)} \phi_u (x^d) = 0,
  \label{eq:ortho:phi0}
\end{align}
which implies that $\phib_U$ is orthogonal to an $\left(|\cX_1| \cdot |\cX_2| \cdots |\cX_d|\right)$-dimensional vector $\phib^{(0)}$, whose entries are $\sqrt{P_{X^d}(x^d)}$. In particular, it is shown in~\cite{SL12} that $\phib^{(0)}$ is the right singular vector of $\dtm_0$ with the largest singular value $\sigma_0 = \sqrt{d}$, and the corresponding left singular vector is $\psib^{(0)}$. %
In addition, it can be verified that $\dtm_0$ satisfies $\dtm_0 \dtm_0^\T = \dtm$ with $\dtm$ as defined in \eqref{eq:Bmatrix}. Therefore, the second largest singular value of $\dtm_0$ is $\sigma_1 = \sqrt{\lambda^{(1)}} \geq 1$, and the optimal solution of~\eqref{eq:problem_linear} is to align the vectors $\phib_{U=u}$, for all $u$, along the second largest right singular vector of $\dtm_0$. 

It turns out that it is easier to compute the second largest left singular vector of $\dtm_0$ instead of the right one. This is equivalent to computing the second largest eigenvector of the matrix $\dtm_0 \dtm_0^\T = \dtm$.
%

%
%

Now, %
the second largest right singular vector $\phib^{(1)}$ of $\dtm_0$ can be computed as
\begin{align} %
\phi^{(1)} (x^d) 
  = \frac{1}{\sqrt{\lambda^{(1)}}} (\dtm_0^\T \psib^{(1)}) (x^d) %
&=  %
  \frac{1}{\sqrt{\lambda^{(1)}}} \cdot
  \left( \sqrt{P_{X^d} (x^d)}\sum_{i = 1}^d \frac{\psi_i^{(1)}(x_i)}{\sqrt{P_{X_i}(x_i)}} \right) \notag\\
&= \sqrt{P_{X^d} (x^d)} \cdot \left( \frac{1}{\sqrt{\lambda^{(1)}}} \sum_{i = 1}^d f^{(1)}_i(x_i) \right),\label{eq:phi:1}
\end{align}
where $\dtm_0^\T \psib^{(1)}$ is a vector and $(\dtm_0^\T \psib^{(1)}) (x^d)$ is the $x^d$%
-th entry of this vector. Since all the $\phib_{U=u}$ should be aligned to $\phib^{(1)}$, there exists a function $h : {\cal U} \mapsto \mathbb{R}$, such that
\begin{align*} %
P_{X^d |U} (x^d |u) 
= P_{X^d} (x^d)  \left( 1+  \frac{\epsilon h(u)}{\sqrt{\lambda^{(1)}}} \sum_{i = 1}^d f_i^{(1)} (x_i) \right) + o(\epsilon),
\end{align*}
where the term $o(\epsilon)$ comes from the local approximation we made for~\eqref{eq:aaa}. Therefore, the optimal joint distributions for our optimization problem can be written as
\begin{align} %
&P_{U X^d} (u, x^d) %
  = P_U(u) P_{X^d} (x^d)  \left( 1+ \frac{\epsilon h(u)}{\sqrt{\lambda^{(1)}}} \sum_{i = 1}^d f_i^{(1)} (x_i) \right) + o(\epsilon).\label{eq:bbb}
\end{align}
Note that if we sum both sides of~\eqref{eq:bbb} over all $u \in {\cal U}$, then we have $\sum_{u \in {\cal U}}P_U(u)h(u) = 0$, which implies that $h(U)$ is a zero-mean function. Moreover, it is easy to compute from~\eqref{eq:aaa} that the variance $\E [h^2(U)] = 1$. Finally, note that the exponential family ${\cal P}^{(\delta)}_{\exp}$, when $\delta$ is small, can be written as
\begin{align} %
{\cal P}^{(\delta)}_{\exp} &= \Bigg\{P_U(u) P_{X^d} (x^d)  %
  \cdot \left( 1+ \frac{\sqrt{2\delta} h(u)}{\sqrt{\lambda^{(1)}}} \sum_{i = 1}^d f_i^{(1)} (x_i) \right) + o\left(\sqrt{\delta}\right) : h \in {\cal H}_{\delta} \Bigg\}.\label{eq:ccc}
\end{align}
Since $\delta = \frac{1}{2} \epsilon^2$ the proof is completed by comparing~\eqref{eq:ccc} and~\eqref{eq:bbb}. 
\section{Proof of Theorem \ref{thm:2}}
\label{sec:a:thm:2}
First, we introduce a useful lemma (see, e.g., \cite[Corollary 4.3.39, p. 248]{horn2012matrix}).
\begin{lemma}
  \label{lem:frob}
  Given an arbitrary $k_1 \times k_2$ matrix $\Ab$ and any $k \in \bigl\{1, \dots, \min\{k_1, k_2\}\bigr\}$, we have
  \begin{align}
    \max_{\Mb \in \mathbb{R}^{k_2 \times k}} \bfrob{\Ab\Mb}^2 = \sum_{i = 1}^k \sigma_i^2,
    \label{eq:lem:frob}
  \end{align}
  where $\frob{\cdot}$ denotes the Frobenius norm, and where $\sigma_1 \geq \dots \geq \sigma_{\min\{m, n\}}$ denotes the singular values of $\Ab$. Moreover, the maximum in \eqref{eq:lem:frob} can be achieved by
    $\Mb =
    \begin{bmatrix}
      \vb_1 & \cdots & \vb_k
    \end{bmatrix}
    \Qb$,
  where $\vb_i$ denotes the right singular vector of $\Ab$ corresponding to $\sigma_i$, for $i = 1, \dots, \min\{m, n\}$, and $\Qb \in \mathbb{R}^{k\times k}$ is an orthogonal matrix.
\end{lemma}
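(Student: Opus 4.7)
The plan is to reduce the maximization to a diagonal quadratic program by invoking the singular value decomposition of $\Ab$. Writing the objective as $\bfrob{\Ab\Mb}^2 = \trop{\Mb^\T\Ab^\T\Ab\Mb}$ and substituting the SVD $\Ab = \Ub\mathbf{\Sigma}\Vb^\T$, the change of variables $\Wb \defeq \Vb^\T\Mb$ turns the objective into $\trop{\Wb^\T\mathbf{D}\Wb}$, where $\mathbf{D}\defeq\mathbf{\Sigma}^\T\mathbf{\Sigma}$ is the $k_2\times k_2$ diagonal matrix with entries $\sigma_1^2,\dots,\sigma_{k_2}^2$ (padded with zeros whenever $k_1<k_2$). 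Because $\Vb$ is orthogonal, the implicit constraint $\Mb^\T\Mb = \Ib_k$ is equivalent to $\Wb^\T\Wb = \Ib_k$, so the problem reduces to maximizing $\trop{\Wb^\T\mathbf{D}\Wb}$ over $\Wb\in\mathbb{R}^{k_2\times k}$ with orthonormal columns.

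Next, expanding the trace row by row with $\bm{w}_i^\T$ denoting the $i$-th row of $\Wb$ yields $\trop{\Wb^\T\mathbf{D}\Wb} = \sum_{i=1}^{k_2}\sigma_i^2\,\|\bm{w}_i\|^2$. The constraint $\Wb^\T\Wb = \Ib_k$ forces $\sum_i\|\bm{w}_i\|^2 = \trop{\Wb^\T\Wb} = k$, while completing the $k$ orthonormal columns of $\Wb$ to a full orthonormal basis of $\mathbb{R}^{k_2}$ shows $\|\bm{w}_i\|^2\le 1$ for every $i$. Setting $t_i\defeq\|\bm{w}_i\|^2$ then converts the objective into the linear functional $\sum_i\sigma_i^2\, t_i$ on the polytope $\{t\in\mathbb{R}^{k_2}\colon 0\le t_i\le 1,\ \sum_i t_i = k\}$; since $\sigma_1^2\ge\dots\ge\sigma_{k_2}^2$, a standard exchange argument shows that its maximum equals $\sum_{i=1}^k\sigma_i^2$, attained uniquely by $t_1=\dots=t_k=1$ and $t_{k+1}=\dots=t_{k_2}=0$.

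For achievability, I would simply reverse the change of variables: taking $\Mb = [\vb_1\ \cdots\ \vb_k]\Qb$ for any orthogonal $\Qb\in\mathbb{R}^{k\times k}$ yields a $\Wb = \Vb^\T\Mb$ whose first $k$ rows coincide with the rows of $\Qb$ and whose remaining $k_2-k$ rows vanish, so the row-norms realize the extremal pattern $1,\dots,1,0,\dots,0$ that saturates the bound. The only mildly delicate point is justifying the row-norm inequality $\|\bm{w}_i\|^2\le 1$; this is an immediate consequence of the basis-completion argument above, and everything else amounts to routine linear algebra.
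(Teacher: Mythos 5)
Your proof is correct, but it is worth noting that the paper itself does not prove this lemma at all: it simply cites it as a known result (Corollary 4.3.39 of Horn and Johnson's \emph{Matrix Analysis}) and moves on. What you have supplied is therefore a genuinely self-contained argument where the paper offers only a pointer to the literature. Your route is the standard Ky Fan--type argument: reduce via the SVD and the change of variables $\Wb = \Vb^\T\Mb$ to maximizing $\trop{\Wb^\T \mathbf{D} \Wb}$ with $\mathbf{D}$ diagonal, observe that the objective is $\sum_i \sigma_i^2 \|\bm{w}_i\|^2$ in the row norms, bound each row norm by $1$ via basis completion and their sum by $k$, and finish with a linear-programming exchange argument over the resulting polytope. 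This buys transparency and elementarity (nothing beyond the SVD and a one-line LP fact), at the cost of a page of routine verification that the citation avoids. Two small remarks: first, you correctly identified that the constraint $\Mb^\T\Mb = \Ib_k$ is only \emph{implicit} in the paper's statement --- as literally written, the maximization over all of $\mathbb{R}^{k_2 \times k}$ is unbounded, so the lemma's statement has a typo that your reading silently repairs (and which is consistent with how the lemma is invoked in the proof of Theorem~2, where the argument matrix satisfies $\Phib_0^\T\Phib_0 = \Ib_{k_0}$). Second, your claim that the extremal point $t_1 = \dots = t_k = 1$, $t_{k+1} = \dots = t_{k_2} = 0$ is attained \emph{uniquely} is false when $\sigma_k = \sigma_{k+1}$; this is harmless since the lemma only asserts the optimal value and one achieving $\Mb$, but you should drop the word ``uniquely.''
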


To begin the proof, similar to Theorem \ref{thm:1}, we replace $\delta$ by $\frac{1}{2}\epsilon^2$ and write the conditional distribution $P_{X^d|U^k = u^k}$ as a perturbation to the joint distribution $P_{X^d}$:
  \begin{align}
    P_{X^d|U^k}(x^d|u^k) = P_{X^d}(x^d) + \eps\sqrt{P_{X^d}(x^d)}\phib_{u^k}(x^d),
    \label{eq:phi:uk}
  \end{align}
  where $\phib_{u^k}$ is an $(|\cX_1|\cdot|\cX_2|\cdots|\cX_d|)$-dimensional vector. Again, it follows from the second-order Taylor series expansion of the K-L divergence that
  \begin{align*}
    I(U^k;X^d)
    &= \E_{U^k}[D(P_{X^d|U^k}\|P_{X^d})]%
      = \frac{1}{2}\eps^2 \E_{U^k}\left[\left\|\phib_{U^k}\right\|^2\right] + o(\eps^2).
  \end{align*}

  Similarly, for all $i = 1, \dots, k$, the conditional distribution $P_{X^d|U_i = u_i}$ can be written as
  \begin{align}
    P_{X^d|U_i}(x^d|u_i) = P_{X^d}(x^d) + \eps\sqrt{P_{X^d}(x^d)}\phib_{u_i}(x^d),
    \label{eq:phi:ui}
  \end{align}
  and we have
  \begin{align*}
    I(U_i;X^d)
    = \frac{1}{2}\eps^2 \E_{U_i}\left[\left\|\phib_{U_i}\right\|^2\right] + o(\eps^2).
  \end{align*}
  Therefore, by ignoring the higher order terms of $\eps$, the first constraint %
  can be reduced to
  \begin{align*}
    1 \geq \E_{U_1}\left[\left\|\phib_{U_i}\right\|^2\right] \geq \dots \geq \E_{U_k}\left[\left\|\phib_{U_i}\right\|^2\right].
  \end{align*}

  Moreover, due to the independence and conditional independence among the $U^k$, $\phib_{u^k}$ and $\phib_{u_i}$ satisfy
  \begin{align}
    \phib_{u^k} = \sum_{i = 1}^k \phib_{u_i} + o(1)
    \label{eq:phib:uk}
  \end{align}
  and
  \begin{align}
    \langle\phib_{u_i}, \phib_{u_j}\rangle = 0, \quad \text{for all~}i \neq j, u_i \in \cU_i, u_j \in \cU_j.%
    \label{eq:phib:uij}
  \end{align}
  Indeed, we have
  \begin{align*}
    P_{X_d|U^k}(x^d|u^k)
    &=  \frac{P_{X^d}(x^d)P_{U^k|X^d}(u^k|x^d)}{P_{U^k}(u^k)}%
      = P_{X^d}(x^d) \prod_{i = 1}^k \frac{P_{U_i|X^d}(u_i|x^d)}{P_{U_i}(u_i)},
  \end{align*}
  which implies
  \begin{align}
    \frac{P_{X_d|U^k}(x^d|u^k)}{P_{X^d}(x^d)}
    = \prod_{i = 1}^k \frac{P_{U_i|X^d}(u_i|x^d)}{P_{U_i}(u_i)}
    = \prod_{i = 1}^k \frac{P_{X^d|U_i}(x^d|u_i)}{P_{X^d}(x^d)}.
    \label{eq:p:xd:uk}
  \end{align}
  Substituting \eqref{eq:phi:uk} and \eqref{eq:phi:ui} into \eqref{eq:p:xd:uk} then yields
  \begin{align*}
    1 + \eps \frac{\phib_{u^k}(x_1^d)}{\sqrt{P_{X^d}(x^d)}}
    = \prod_{i = 1}^k \left[1 + \eps \frac{\phib_{u_i}(x_1^d)}{\sqrt{P_{X^d}(x^d)}}\right],
  \end{align*}
  and via comparing the $\eps$-order terms for both sides we obtain \eqref{eq:phib:uk}.

  To obtain \eqref{eq:phib:uij}, note that from \eqref{eq:phi:ui}, for all $i \neq j$, $u_i \in \cU_i$ and $u_j \in \cU_j$, we have
  \begin{subequations}
    \begin{align}
        \eps^2\left\langle\phib_{u_i}, \phib_{u_j}\right\rangle%
      &= \eps^2\sum_{x^d} \phi_{u_i}(x^d)\phi_{u_j}(x^d)\\
      &= \sum_{x_1^d}\biggl(\frac{1}{P_{X^d}(x^d)}\cdot [P_{X^d|U_i}(x^d|u_i) - P_{X^d}(x^d)]%
        \cdot[P_{X^d|U_j}(x^d|u_i) - P_{X^d}(x^d)]\biggr)\\
      &= \sum_{x_1^d}\frac{P_{X^d|U_i}(x^d|u_i)P_{X^d|U_j}(x^d|u_j)}{P_{X^d}(x^d)} - 1\\
      &= \sum_{x_1^d} P_{X^d}(x^d)\cdot\frac{P_{U_i|X^d}(u_i|x^d)}{P_{U_i}(u_i)}\cdot\frac{P_{U_j|X^d}(u_j|x^d)}{P_{U_j}(u_j)} - 1\\
      &= \frac{1}{P_{U_iU_j}(u_i, u_j)}\sum_{x_1^d} P_{X^d}(x^d)P_{U_iU_j|X^d}(u_i,u_j|x^d) - 1\label{eq:uij}\\
      &= 0,
    \end{align}
  \end{subequations}
  where to obtain \eqref{eq:uij} we have again exploited the independence and conditional independence of $U_i$ and $U_j$.   
  
  In addition, the objective function $\cL(X^d|U^k)$ can be expressed as [cf. \eqref{eq:CI}]
  \begin{align}
      D(P_{X^d}\|P_{X_1}\dots P_{X_d}) - D(P_{X^d}\|P_{X_1}\dots P_{X_d}|U^k)\notag%
    &= \sum_{i = 1}^d I(U^k;X_i) - I(U^k;X^d)\notag\\
    &= \sum_{i = 1}^d I(U^k;X_i) - \sum_{j = 1}^kI(U_j;X^d),
  \end{align}
  where to obtain the last equality we have used the fact that
  \begin{subequations}
    \begin{align}
      I(U^k;X^d)
      &= \E_{U^kX^d}\left[\log \frac{P_{U^k|X^d}(U^k|X^d)}{P_{U^k}(U^k)}\right]\\
      &= \E_{U^kX^d}\left[\sum_{j = 1}^k\log \frac{P_{U_j|X^d}(U_j|X^d)}{P_{U_j}(U_j)}\right]\label{eq:u:x}\\
      &= \sum_{j = 1}^k I(U_j; X^d),
    \end{align}
  \end{subequations}
  and where \eqref{eq:u:x} follows from the facts that $U_1, \dots, U_k$ are mutually independent and are conditionally independent given $X^d$.

  For each $i$, the mutual information $I(U^k; X_i)$ can be approximated as
  \begin{align*}
    I(U^k; X_i) = \frac{1}{2} \eps^2 \E_{U^k}\left[\left\|\psib_{i, U^k}\right\|^2\right] + o(\eps^2), 
  \end{align*}
  where for $U^k = u^k$, the vector $\psib_{i, u^k}$ is an $|\cX_i|$-dimensional perturbation vector defined as
  \begin{align*}
    \psi_{i, u^k}(x_i) = \frac{P_{X_i|U^k}(x_i|u^k) - P_{X_i}(x_i)}{\eps\sqrt{P_{X_i}(x_i)}}.
  \end{align*}
  Therefore, by ignoring the higher order terms of $\eps$, the maximization of total correlation can be rewritten as
  \begin{subequations}
    \begin{alignat}{2}
      &\quad\max_{\phib_{u^k}} & ~& \sum_{i = 1}^d \E_{U^k}\left[\left\|\psib_{i, U^k}\right\|^2\right] - \sum_{j = 1}^k \E_{U_j}\left[\left\|\phib_{U_j}\right\|^2\right]    \label{eq:opt:uk:obj}\\
      &\text{subject to:}& & 1 \geq \E_{U_1}\left[\left\|\phib_{U_1}\right\|^2\right] \geq \dots \geq \E_{U_k}\left[\left\|\phib_{U_k}\right\|^2\right],%
      \label{eq:opt:uk:len}\\
      & & & \langle\phib_{u_i}, \phib_{u_j}\rangle = 0, \quad i \neq j, u_i \in \cU_i, u_j \in \cU_j,\label{eq:opt:uk:ortho}\\
      & & & \left\langle\phib_{u_j}, \phib^{(0)}\right\rangle = 0, \forall\, u_j \in \cU_j, j = 1, \dots, k, \label{eq:opt:uk:ortho:0}\\
      & & & \phib_{u^k} = \sum_{j = 1}^k \phib_{u_j}, \forall\, u^k \in \cU_1 \times \cdots \times \cU_{k}. \label{eq:opt:uk:decomp}
    \end{alignat}
    \label{eq:opt:uk}
\end{subequations}

To solve \eqref{eq:opt:uk}, first observe that we have $\psib_{i, U^k} = \dtm_i \phib_{U^k}$, where $\dtm_i$ is as defined in \eqref{eq:def:Bi}. Then, the objective function \eqref{eq:opt:uk:obj} can be rewritten as
\begin{subequations}
  \begin{align}
      \sum_{i = 1}^d\E_{U^k}\left[\left\|\psib_{i, U^k}\right\|^2\right] - \sum_{j = 1}^k \E_{U_j}\left[\left\|\phib_{U_j}\right\|^2\right]%
    &= \sum_{i = 1}^d\E_{U^k}\left[\left\|\dtm_i\phib_{U^k}\right\|^2\right] - \sum_{j = 1}^k \E_{U_j}\left[\left\|\phib_{U_j}\right\|^2\right]\label{eq:opt:uk:obj:1} \\
    &= \E_{U^k}\left[\left\|\dtm_0\phib_{U^k}\right\|^2\right] - \sum_{j = 1}^k \E_{U_j}\left[\left\|\phib_{U_j}\right\|^2\right]\label{eq:opt:uk:obj:2}\\
    &= \E_{U^k}\left[\left\|\sum_{j = 1}^k\dtm_0\phib_{U_j}\right\|^2\right] - \sum_{j = 1}^k \E_{U_j}\left[\left\|\phib_{U_j}\right\|^2\right]\label{eq:opt:uk:obj:3}\\
    &= \sum_{j = 1}^k \E_{U_j}\left[\left\|\dtm_0\phib_{U_j}\right\|^2\right] - \sum_{j = 1}^k \E_{U_j}\left[\left\|\phib_{U_j}\right\|^2\right]\label{eq:opt:uk:obj:4}\\
    &= \sum_{j = 1}^k \E_{U_j}\left[\left\|\dtm_0\phib_{U_j}\right\|^2 - \left\|\phib_{U_j}\right\|^2\right]\label{eq:opt:uk:obj:5}%
  \end{align}
\end{subequations}
where $\dtm_0$ is as defined in \eqref{eq:B0}. To obtain \eqref{eq:opt:uk:obj:4}, we have used the fact that
\begin{align*}
  \E_{U^k}\left[\phib_{U_i}^{\T}\dtm_0^{\T}\dtm_0\phib_{U_j}\right]
  &= \left(\E_{U_i}\left[\phib_{U_i}\right]\right)^{\T}\dtm_0^{\T}\dtm_0\left(\E_{U_j}\left[\phib_{U_j}\right]\right)%
    = 0, \quad i \neq j,
\end{align*}
where the first equality follows from the fact that $U_i$ and $U_j$ are independent, and the second equality follows from that $\E_{U_i}\left[\phib_{U_i}\right] = 0$.

To maximize \eqref{eq:opt:uk:obj:5}, %
$\phib_{u_i}$ should be aligned to the same direction for all $u_i \in \cU_i$. Otherwise, we can align all $\phib_{u_i}$ to 
\begin{align*}
  \argmax_{\phib_{u_i}\colon u_i \in \cU_i} \frac{\|\dtm_0\phib_{u_i}\|^2}{\|\phib_{u_i}\|^2}
\end{align*}
while keeping $\E_{U_i}[\|\phib_{U_i}\|^2]$ fixed, which yields a larger value for the objective function.

Therefore, for each $i$ and $u_i \in \cU_i$, we can write $\phib_{u_i}$ as
\begin{align}
  \phib_{u_i} = h_i(u_i)\phib_i,
  \label{eq:h_i}
\end{align}
where $h_i \colon \cU_i \mapsto \mathbb{R}$ and $\phib_i$ is a unit-norm vector. Then, we have
  $\E_{U_i}[\phib_{U_i}] = \E_{U_i}[h_i(U_i)] \phib_i = 0$
and
\begin{subequations}
  \begin{gather}
    \E_{U_i}[\|\phib_{U_i}\|^2] = \E_{U_i}[h^2_i(U_i)],\\
    \E_{U_i}[\|\dtm_0\phib_{U_i}\|^2] = \E_{U_i}[h^2_i(U_i)] \|\dtm_0\phib_i\|^2.
  \end{gather}
  \label{eq:E:ui}
\end{subequations}
Now, the constraint \eqref{eq:opt:uk:len} can be reduced to
\begin{align}
  1 \geq \E_{U_1}[h^2_1(U_1)] \geq \dots \geq \E_{U_k}[h^2_k(U_k)].
  \label{eq:ui:order}
\end{align}
In addition, it follows from \eqref{eq:E:ui} that
    $\E_{U_i}[\|\dtm_0\phib_{U_i}\|^2] -  \E_{U_i}[\|\phib_{U_i}\|^2]  %
    = \E_{U_i}[h^2_i(U_i)] \left[\|\dtm_0\phib_i\|^2 - 1\right]$.
As a result, to maximize \eqref{eq:opt:uk:obj:5}, $h_i$ should be chosen such that
\begin{align*}
  \E_{U_i}[h^2_i(U_i)] =
  \begin{cases}
    1& \text{if~}\|\dtm_0\phib_i\|^2 > 1,\\
    0& \text{otherwise.}\\
  \end{cases}
\end{align*}

Then, from \eqref{eq:ui:order} there exists $k_0 \in \{1, \dots, k\}$ such that
\begin{align}
  \E_{U_i}[h^2_i(U_i)] =
  \begin{cases}
    1& i = 1, \dots, k_0,\\
    0& i > k_0,\\
  \end{cases}
  \label{eq:e:h^2}
\end{align}
and the objective function \eqref{eq:opt:uk:obj:5} can be reduced to
\begin{align*}
  \sum_{j = 1}^k \E_{U_j}\left[\left\|\dtm_0\phib_{U_j}\right\|^2 - \left\|\phib_{U_j}\right\|^2\right]
  &= \sum_{j = 1}^{k_0} \|\dtm_0\phib_i\|^2 - k_0%
    = \bbfrob{\dtm_0\Phib_0}^2 - k_0,
\end{align*}
where we have defined  $\Phib_0 \defeq
\begin{bmatrix}
  \phib_1& \cdots & \phib_{k_0}
\end{bmatrix}
$.

As a result, the optimization problem \eqref{eq:opt:uk} is equivalent to
\begin{subequations}
  \begin{alignat}{2}
    &\quad\max_{\Phib_{0}} & ~ & \bbfrob{\dtm_0\Phib_0}^2 - k_0 \label{eq:opt:Phib:1}\\
    &\text{subject to:} & & \Phib_{0}^{\T}\Phib_{0} = \Ib_{k_0},   \label{eq:opt:Phib:2}\\
    & & & \Phib_{0}^{\T}\phib^{(0)} = \zerob_{k_0},   \label{eq:opt:Phib:3}
  \end{alignat}
  \label{eq:opt:Phib}
\end{subequations}
where $\Ib_{k_0}$ is the identity matrix of order $k_0$, and $\zerob_{k_0}$ is the zero vector in $\mathbb{R}^{k_0}$. In addition, since $\phib^{(0)}$ is the first right singular vector of $\dtm_0$, \eqref{eq:opt:Phib} can be further reduced to 
\begin{subequations}
  \begin{alignat}{2}
    &\quad\max_{\Phib_{0}} & ~ &\bbfrob{\dtmt_0\Phib_0}^2 - k_0 \label{eq:opt:Phib:new:1}\\
    &\text{subject to:} & &\Phib_{0}^{\T}\Phib_{0} = \Ib_{k_0},   \label{eq:opt:Phib:new:2}
  \end{alignat}
  \label{eq:opt:Phib:new}
\end{subequations}
where $\dtmt_0 \defeq \dtm_0 - \sqrt{\lambda^{(0)}}\psib^{0}\left(\phib^{(0)}\right)^{\T}$.

From Lemma \ref{lem:frob}, the optimal value of \eqref{eq:opt:Phib:new} is
\begin{align}
  \sum_{i = 1}^{k_0} \lambda^{(i)} - k_0 = \sum_{i = 1}^{k_0} \left[\lambda^{(i)} - 1\right].\label{eq:val:k0}
\end{align}
To maximize \eqref{eq:val:k0}, $k_0$ should be chosen as the largest $i$ such that $\lambda^{(i)} > 1$, i.e., $k_0 = \min\{k, k^*\}$. In addition, the optimal $\Phib_0$ is
  $\Phib_0 =
  \begin{bmatrix}
    \phib^{(1)} & \cdots & \phib^{(k_0)} 
  \end{bmatrix}
  \Qb$
for $\Qb \in \mathbb{R}^{k_0 \times k_0}$ with $\Qb^{\T}\Qb = \Ib_{k_0}$. Hence, we have
\begin{align*}
  \phib_{\ell} = \sum_{j = 1}^{k_0} q_{j\ell}\phib^{(j)}
\end{align*}

Following the same derivation as that for \eqref{eq:phi:1}, we can express $\phib^{(j)}$ as
\begin{align*}
  \frac{\phi^{(j)}(x^d)}{\sqrt{P_{X^d}(x^d)}} = \frac{1}{\sqrt{\lambda^{(j)}}}\sum_{i = 1}^d f_i^{(j)}(x_i).
\end{align*}
and thus
\begin{align*}
  \frac{\phi_{\ell}(x^d)}{\sqrt{P_{X^d}(x^d)}}
  &= \sum_{j = 1}^{k_0} q_{j\ell} \cdot \frac{\phi^{(j)}(x^d)}{\sqrt{P_{X^d}(x^d)}}%
    = \sum_{j = 1}^{k_0} \frac{q_{j\ell}}{\sqrt{\lambda^{(j)}}} \sum_{i = 1}^d f_i^{(j)}(x_i).
\end{align*}

Then, it follows from \eqref{eq:h_i} that
\begin{align} %
  \frac{\phi_{u_\ell}(x^d)}{\sqrt{P_{X^d}(x^d)}} = h_\ell(u_{\ell})\sum_{j = 1}^{k_0} \frac{q_{j\ell}}{\sqrt{\lambda^{(j)}}} \sum_{i = 1}^d f_i^{(j)}(x_i)
\end{align}
for $\ell = 1, \dots, k_0$. Moreover, from \eqref{eq:opt:uk:decomp}, we have
\begin{align*}
  {\phib_{u^{k}}}
  = \sum_{\ell = 1}^k \phib_{u_{\ell}}
  = \sum_{\ell = 1}^{k_0} \phib_{u_{\ell}},
\end{align*}
where the second equality follows from the consequence of \eqref{eq:h_i} and \eqref{eq:e:h^2} that $\phib_{u_{\ell}} = \zerob$ for $\ell > k_0$.

Therefore,
\begin{align*} %
  \frac{\phi_{u^{k}}(x^d)}{\sqrt{P_{X^d}(x^d)}}
  &= \sum_{\ell = 1}^{k_0}\frac{\phi_{u_\ell}(x^d)}{\sqrt{P_{X^d}(x^d)}}%
    = \sum_{\ell = 1}^{k_0} h_\ell(u_{\ell})\sum_{j = 1}^{k_0} \frac{q_{j\ell}}{\sqrt{\lambda^{(j)}}} \sum_{i = 1}^d f_i^{(j)}(x_i),
\end{align*}
which implies
\begin{align*} %
  P_{X^d|U^{k}} (x^d |u^{k}) %
&= P_{X^d} (x^d)  \left[ 1+ \epsilon   \frac{\phi_{u^{k}}(x^d)}{\sqrt{P_{X^d}(x^d)}} \right] + o(\epsilon)\\
&= P_{X^d} (x^d)  \left[ 1 + \epsilon \sum_{\ell = 1}^{k_0} h_\ell(u_{\ell})\sum_{j = 1}^{k_0} \frac{q_{j\ell}}{\sqrt{\lambda^{(j)}}} \sum_{i = 1}^d f_i^{(j)}(x_i)
  \right] + o(\epsilon)
\end{align*}
and 
\begin{align} %
&P_{X^dU^{k}} (x^d,u^{k}) = P_{X^d} (x^d) \left[\prod_{j = 1}^{k} P_{U_j}(u_j)\right] %
  \cdot\left[ 1+ \epsilon \sum_{\ell = 1}^{k_0} h_\ell(u_{\ell})\sum_{j = 1}^{k_0} \frac{q_{j\ell}}{\sqrt{\lambda^{(j)}}} \sum_{i = 1}^d f_i^{(j)}(x_i) \right] + o(\epsilon).\label{eq:P:xd:uk}
\end{align}

Finally, note that the exponential family ${\cal P}^{(\delta)}_{\exp, k}$, when $\delta$ is small, can be written as
\begin{align} 
    {\cal P}^{(\delta)}_{\exp, k}
  = &\left\{ P_{X^d} (x^d) \left[\prod_{j = 1}^{k} P_{U_j}(u_j) \right]\right.  %
    \cdot\left[ 1+ \sqrt{2\delta}  \sum_{\ell = 1}^{k_0} h_\ell(u_{\ell})\sum_{j = 1}^{k_0} \frac{q_{j\ell}}{\sqrt{\lambda^{(j)}}} \sum_{i = 1}^d f_i^{(j)}(x_i)\right]\notag\\
  &\qquad: \left.  h_\ell \in {\cH}_\ell, \Qb = [q_{ij}]_{k_0 \times k_0}, \Qb^{\T}\Qb = \Ib_{k_0} \vphantom{\sum_{j = 1}^{k_0}}\right\}.
    \label{eq:exp:k}
\end{align}
Since $\delta = \frac{1}{2} \epsilon^2$ the proof is completed by comparing~\eqref{eq:exp:k} and~\eqref{eq:P:xd:uk}. 

\section{Joint Correlation Maximization}
\label{sec:a:mace}
For functions $\uf_i\colon \cX_i \mapsto \mathbb{R}^k$, $i = 1, \dots, d$, we define $\Psib_i \in \mathbb{R}^{|\cX_i| \times k}$ such that the row vectors of $\Psib_i$ are $\sqrt{P_{X_i}(x_i)}\uf_i^{\T}(x_i)$, for all $x_i \in \cX_i$. Furthermore, we define the $m \times k$ matrix $\Psib$ as $\Psib =
\begin{bmatrix}
\Psib_1^{\T} & \cdots & \Psib_d
\end{bmatrix}.
$
Then the optimization problem \eqref{eq:MHGRk} can be rewritten as
\begin{subequations}
  \begin{alignat}{2} 
    &\max_{\Psib : \Psib \in \mathbb{R}^{m \times k}}  & ~ &\trop{\Psib^\T\dtm\Psib} \\ %
    &~\text{subject to:} \ & &\Psib_i^{\T}\vb_i = \zerob_k, \  \text{for all $i$}, \\
    & & &  \Psib^{\T}\Psib = \Ib_k,
  \end{alignat}
  \label{eq:MHGRk:equiv}
\end{subequations}
where $\zerob_k$ is the zero vector in $\mathbb{R}^k$, and $\Ib_k$ is the $k \times k$ identity matrix. To establish the equivalence of \eqref{eq:MHGRk} and \eqref{eq:MHGRk:equiv}, note that we have
\begin{align*}
  \Psib^{\T}\Psib
  = \sum_{i = 1}^d \Psib_i^{\T}\Psib_i
  &= \sum_{i = 1}^d \sum_{x_i \in \cX_i} P_{X_i}(x_i)\uf_i(x_i)\uf^{\T}_i(x_i)\\
  &= \sum_{i = 1}^d \E\left[\uf_i(X_i)\uf^{\T}_i(X_i) \right]%
    = \E\left[\sum_{i = 1}^d \uf_i(X_i)\uf^{\T}_i(X_i) \right]
\end{align*}
and
\begin{align*}
  \trop{\Psib^\T\dtm\Psib}
  &= \sum_{i = 1}^d\sum_{j = 1}^d \trop{\Psib_i^\T \dtm_{ij} \Psib_j}%
    = \sum_{i = 1}^d\sum_{j = 1}^d \trop{\E\left[\uf_i(X_i)\uf_j^{\T}(X_j)\right]}\\
  &= \sum_{i = 1}^d \trop{\E\left[\uf_i(X_i)\uf_i^{\T}(X_i)\right]}%
    + \sum_{i \neq j} \trop{\E\left[\uf_i(X_i)\uf_j^{\T}(X_j)\right]}\\
  &= \trop{\sum_{i = 1}^d \E\left[\uf_i(X_i)\uf_i^{\T}(X_i)\right]}%
    + \sum_{i \neq j} \trop{\E\left[\uf_i(X_i)\uf_j^{\T}(X_j)\right]}\\
  &= k + \E\left[\sum_{i \neq j} \uf^{\T}_i(X_i)\uf_j(X_j)\right].
\end{align*}

From Lemma \ref{lem:1}, for $k < m - d$, the solution of \eqref{eq:MHGRk:equiv} can be represented as
  $\Psib^* =
  \begin{bmatrix}
    \psib^{(1)} & \cdots & \psib^{(k)}
  \end{bmatrix}
  \Qb$, 
where $\Qb \in \mathbb{R}^{k \times k}$ is an orthogonal matrix. Therefore, the optimal solution of \eqref{eq:MHGRk} corresponds to $f_i^{(\ell)}$ with $i = 1, \dots, d$ and $\ell = 1, \dots, k$.%

\section{Common Bits Patterns Extraction}%
\label{sec:a:prop:bits}

First, we define $\ell_{\max}$ as the largest $\ell$ such that $w(\cJ_{\ell}) > 0$, i.e., $\ell_{\max} \defeq \max\{\ell\colon 0 \leq \ell \leq 2^r - 1, w(\cJ_{\ell}) > 0\}$. Then, $w(\cJ_\ell) > 0$ is equivalent to $\ell \leq \ell_{\max}$, and \eqref{eq:bits:lambda} can be equivalently expressed as
  \begin{align}
    \lambda^{(\ell)} = w(\cJ_{\ell}), \quad \ell \leq \ell_{\max},
    \label{eq:lambda:pos}
  \end{align}
  and
  \begin{align}
    \lambda^{(\ell)} = 0, \quad \ell > \ell_{\max}.
    \label{eq:lambda:0}
  \end{align}
    
  Note that \eqref{eq:fpsi} establishes a one-to-one correspondence between the functions $f_i^{(\ell)}$ ($i = 1, \dots, d$) and the vector $\psib^{(\ell)}$. With this correspondence, we use $\psibt^{(\ell)}$ to denote the vector corresponding to the functions $f_i^{(\ell)}$ as defined in \eqref{eq:f:bits}. Then the proof can be accomplished in two steps. First, we show that $\psibt^{(\ell)}$ ($\ell = 0, \dots, \ell_{\max}$) are $(\ell_{\max} + 1)$ orthogonal eigenvectors of $\dtm$ associated with eigenvalues $w(\cJ_{\ell})$ ($\ell = 0, \dots, \ell_{\max}$), i.e., for all $0 \leq \ell \leq \ell_{\max}$ and $0 \leq \ell' \leq \ell_{\max}$, the $\psibt^{(\ell)}$'s satisfy
  \begin{align}
    \dtm\psibt^{(\ell)} = w(\cJ_{\ell})\psibt^{(\ell)} \quad\text{and}\quad \left\langle\psibt^{(\ell)}, \psibt^{(\ell')}\right\rangle = \delta_{\ell\ell'},
    \label{eq:eig:psibt}
  \end{align}
  where $\delta_{\ell\ell'}$ is the Kronecker delta. Then, it suffices to verify that all other eigenvalues of $\dtm$ are zeros [cf. \eqref{eq:lambda:0}]. 

  To begin, we equivalently express \eqref{eq:eig:psibt} using $f_i^{(\ell)}$ as
  \begin{align}
    \sum_{j = 1}^{d}\E\left[f^{(\ell)}_j(X_j)\middle|X_i\right] = w(\cJ_{\ell}) f^{(\ell)}_i(X_i), \quad 1 \leq i \leq d,
    \label{eq:eig:f:1}
  \end{align}
  and 
  \begin{align}
    \sum_{i = 1}^d\E\left[f^{(\ell)}_i(X_i)f^{(\ell')}_i(X_i)\right] = \delta_{\ell\ell'}.
    \label{eq:eig:f:2}
  \end{align}
  Then, since we have [cf. \eqref{eq:def:w}]
  \begin{align*}
    \sum_{i = 1}^d \1_{\{\cJ_\ell \subset \cI_i\}} = \sum_{j = 1}^d \1_{\{\cJ_\ell \subset \cI_j\}} = w(\cJ_{\ell}),
  \end{align*}
  it suffices to show that
  \begin{align}
    \E\left[f^{(\ell)}_j(X_j)\middle|X_i\right] = f^{(\ell)}_i(X_i) \cdot \1_{\{\cJ_\ell \subset \cI_j\}}, \quad 1 \leq i, j \leq d,
    \label{eq:eig:f:1:new}
  \end{align}
  and
  \begin{align}
    \E\left[f^{(\ell)}_i(X_i)f^{(\ell')}_i(X_i)\right] = \frac{\1_{\{\cJ_{\ell} \subset \cI_i\}}}{w(\cJ_{\ell})} \cdot \delta_{\ell\ell'},\quad 1 \leq i \leq d.
    \label{eq:eig:f:2:new}
  \end{align}

  To obtain \eqref{eq:eig:f:1:new}, note that if $\cJ_{\ell} \not\subset \cI_j$, it follows from \eqref{eq:f:bits} that $f_j(X_j) = 0$, and thus \eqref{eq:eig:f:1:new} holds. Otherwise, we have $\cJ_{\ell} \subset \cI_j$ and 
  \begin{align}
    \E\left[f^{(\ell)}_j(X_j)\middle|X_i\right]
    &=  \frac{1}{\sqrt{w(\cJ_\ell)}}\E\left[\prod_{s \in \cJ_{\ell}} b_s\middle|X_i\right].
    \label{eq:f:ce:new}
  \end{align}
  Since $X_i = b_{\cI_i}$ is composed of all the $b_s$'s with indices in $\cI_i$, %
  we have
  \begin{align*}
    \E\left[\prod_{s \in \cJ_{\ell}} b_s\middle|X_i\right] =
    \begin{cases}
      \displaystyle\prod_{s \in \cJ_{\ell}} b_s,& \text{if}~ \cJ_{\ell} \subset \cI_i, \\
      \displaystyle 0,& \text{otherwise}.
    \end{cases}
  \end{align*}
  Therefore, we obtain
  \begin{align*}
    \E\left[f^{(\ell)}_j(X_j)\middle|X_i\right]
     =\frac{1}{\sqrt{w(\cJ_\ell)}}\E\left[\prod_{s \in \cJ_{\ell}} b_s\middle|X_i\right]%
    &=
    \begin{cases}
      \displaystyle\frac{1}{\sqrt{w(\cJ_\ell)}}\prod_{s \in \cJ_{\ell}} b_s,& \text{if}~ \cJ_{\ell} \subset \cI_i, \\
      \displaystyle 0,& \text{otherwise}
    \end{cases}\\
    &= f^{(\ell)}_i(X_i)%
      = f^{(\ell)}_i(X_i) \cdot \1_{\{\cJ_{\ell} \subset \cI_j\}}.
  \end{align*}

  Likewise, \eqref{eq:eig:f:2:new} follows immediately from \eqref{eq:f:bits} when $\ell = \ell'$, and it suffices to consider the case $\ell \neq \ell'$ and prove that
  \begin{align}
    \E\left[f^{(\ell)}_i(X_i)f^{(\ell')}_i(X_i)\right] = 0.
    \label{eq:f:ell:ell'}
  \end{align}
  Indeed, when $\cJ_{\ell} \not\subset \cI_i$ or $\cJ_{\ell'} \not\subset \cI_i$, \eqref{eq:f:ell:ell'} is trivially true. Otherwise, we have $\cJ_{\ell} \subset \cI_i$ and $\cJ_{\ell'} \subset \cI_i$, and it follows from \eqref{eq:f:bits} that
  \begin{align*}
    f^{(\ell)}_i(X_i)f^{(\ell')}_i(X_i) = \frac{1}{\sqrt{w(\cJ_{\ell})w(\cJ_{\ell'})}}\prod_{j \in \cJ_{\ell} \symdif \cJ_{\ell'}} b_j,
  \end{align*}
  where ``$\symdif$'' denotes the symmetric difference of two sets, i.e., $A \symdif B = \left(A\setminus B\right)\cup \left(B\setminus A\right)$. Therefore, we have
  \begin{align*}
    \E\left[f^{(\ell)}_i(X_i)f^{(\ell')}_i(X_i)\right]
    &= \frac{1}{\sqrt{w(\cJ_{\ell})w(\cJ_{\ell'})}}\prod_{j \in \cJ_{\ell} \symdif \cJ_{\ell'}} \E[b_j]%
      = 0,
  \end{align*}
  where we have used the fact that the set $(\cJ_{\ell} \symdif \cJ_{\ell'})$ is non-empty, since $\cJ_{\ell} \neq \cJ_{\ell'}$.
  
  Finally, to prove \eqref{eq:lambda:0}, i.e., eigenvalues other that $w(\cJ_{\ell})$ ($\ell = 0, \dots, \ell_{\max}$) are all zeros, note that
  \begin{align*}
    \sum_{\ell = 0}^{\ell_{\max}} w(\cJ_{\ell})
    = \sum_{\ell = 0}^{2^{r} - 1} w(\cJ_{\ell})
    &
      = \sum_{\cI \subset [r]} w(\cI)\\
    &
      = \sum_{\cI \subset [r]}\sum_{i = 1}^d \1_{\{\cI \subset \cI_i\}}\\
    &= \sum_{i = 1}^d \sum_{\cI \subset [r]} \1_{\{\cI \subset \cI_i\}} \\
    &= \sum_{i = 1}^d 2^{|\cI_i|}%
      = \sum_{i = 1}^d |\cX_i| = m.
  \end{align*}
  On the other hand, we have the sum of all eigenvalues
  \begin{align*}
    \sum_{\ell = 0}^{m-1} \lambda^{(\ell)} = \trop{\dtm} = m.
  \end{align*}
  From Lemma \ref{lem:1}, all eigenvalues of $\dtm$ are non-negative, which implies \eqref{eq:lambda:0}.

  \section{Proof of Proposition \ref{prop:MH-score}}
  \label{sec:a:MH}
To begin, we write the matrix $\dtmt$ of \eqref{eq:dtmt} as a block matrix
\begin{align} \label{eq:Bt}
  \dtmt = 
  \left[
  \begin{array}{cccc}
    \dtmt_{11} & \dtmt_{12} & \cdots & \dtmt_{1d} \\
    \dtmt_{21} & \dtmt_{22} & \cdots & \dtmt_{2d} \\
    \vdots & \vdots & \ddots & \vdots \\
    \dtmt_{d1} & \dtmt_{d2}  & \cdots & \dtmt_{dd}
  \end{array} 
                                        \right],
\end{align}
where each block $\dtmt_{ij}$ is an $( | \cX_i | \times | \cX_j | )$-dimensional matrix. Then, we can rewrite $\bfrob{\dtmt - \Psib\Psib^{\T}}^2$ as
\begin{align}
    \bfrob{\dtmt - \Psib\Psib^{\T}}^2%
  &= \sum_{i = 1}^d\sum_{j = 1}^d\bfrob{\dtmt_{ij} - \Psib_i\Psib_j^{\T}}^2\notag\\
  &= \sum_{i = 1}^d\sum_{j = 1}^d \left[\bfrob{\dtmt_{ij}}^2 - 2\trop{\Psib_i^\T \dtmt_{ij}\Psib_j} + \bfrob{\Psib_i\Psib_j}^2\right]\notag\\
  &= \sum_{i = 1}^d\sum_{j = 1}^d \left[\bfrob{\dtmt_{ij}}^2 - 2H\left(\uf_i(X_i), \uf_j(X_j)\right)\right]\notag\\
  &= \bfrob{\dtmt}^2 - 2H\left(\uf_1(X_1), \dots, \uf_d(X_d)\right),\label{eq:1}
\end{align}
where we have used the fact that
\begin{align*}
    \trop{\Psib_i^\T \dtmt_{ij}\Psib_j} - \frac{1}{2} \bfrob{\Psib_i\Psib_j}^2%
  &= \E\left[\uf^{\T}_i(X_i)\uf_j(X_j)\right] - \left(\E\left[\uf_i(X_i)\right]\right)^\T \E\left[\uf_j(X_j)\right]\\
  &\quad
    - \frac12 \trop{\E\left[\uf_i(X_i)\uf^\T_i(X_i)\right]\E\left[\uf_j(X_i)\uf^\T_j(X_j)\right]}\\
  &= H\left(\uf_i(X_i), \uf_j(X_j)\right).
\end{align*}

\if\secsc1
\section{Proof of Theorem \ref{thm:exponent}}
\label{sec:a:thm:exponent}
  First, we show in the following that \eqref{eq:opt} is equivalent to the optimization problem without the equality constraint, i.e,
  \begin{align}
    &\quad\max\quad ~ \sum_{i = 1}^k\sum_{j = k + 1}^{m} \frac{\left[\left(\psib^{(i)}\right)^{\T}
      \dtmdmat \psib^{(j)}\right]^2}{\lambda^{(i)} - \lambda^{(j)}}\notag\\
    &\text{subject to:} ~ \|\xib_{X^d}\|^2 \leq 1.
      \label{eq:opt:equiv}
  \end{align}
  To see this, suppose that $\xib_{X^d}^*$ is the optimal solution of \eqref{eq:opt:equiv} with $c \defeq \sum_{x^d} \sqrt{P_{X^d}(x^d)} \xi_{X^d}^*$. Let $z(x^d) = \xi_{X^d}^* (x^d) - c\sqrt{P_{X^d}(x^d)}$, then we have
  \begin{align*}
    1 = \sum_{x^d}\left[\xi^*_{X^d}(x^d)\right]^2 = \sum_{x^d}z^2(x^d) + c^2,
  \end{align*}
  which implies $|c| \leq 1$. 

  If $|c| = 1$, we have $\xi^*_{X^d}(x^d) = \pm \sqrt{P_{X^d}(x^d)}$, and it follows from \eqref{eq:Xi}--\eqref{eq:Xi:ij} that $\Xib = \mp d \psib^{(0)}\left(\psib^{(0)}\right)^{\T}$, which implies that the objective function, and thus contradicts the assumption that $\xib^*_{X^d}$ is optimal. On the other hand, if $0 < |c| < 1$, then we can construct the vector $\xib_{X^d}'$ with entries $\xi_{X^d}(x^d) = z(x^d)/\sqrt{1 - c^2}$. Under this construction, $\|\xib_{X^d}'\|^2 = 1$, and the objective function in \eqref{eq:opt} for $\xib_{X^d}'$ is $1/(1 - c^2)$ times the corresponding value for $\xib_{X^d}^*$. This again contradicts the optimality of $\xib_{X^d}^*$. Therefore, we have $c = 0$, and the optimization problem \eqref{eq:opt:equiv} has the same solution as \eqref{eq:opt}. 

  Next, to reduce the objective function of \eqref{eq:opt:equiv}, note that $\Xib$ can be represented as a linear function of all $\xi_{X_i}$ ($i = 1, \dots, d$) and $\xi_{X_iX_j}$ ($i \neq j$), and thus is linear to the vector $\xib_{X^d}$. Therefore, the objective function of \eqref{eq:opt:equiv} can be expressed as a quadratic function of $\xib_{X^d}$. For this purpose, we first generalize the definition \eqref{eq:xi:xixj} to the case $i = j$ by defining
  \begin{align}
    \xi_{X_iX_i}(x_i, x_i') \defeq \xi_{X_i}(x_i)\delta_{x_ix_i'}
    \label{eq:xi:XiXi:def}    
  \end{align}
 for all $i = 1, \dots, d$ and $x_i, x_i' \in \cX_i$. In addition, for all $(i, j)$, we define $\xib_{X_iX_j}$ as an $(|\cX_i|\cdot|\cX_j|)$-dimensional vector whose $[(x_j' - 1)|\cX_i| + x_i]$-th entry is $\xi_{X_iX_j}(x_i, x_j')$. Then, it follows from \eqref{eq:Xi:ii} and \eqref{eq:Xi:ij} that\footnote{The vectorization operation $\vecop(\cdot)$ stacks all columns of a matrix into a vector. Specifically, for $\Wb = [w_{ij}]_{p\times q}$, $\vecop(\Wb)$ is a $(pq)$-dimensional column vector with $[p(j-1) + i]$-th entry being $w_{ij}$.}
  \begin{align}
    \vecop(\Xib_{ij}) = \Lb_{ij}\xib_{X_iX_j},\quad\text{for~all~}i, j,
    \label{eq:Xib:ij:xib:pair}
  \end{align}
  with $\Lb_{ij}$ as defined in \eqref{eq:Lb:ij}.

  Furthermore, from \eqref{eq:xi:x^d} and \eqref{eq:xi:xixj}--\eqref{eq:xi}, for all $(i, j)$ we have
  \begin{align}
    \xib_{X_iX_j} = \Cb_{ij}\xib_{X^d},
    \label{eq:xib:2-d}    
  \end{align}
 where $\Cb_{ij}$ is an $(|\cX_i|\cdot|\cX_j|) \times (|\cX_1|\cdot|\cX_2|\cdots|\cX_d|)$ matrix with entries
  \begin{align}
    &C_{ij}\left((x'_i, \hat{x}_j); x^d\right) \notag\\
    &=  
      \begin{cases}
        \displaystyle\frac{\sqrt{P_{X^d}(x^d)}}{\sqrt{P_{X_iX_j}(x'_i, \hat{x}_j)}}\delta_{x_ix'_i}\delta_{x_j\hat{x}_j},&\text{if}~P_{X_iX_j}(x'_i, \hat{x}_j) > 0,\\
        0,&\text{otherwise}.
      \end{cases}
            \label{eq:Cij:def}
  \end{align}
 
  Moreover, we define an $m^2$-dimensional vector $\zetab$ by stacking all $\vecop(\Xib_{ij})$'s $(1 \leq i \leq d, 1 \leq j \leq d)$ to a column vector, i.e., 
  \begin{align}
    \zetab \defeq
    \begin{bmatrix}
      \vecop(\Xib_{11})\\
      \vecop(\Xib_{21})\\
      \vdots\\
      \vecop(\Xib_{dd})
    \end{bmatrix}.
    \label{eq:def:zetab}
  \end{align}
  Then, it follows from \eqref{eq:Xib:ij:xib:pair}--\eqref{eq:def:zetab} that  
  \begin{align}
    \zetab = \Jb_0 \xib_{X^d},
    \label{eq:zetab:xib}
  \end{align}
  where we have defined $\Jb_0 \in \mathbb{R}^{m^2 \times (|\cX_1|\cdot|\cX_2|\cdots|\cX_d|)}$ as
  \begin{align}
    \Jb_0 \defeq
    \begin{bmatrix}
      \Lb_{11}\Cb_{11}\\
      \Lb_{21}\Cb_{21}\\
      \vdots\\
      \Lb_{dd}\Cb_{dd}
    \end{bmatrix}.
    \label{eq:def:Jb0}
  \end{align}
  Note that for all pairs of $(i, j)$ and $(s, t)$, we have $\Cb_{ij}\Cb_{st}^{\T} = \dtm_{ij;st}$ with $\dtm_{ij;st}$ as defined in \eqref{eq:dtm:4:entry}. Therefore, we obtain
  \begin{align}
    \Jb_0\Jb_0^{\T} = \Jb
    \label{eq:J:Jt}
  \end{align}
  with $\Jb$ as defined in \eqref{eq:Jb:def}.
  
  Then
  \begin{align}
    \left(\psib^{(i)}\right)^{\T} \dtmdmat \psib^{(j)}
    &= \sum_{s = 1}^d \sum_{t = 1}^d  \left(\psib_s^{(i)}\right)^{\T} \dtmdmat_{st} \psib_t^{(j)}\notag\\
    &= \sum_{s = 1}^d \sum_{t = 1}^d  \trop{\psib_t^{(j)}\left(\psib_s^{(i)}\right)^{\T} \dtmdmat_{st} }\notag\\
    &= \sum_{s = 1}^d \sum_{t = 1}^d  \vecop^{\T}(\dtmdmat_{st}) \vecop\left(\psib_s^{(i)}\left(\psib_t^{(j)}\right)^{\T}\right)\notag\\
    &= \sum_{s = 1}^d \sum_{t = 1}^d  \vecop^{\T}(\dtmdmat_{st}) \left(\psib_t^{(j)}\otimes\psib_s^{(i)}\right)\notag\\
    &=   \zetab^{\T} \left(\psib^{(j)}\circ\psib^{(i)}\right),
    \label{eq:vec:Xib:0}
  \end{align}
  where the penultimate equality follows from the fact that $\vecop(\vb\ub^{\T}) = \ub \otimes \vb$, and the last equality follows from \eqref{eq:tracy}.

 %
  %
  %
  %
  %
  %

%
  
  Therefore, we can rewrite the objective function of \eqref{eq:opt:equiv} as
  \begin{subequations}
    \begin{align}
      &\sum_{i = 1}^k\sum_{j = k + 1}^{m} \frac{\left[\left(\psib^{(i)}\right)^{\T} \dtmdmat \psib^{(j)}\right]^2}{\lambda^{(i)} - \lambda^{(j)}}\notag\\
      &= \sum_{i = 1}^k\sum_{j = k + 1}^{m} \frac{\left[\zetab^{\T} \left(\psib^{(j)}\circ\psib^{(i)}\right)\right]^2}{\lambda^{(i)} - \lambda^{(j)}}\label{eq:opt:equiv:obj:1}\\
      &= \zetab^{\T}\left[\sum_{i = 1}^k\sum_{j = k + 1}^{m} \frac{\left(\psib^{(j)}\circ\psib^{(i)}\right)\left(\psib^{(j)}\circ\psib^{(i)}\right)^{\T}}{\lambda^{(i)} - \lambda^{(j)}}\right]\zetab\label{eq:opt:equiv:obj:2}\\
      &= \zetab^{\T}\Gb_k\zetab\label{eq:opt:equiv:obj:3}\\
      &= \xib_{X^d}^{\T} \Jb_0^{\T} \Gb_k \Jb_0 \xib_{X^d},\label{eq:opt:equiv:obj:4}
    \end{align}
  \end{subequations}
  where \eqref{eq:opt:equiv:obj:1} follows from \eqref{eq:vec:Xib:0}, \eqref{eq:opt:equiv:obj:3} follows from the definition \eqref{eq:gb:k} of $\Gb_k$, and \eqref{eq:opt:equiv:obj:4} follows from \eqref{eq:zetab:xib}.

  As a result, the optimal value of the optimization problem \eqref{eq:opt:equiv} is $\spectral{\Jb_0^{\T} \Gb_k \Jb_0}$, i.e., the largest singular value of the matrix $\Jb_0 ^{\T} \Gb_k \Jb_0$.

  Finally, since $\spectral{\Ab^{\T}\Ab} = \spectral{\Ab\Ab^{\T}}$, we have
  \begin{align*}
    \bbspectral{\Jb_0^{\T} \Gb_k \Jb_0}
    &= \bbspectral{\left(\Gb_k^{\frac12}\Jb_0\right)^{\T} \Gb_k^{\frac12} \Jb_0}\\
    &= \bbspectral{ \Gb_k^{\frac12} \Jb_0\left(\Gb_k^{\frac12} \Jb_0\right)^{\T}}\\
    &= \bbspectral{\Gb_k^{\frac12} \Jb \Gb_k^{\frac12}},
  \end{align*}
  where we have used \eqref{eq:J:Jt} and the fact that $\Gb_k^{\frac12} = \left(\Gb_k^{\frac12}\right)^\T$.

\fi

\bibliographystyle{IEEEtran}
\bibliography{ref}

\begin{thebibliography}{10}
\providecommand{\url}[1]{#1}
\csname url@samestyle\endcsname
\providecommand{\newblock}{\relax}
\providecommand{\bibinfo}[2]{#2}
\providecommand{\BIBentrySTDinterwordspacing}{\spaceskip=0pt\relax}
\providecommand{\BIBentryALTinterwordstretchfactor}{4}
\providecommand{\BIBentryALTinterwordspacing}{\spaceskip=\fontdimen2\font plus
\BIBentryALTinterwordstretchfactor\fontdimen3\font minus
  \fontdimen4\font\relax}
\providecommand{\BIBforeignlanguage}[2]{{%
\expandafter\ifx\csname l@#1\endcsname\relax
\typeout{** WARNING: IEEEtran.bst: No hyphenation pattern has been}%
\typeout{** loaded for the language `#1'. Using the pattern for}%
\typeout{** the default language instead.}%
\else
\language=\csname l@#1\endcsname
\fi
#2}}
\providecommand{\BIBdecl}{\relax}
\BIBdecl

\bibitem{MNIST}
Y.~LeCun, C.~Cortes, and C.~J. Burges, ``Mnist handwritten digit database,''
  \emph{AT\&T Labs [Online]. Available: http://yann.lecun.com/exdb/mnist},
  vol.~2, 2010.

\bibitem{Jolliffe}
I.~T. Jolliffe, ``Principal component analysis. 1986,'' \emph{Spring-verlag,
  New York}, 1986.

\bibitem{Huang:2014}
Q.~Huang, F.~Wang, and L.~Guibas, ``Functional map networks for analyzing and
  exploring large shape collections,'' \emph{ACM Trans. Graph.}, vol.~33,
  no.~4, pp. 36:1--36:11, Jul. 2014.

\bibitem{Wang:2013}
F.~Wang, Q.~Huang, and L.~J. Guibas, ``Image co-segmentation via consistent
  functional maps,'' in \emph{Proceedings of the 2013 IEEE International
  Conference on Computer Vision}, Washington, DC, USA, 2013, pp. 849--856.

\bibitem{Wang:2014}
F.~Wang, Q.~Huang, M.~Ovsjanikov, and L.~J. Guibas, ``Unsupervised multi-class
  joint image segmentation,'' in \emph{Proceedings of the 2014 IEEE Conference
  on Computer Vision and Pattern Recognition}, Washington, DC, USA, 2014, pp.
  3142--3149.

\bibitem{watanabe}
S.~Watanabe, ``Information theoretical analysis of multivariate correlation,''
  \emph{IBM Journal of research and development}, vol.~4, no.~1, pp. 66--82,
  1960.

\bibitem{Breiman}
L.~Breiman and J.~H. Friedman, ``Estimating optimal transformations for
  multiple regression and correlation,'' \emph{Journal of the American
  statistical Association}, vol.~80, no. 391, pp. 580--598, 1985.

\bibitem{renyi}
A.~R{\'e}nyi, ``On measures of dependence,'' \emph{Acta mathematica hungarica},
  vol.~10, no. 3-4, pp. 441--451, 1959.

\bibitem{op2016caching}
G.~J. Op't~Veld and M.~C. Gastpar, ``Caching gaussians: Minimizing total
  correlation on the {Gray-Wyner} network,'' in \emph{2016 Annual Conference on
  Information Science and Systems (CISS)}.\hskip 1em plus 0.5em minus
  0.4em\relax IEEE, 2016, pp. 478--483.

\bibitem{op2016total}
------, ``Total correlation of {Gaussian} vector sources on the {Gray-Wyner}
  network,'' in \emph{2016 54th Annual Allerton Conference on Communication,
  Control, and Computing (Allerton)}.\hskip 1em plus 0.5em minus 0.4em\relax
  IEEE, 2016, pp. 385--392.

\bibitem{ttc}
G.~Ver~Steeg and A.~Galstyan, ``Discovering structure in high-dimensional data
  through correlation explanation,'' in \emph{Advances in Neural Information
  Processing Systems}, 2014, pp. 577--585.

\bibitem{golub2013matrix}
G.~H. Golub and C.~Van~Loan, ``Matrix computations 4th edition,'' 2013.

\bibitem{SL17}
S.-L. Huang, A.~Makur, L.~Zheng, and G.~W. Wornell, ``On universal features for
  high-dimensional learning and inference,'' \emph{submitted to IEEE Trans.
  Inform. Theory}, 2019. Preprint.

\bibitem{eckart1936approximation}
C.~Eckart and G.~Young, ``The approximation of one matrix by another of lower
  rank,'' \emph{Psychometrika}, vol.~1, no.~3, pp. 211--218, 1936.

\bibitem{huang2019information}
S.-L. Huang, X.~Xu, L.~Zheng, and G.~W. Wornell, ``An information theoretic
  interpretation to deep neural networks,'' \emph{arXiv preprint
  arXiv:1905.06600}, 2019.

\bibitem{SL15}
A.~Makur, F.~Kozynski, S.-L. Huang, and L.~Zheng, ``An efficient algorithm for
  information decomposition and extraction,'' in \emph{Communication, Control,
  and Computing (Allerton), 2015 53rd Annual Allerton Conference on}.\hskip 1em
  plus 0.5em minus 0.4em\relax IEEE, 2015, pp. 972--979.

\bibitem{FT15}
M.~Razaviyayn, F.~Farnia, and D.~Tse, ``Discrete {R}{\'e}nyi classifiers,'' in
  \emph{Advances in Neural Information Processing Systems}, 2015, pp.
  3276--3284.

\bibitem{Ali}
S.~Feizi, A.~Makhdoumi, K.~Duffy, M.~Medard, and M.~Kellis, ``Network maximal
  correlation,'' \emph{arXiv preprint arXiv:1606.04789}, 2016.

\bibitem{SD}
S.~Feizi and D.~Tse, ``Maximally correlated principle component analysis,''
  \emph{arXiv preprint arXiv:1702.05471}, 2017.

\bibitem{SVM}
C.~Cortes and V.~Vapnik, ``Support-vector networks,'' \emph{Machine learning},
  vol.~20, no.~3, pp. 273--297, 1995.

\bibitem{NN}
Y.~B. Y.~LeCun, L.~Bottou and P.~Haffner, ``Gradient-based learning applied to
  document recognition,'' \emph{Proceedings of the IEEE}, vol.~86, no.~11, pp.
  2278 -- 2324, Nov. 1998.

\bibitem{SL12}
S.-L. Huang and L.~Zheng, ``Linear information coupling problems,'' in
  \emph{Information Theory Proceedings (ISIT), 2012 IEEE International
  Symposium on}.\hskip 1em plus 0.5em minus 0.4em\relax IEEE, 2012, pp.
  1029--1033.

\bibitem{horn2012matrix}
R.~A. Horn and C.~R. Johnson, \emph{Matrix analysis}.\hskip 1em plus 0.5em
  minus 0.4em\relax Cambridge university press, 2012.

\end{thebibliography}


\begin{thebibliography}{10}

\bibitem{watanabe}
S. Watanabe, ``Information Theoretical Analysis of Multivariate Correlation,"
{\em IBM Journal of Research and Development}, vol. 4, pp. 66--82, Jan. 1960.

\bibitem{SL12}
S.-L. Huang and L. Zheng, ``Linear Information Coupling Problems," in 
{\em Proc. Int. Symp. Inform. Theory}, July 2012.
\bibitem{SL17}
S.-L. Huang, A. Makur, L. Zheng, and G. W. Wornell, ``Universal feature for universal feature selection in high-dimensional inference: An information-theoretic framework," preprint, 2017.
\bibitem{SL15}
A. Makur, F. Kozynski, S.-L. Huang, and L. Zheng, ``An efficient algo- rithm for information decomposition and extraction," in 
{\em Proc. Allerton Conf. Commun., Contr., Computing}, (Monticello, IL), Oct. 2015.

\bibitem{renyi}
A. R\'enyi, ``On measures of dependence," Acta Mathematica Academiae Scientiarum Hungarica, vol. 10, no. 3-4, pp. 441--451, 1959.

\bibitem{ttc}
G. V. Steeg, A. Galstyan, ``Discovering Structure in High-Dimensional Data Through Correlation Explanation,"
{\em Proceedings of the 27th International Conference on Neural Information Processing Systems,} pp. 577--585, Dec. 2014 

\bibitem{Nair13}
V.~Anantharam, A.~Gohari, S.~Kamath, C.~Nair, ``On Maximal Correlation, Hypercontractivity, and the Data Processing Inequality studied by Erkip and Cover", arXiv: http://arxiv.org/abs/1304.6133


\bibitem{Breiman}
L. Breiman, J. H. Friedman, ``Estimating Optimal Transformations for Multiple Regression and Correlation,"
\newblock {\em Journal of the American Statistical Association,} vol. 80, issue 391, 1985


\bibitem{Jolliffe}
I. T. Jolliffe,
\newblock {\em Principal Component Analysis}, Springer Verlag, 1986.


\bibitem{Ali}
S. Feizi, A. Makhdoumi, K. Duffy, M. Kellis, M. M\'edard, ``Network Maximal Correlation,"
{\em Computer Science and Artificial Intelligence Laboratory Technical Report}, MIT, Cambridge.

\bibitem{SD}
S. Feizi, D. Tse, ``Maximally Correlated Principal Component Analysis", arXiv: https://arxiv.org/abs/1702.05471

\bibitem{MNIST}
MNIST Handwritten Digit Database,~from~\url{http://yann.lecun.com/exdb/mnist/}

\bibitem{SVM}
C. Cortes and V. Vapnik, ``Support-Vector Networks," {\em Machine Learning}, vol. 20, pp.273 -- 297, Sept. 1995.

%
%




%
%

\end{thebibliography}

\end{document}